\DeclareMathOperator{\R}{\mathbb{R}}
\DeclareMathOperator{\vep}{\varepsilon}
\DeclareMathOperator{\Cov}{Cov}
\DeclareMathOperator*{\argmax}{arg\,max}
\DeclareMathOperator{\trace}{tr}
\DeclareMathOperator{\diag}{diag}
\DeclareMathOperator{\chol}{chol}
\long\def\symbolfootnote[#1]#2{\begingroup%
	\def\thefootnote{\fnsymbol{footnote}}\footnote[#1]{#2}\endgroup}
\def\qed{\relax\ifmmode\hskip2em \Box\else\unskip\nobreak\hskip1em $\Box$\fi}
\newtheorem{thm}{Theorem}[section]
\newtheorem{lemma}[thm]{Lemma}
\newtheorem{prop}[thm]{Proposition}
\newtheorem{assumption}[thm]{Assumption}
\newtheorem{cor}[thm]{Corollary}
\newtheorem{defn}[thm]{Definition}
\newtheorem{example}{Example}
\newtheorem{rem}{Remark}
\newcommand{\proper}{\mathsf}
\newcommand{\pE}{\proper{E}}
\newcommand{\pV}{\proper{V}}
\newcommand{\pN}{\proper{N}}
\newcommand{\mv}[1]{{\boldsymbol{\mathrm{#1}}}}
\newcommand{\trsp}{\ensuremath{\top}}
\newcommand{\md}{\ensuremath{\,\mathrm{d}}}
\newcommand{\scal}[2]{\left\langle {#1},\,{#2} \right\rangle}
\definecolor{darkgreen}{rgb}{0, 0.8, 0.7}
\algnewcommand{\ffor}[1]{\State\algorithmicfor\ #1\ \algorithmicdo}
\algnewcommand{\Endffor}{\unskip\ \algorithmicend\ \algorithmicfor}
\newcommand{\materncorr}[3]{M\left({#1}\mid{#2},\,{#3}\right)}
\newcommand{\E}{{\bf \mathsf{E}}}
\DeclareMathOperator{\gnoise}{\dot{\mathcal{W}}}
\DeclareMathOperator{\noise}{\dot{\mathcal{M}}}
\newcommand{\proc}{x}
\newcommand{\weight}{w}
\newcommand{\var}{v}
\newcommand{\data}{y}
\newcommand{\BM}{D}
\newcommand{\BMi}{R}
\newcommand{\nignu}{\eta}
\newtheorem{Theorem}{Theorem}
\title{\bf{Multivariate type G Mat\'ern stochastic partial differential equation random fields}}
\author{\scshape{David Bolin$^1$ and Jonas Wallin$^2$}\\
	{\small $^1$King Abdullah University of Science and Technology and University of Gothenburg}\\
	{\small $^2$Lund University}}
\date{}
\begin{document}
	\normalem
	\maketitle
	\begin{center}\begin{minipage}{0.9\textwidth}
			\noindent{\bf Abstract:} 
For many applications with multivariate data, 
random field models  
capturing departures 
from Gaussianity within realisations  
are appropriate. 
For this reason, 
we formulate a new class of multivariate non-Gaussian models 
based on systems of stochastic partial differential equations 
with additive type G noise 
whose marginal covariance functions 
are of Mat\'ern type.
We consider four increasingly flexible constructions of the noise, where the first two are similar to existing copula-based models. In contrast to these, the latter two constructions can model non-Gaussian spatial data without replicates. 
Computationally efficient methods for likelihood-based parameter estimation and probabilistic prediction are proposed, and the flexibility of the suggested models is illustrated by numerical examples and two statistical applications.

			\vspace{0.3cm}\noindent{\bf Key words:}
			Mat\'{e}rn covariances;	
			Multivariate random fields;
			Non-Gaussian models;
			Spatial statistics; 
			Stochastic partial differential equations.
		\end{minipage}
	\end{center}

\section{Introduction}
Motivated by an increasing number of spatial data sets with multiple measured variables, such as different climate variables from weather stations, various pollutants monitored in urban areas, or climate model outputs, the literature on models for multivariate random fields is growing rapidly. The majority of research in this area has focused on Gaussian random fields, and how to construct valid multivariate cross-covariance functions. 

Of particular interest has been multivariate extensions of the Mat\'{e}rn correlation function \citep{matern60}, 
$\materncorr{\mv{h}}{\kappa}{\nu} = 2^{1-\nu}\Gamma(\nu)^{-1}\left(\kappa \|\mv{h}\|\right)^{\nu}K_{\nu}\left(\kappa \|\mv{h}\|\right)$,  $\mv{h} \in \R^d$. Here $K_{\nu}$ is a modified Bessel function of the second kind and the positive parameters $\kappa$ and $\nu$ determine the practical correlation range and smoothness of the process respectively. \cite{gneiting2012matern} extended it to the multivariate setting by proposing a model with cross-correlation functions $\rho_{ij}\materncorr{\mv{h}}{\kappa_{ij}}{\nu_{ij}}$, where $\rho_{ij}$ are parameters determining the cross-correlations between the $i$th and $j$th component of the multivariate field. The parameters in this construction must be restricted to assure that it is a valid multivariate covariance function, and \cite{gneiting2012matern} proposed two models that satisfied this requirement: A parsimonious model, where $\kappa_{ij} \equiv \kappa$ and $\nu_{ij} = (\nu_{ii} + \nu_{jj})/2$, and a more general bivariate model that was later extended by \cite{apanasovich2012valid}. 

Even though most research has focused on Gaussian random fields, many data sets have features that cannot be captured by Gaussian models, such as exponential tails, non-Gaussian dependence, or asymmetric marginal distributions. There is thus a need for multivariate random fields that are more general than the Gaussian. Examples of such models in the literature are multivariate max-stable processes for spatial extremes \citep{genton2015multivariate} and Mittag-Leffler random fields \citep{ma2013mittag}. A common approach for constructing non-Gaussian fields is to multiply a Gaussian random field with a random scalar. Multivariate versions of this approach were explored by \cite{ma2013student} and \cite{du2012hyperbolic}.  Copula-based modelling is another popular method for non-Gaussian data, which has  been used for creating both univariate \citep{graler2014modelling,bardossy2006copula} and multivariate \citep{krupskii2016factor} random fields.

However, creating non-Gaussian multivariate random field models that allow for likelihood-based parameter estimation and probabilistic prediction is difficult, especially if they should be able to capture interesting departures from normality within realisations, and not just have non-Gaussian marginal distributions. This requirement excludes fields that are non-Gaussian only in the presence of repeated measurements, such as the factor-copula models \citep{krupskii2016factor} and the constructions based on multiplying Gaussian fields with random scalars. Many other copula-based approaches in geostatistics use Gaussian copulas. The resulting models are then equivalent to transformed Gaussian models \citep{kazianka2010copula}, which have many disadvantages \citep{Wallin15}. Thus, most existing approaches are either too limited, in the sense that they cannot capture essential features such as sample path asymmetry, or they lack methods for practical applications. For this reason, the recent review article on multivariate random fields by \cite{genton2015cross} listed creation of practically useful non-Gaussian multivariate random fields as an open problem. 

The main contribution of this work is to present a class of models that remedies this problem. The model class is constructed using systems of stochastic partial differential equations (SPDEs) driven by non-Gaussian noise. To facilitate computationally efficient likelihood-based inference, we use noise with normal-variance mixture distributions \citep{Barndorff1982}, which we refer to as type G noise. The restriction to normal-variance mixtures is not a big limitation, since several common distributions can be formulated in this way. Four increasingly flexible constructions are considered, where the simplest is closely related to factor copula models and the approach where a Gaussian field is multiplied with a random scalar. The more flexible constructions allow the fields to capture more complex dependency structures and departures from Gaussianity within realisations, while still allowing for likelihood-based inference. As an additional motivation for the more flexible constructions, we investigate the properties of spatial prediction based on the type G models, and in particular prove that distributions of spatial predictions for the simplest construction are asymptotically Gaussian. This means that if the goal is to use the model for spatial prediction, one might as well use a Gaussian model instead of the simple non-Gaussian constructions.

In the seminal work of \cite{lindgren10}, Gaussian random fields where formulated as solutions to SPDEs, which were apprroximated using a element (FE) discretization to allow for computationally efficient inference.
\cite{hu2013multivariate} and \cite{hu2016spatial} extended the work to multivariate Gaussian random field models based on systems of SPDEs. However, their models in general do not have explicit covariance functions, which can complicate the understanding of the effect each model parameter has, in particular in the non-Gaussian case. To avoid this problem, we formulate systems of SPDEs that result in models with marginal Mat\'ern covariance functions, having the parsimonious Mat\'ern model as a special case. We further discover a set of parameters in the model formulation that do not affect the covariance function, and therefore are unidentifiable for Gaussian models. These parameters, however, control the more complex dependence for non-Gaussian models.

As always with more general models than the Gaussian, there is an added computational cost for inference. However, using FE discretizations of our non-Gaussian models allows for the same computational complexity with respect the size of the discretized random field as for the corresponding Gaussian models. This makes the models applicable in scenarios where the data sets are so large that it prohibits the use of standard covariance-based models. As in the Gaussian case, the SPDE approach also facilitates extensions to non-stationary models by using spatially varying parameters. An important fact related to this is that the construction of the FE approximation is identical to that for Gaussian models, and thus as easy to compute.

The article is structured as follows. In Section \ref{sec:spde}, the link between systems of SPDEs and cross-covariances is studied. Section \ref{sec:nongauss} contains the definitions of the non-Gaussian models, as well as derivations of basic model properties. More details and examples of multivariate normal inverse Gaussian (NIG) fields, a special case of the type G models, are given in Section \ref{sec:nig}. In Section \ref{sec:model}, the type G fields are included in a geostatistical model for which we derive computationally efficient methods for likelihood-based parameter estimation and probabilistic prediction.
Section~\ref{sec:applications} presents two applications, and the article concludes with a discussion in Section~\ref{sec:discussion}. The article contains five appendices that present (\ref{sec:fem}) details on the FE discretizations; (\ref{sec:gradients}) gradients needed for the parameter estimation; (\ref{sec:pseudo}) sampling methods for the models;  (\ref{seq:parameter_estimates}) details on the applications; and (\ref{sec:proofs}) all proofs. The methods developed in this work have been implemented in the R package \texttt{ngme}.

\section{Multivariate Mat\'ern fields and systems of SPDEs}\label{sec:spde}

A Gaussian random field $x(\mv{s})$ on $\mathbb{R}^d$ with a Mat\'ern covariance function can be represented as a stationary solution to the stochastic partial differential equation
\begin{equation}\label{eq:model}
(\kappa^2-\Delta)^{\frac{\alpha}{2}} \proc = \gnoise, \quad \mbox{in $\mathcal{D}:=\R^d$},
\end{equation}
where $\alpha = \nu + d/2$, $\Delta$ is the Laplacian and $\gnoise$ is Gaussian white noise \citep{whittle63}. Extending equation \eqref{eq:model} to a system of SPDEs can be used to define more general covariance models \citep{bolin09b} and to define multivariate random fields. \cite{hu2013multivariate} and later \cite{hu2016spatial} proposed using systems of the form
\begin{align}\label{eq:spdesystem}
\begin{bmatrix}
\mathcal{K}_{11} & \mathcal{K}_{12}  & \cdots & \mathcal{K}_{1p}\\
\mathcal{K}_{21} & \mathcal{K}_{22}  & \cdots & \mathcal{K}_{2p}\\
\vdots & \vdots  & \ddots & \vdots \\
\mathcal{K}_{p1} & \mathcal{K}_{p2}  & \cdots & \mathcal{K}_{pp}\\
\end{bmatrix}
\begin{bmatrix}
\proc_1\\
\proc_2\\
\vdots \\
\proc_p
\end{bmatrix}
= \begin{bmatrix}
\gnoise_1 \\
\gnoise_2 \\
\vdots \\
\gnoise_p
\end{bmatrix},
\end{align}
to construct multivariate random fields, $\mv{x}(\mv{s}) = (x_1(\mv{s}), \ldots, x_p(\mv{s}))^{\trsp}$, where $\mathcal{K}_{ij}$ are pseudo-differential operators  such as $(\kappa^2-\Delta)^{\frac{\alpha}{2}}$ and $\gnoise_1, \ldots, \gnoise_p$ are mutually independent Gaussian white noise processes. 
\cite{hu2013multivariate} focused on the bivariate triangular system
\begin{align}\label{eq:uppertrisystem}
\begin{bmatrix}
\mathcal{K}_{11} & \mathcal{K}_{12} \\
&  \mathcal{K}_{22}
\end{bmatrix}
\begin{bmatrix}
\proc_1\\
\proc_2
\end{bmatrix}
= \begin{bmatrix}
\gnoise_1 \\
\gnoise_2
\end{bmatrix},
\end{align}
where $\mathcal{K}_{ij} = (\kappa_{ij}^2-\Delta)^{\frac{\alpha_{ij}}{2}}$. To better understand the  cross-covariance function for this model, one can informally invert the operator matrix to obtain
\begin{align}\label{eq:rep2}
\begin{bmatrix}
\proc_1 \\
\proc_2
\end{bmatrix}
=
\begin{bmatrix}
\mathcal{K}^{-1}_{11} & -\mathcal{K}^{-1}_{11} \mathcal{K}_{12}  \mathcal{K}^{-1}_{22} \\
&  \mathcal{K}^{-1}_{22}
\end{bmatrix}
\begin{bmatrix}
\gnoise_1 \\
\gnoise_2
\end{bmatrix}.
\end{align}
From this representation one can see that $\proc_2$ is marginally a Gaussian Mat\'ern field whereas $\proc_1$ is a sum of two Gaussian fields $\mathcal{K}^{-1}_{11}\gnoise_1$ and $-\mathcal{K}^{-1}_{11} \mathcal{K}_{12}  \mathcal{K}^{-1}_{22}\gnoise_2$ and thus has a more complicated covariance function. 

Although the full system \eqref{eq:spdesystem} may be of interest, the generality comes at the cost of a large number of parameters that are difficult to identify in practice, and equally hard to estimate. We therefore focus on the case when all marginal covariances are Mat\'ern, and on characterizing systems of SPDEs that result in models with this property.

\subsection{Multivariate Mat\'ern-SPDE fields}
To make the results in this section applicable beyond Gaussian models, we replace the right-hand side of \eqref{eq:spdesystem} by $\dot{\mv{\mathcal{M}}} = (\dot{\mathcal{M}}_1,\ldots, \dot{\mathcal{M}}_p)^{\trsp}$, where the components are mutually uncorrelated, but not necessarily independent,  $L_2$-valued independently scattered random measures (see Section \ref{sec:nig} and \citet{rajput1989spectral} for details). This includes Gaussian noise but also the non-Gaussian processes that we will study in the next section. We introduce the operator matrix $\mv{\mathcal{K}}$ with entries $\mv{\mathcal{K}}_{ij} = \mathcal{K}_{ij}$ and write \eqref{eq:spdesystem} more compactly as $\mv{\mathcal{K}}\mv{\proc}= \mv{\dot{\mathcal{M}}}$.
%\begin{equation}\label{eq:Ksystem}
%\mv{\mathcal{K}}\mv{\proc}(\mv{s}) = \mv{\dot{\mathcal{M}}}.
%\end{equation}
Investigating \eqref{eq:rep2}, we can note that $\proc_1$ has a Mat\'ern covariance function if $\mathcal{K}_{12} = \mathcal{K}_{22}$. This motivates the following definition of $p$-variate Mat\'ern-SPDE fields.

\begin{defn}\label{def:tri}
	A multivariate Mat\'ern-SPDE field on $\mathbb{R}^d$ is a solution to $\mv{\mathcal{K}}\mv{\proc} = \mv{\dot{\mathcal{M}}}$ where the operator matrix is of the form $\mv{\mathcal{K}} = \mv{D}\diag(\mathcal{L}_1,\cdots, \mathcal{L}_p)$.
	Here $\mv{\BM}$ is a real invertible $p\times p$ matrix and $\mathcal{L}_i = (\kappa_{i}^2-\Delta)^{\frac{\alpha_{i}}{2}}$ with $\kappa_i>0$ and $\alpha_i >d/2$ for $i=1,\ldots, p$.
\end{defn}

Since $\mv{\BM}$ defines the dependence structure of the process, we refer to it as a dependence matrix. 
That the multivariate Mat\'ern-SPDE model indeed has marginal Mat\'ern covariance functions is clarified in the following proposition.

\begin{prop}\label{thm1}
	Given that the driving noise in $\mv{\mathcal{K}}\mv{\proc} = \mv{\dot{\mathcal{M}}}$ has unit variance, the multivariate Mat\'ern-SPDE field $\mv{\proc}(\mv{s})$ on $\mathbb{R}^d$ has covariance function 
	$$
	\Cov(\proc_i(\mv{s}),\proc_j(\mv{t})) = \begin{cases}
	\frac{\Gamma(\nu_i)\sum_{j=1}^p \BMi_{ii}^2}{\Gamma(\alpha_i)(4\pi)^{d/2}\kappa_i^{2\nu_i}}\materncorr{\|\mv{s}-\mv{t}\|}{\kappa_i}{\nu_i} & i = j, \\
	\mathcal{F}^{-1}(S_{ij})(\|\mv{s}-\mv{t}\|) & i \neq j,
	\end{cases}
	$$
	where $\BMi_{ij}$ are the elements of $\mv{\BMi} = \mv{\BM}^{-1}$,  $\mathcal{F}^{-1}$ denotes the inverse Fourier transform, and 
	\begin{equation}\label{eq:cross_spec}
	S_{ij}(\mv{k}) = \frac{\sum_{l=1}^p\BMi_{il}\BMi_{jl}}{(2\pi)^d}\frac{1}{(\kappa_i^2+\|\mv{k}\|^2)^{\frac{\alpha_i}{2}}
		(\kappa_j^2+\|\mv{k}\|^2)^{\frac{\alpha_j}{2}}}.
	\end{equation}
\end{prop}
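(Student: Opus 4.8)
The plan is to invert the operator system formally, pass to the spectral domain where everything becomes multiplication by transfer functions, read off the cross-spectral density, and finally recognise the Matérn spectral shape on the diagonal. Writing $\mv{\BMi} = \mv{\BM}^{-1}$ and inverting \eqref{eq:Ksystem} gives
\begin{equation*}
\mv{\proc}(\mv{s}) = \mv{\mathcal{K}}^{-1}\mv{\dot{\mathcal{M}}} = \diag(\mathcal{L}_1^{-1},\ldots,\mathcal{L}_p^{-1})\,\mv{\BMi}\,\mv{\dot{\mathcal{M}}},
\end{equation*}
so the $i$-th component is $\proc_i(\mv{s}) = \mathcal{L}_i^{-1}\sum_{l=1}^p \BMi_{il}\dot{\mathcal{M}}_l$. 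The field is jointly stationary, being a bounded linear image of the jointly stationary noise vector, which legitimises working with (cross-)spectral densities and computing the covariance as an inverse Fourier transform.

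Next I would move to the spectral domain, where the operator $\mathcal{L}_i = (\kappa_i^2-\Delta)^{\frac{\alpha_i}{2}}$ acts as multiplication by its (real) symbol $(\kappa_i^2+\|\mv{k}\|^2)^{\frac{\alpha_i}{2}}$. Hence $\proc_i$ has transfer function $g_{il}(\mv{k}) = \BMi_{il}(\kappa_i^2+\|\mv{k}\|^2)^{-\frac{\alpha_i}{2}}$ relative to the $l$-th noise source. Since the components of $\mv{\dot{\mathcal{M}}}$ are mutually uncorrelated with unit variance, their matrix-valued spectral measure is $\frac{1}{(2\pi)^d}\mv{I}\,\md\mv{k}$, so only the diagonal terms survive when forming the cross-spectrum, giving
\begin{equation*}
S_{ij}(\mv{k}) = \frac{1}{(2\pi)^d}\sum_{l=1}^p g_{il}(\mv{k})g_{jl}(\mv{k}) = \frac{\sum_{l=1}^p\BMi_{il}\BMi_{jl}}{(2\pi)^d}\frac{1}{(\kappa_i^2+\|\mv{k}\|^2)^{\frac{\alpha_i}{2}}(\kappa_j^2+\|\mv{k}\|^2)^{\frac{\alpha_j}{2}}},
\end{equation*}
which is exactly \eqref{eq:cross_spec}. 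The covariance is then $\Cov(\proc_i(\mv{s}),\proc_j(\mv{t})) = \mathcal{F}^{-1}(S_{ij})(\|\mv{s}-\mv{t}\|)$, establishing the $i\neq j$ case directly.

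For the diagonal $i=j$ I would note that $S_{ii}(\mv{k}) = \frac{\sum_l\BMi_{il}^2}{(2\pi)^d}(\kappa_i^2+\|\mv{k}\|^2)^{-\alpha_i}$ is a constant multiple of the Matérn spectral density with $\nu_i = \alpha_i - d/2$, so its inverse transform is a multiple of $\materncorr{\|\mv{s}-\mv{t}\|}{\kappa_i}{\nu_i}$. The constant is pinned down by evaluating the variance $\Cov(\proc_i(\mv{s}),\proc_i(\mv{s})) = \int_{\R^d}S_{ii}(\mv{k})\,\md\mv{k}$: passing to polar coordinates and applying a Beta integral yields $\int_{\R^d}(\kappa_i^2+\|\mv{k}\|^2)^{-\alpha_i}\,\md\mv{k} = \pi^{d/2}\Gamma(\nu_i)\Gamma(\alpha_i)^{-1}\kappa_i^{-2\nu_i}$, and since $2^d\pi^{d/2}=(4\pi)^{d/2}$ this reproduces the stated prefactor $\Gamma(\nu_i)\sum_l\BMi_{il}^2\big/(\Gamma(\alpha_i)(4\pi)^{d/2}\kappa_i^{2\nu_i})$.

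The main obstacle is foundational rather than computational: one must justify the spectral calculus for a generalised random field driven by a possibly non-Gaussian independently scattered random measure, i.e.\ verify that the formal inversion produces a genuine stationary $L_2$-field and that its second-order structure is governed by the transfer-function and spectral-measure bookkeeping above. Because the proposition concerns only the covariance, the second-order spectral theory suffices, and it depends on the noise solely through its being uncorrelated with unit variance, so the non-Gaussianity plays no role here. The remaining care is to keep the Fourier normalisation consistent throughout so that all constants match.
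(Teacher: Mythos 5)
Your proof is correct and takes essentially the same route as the paper: both pass to the spectral domain, use the uncorrelated unit-variance noise (spectral matrix $(2\pi)^{-d}\mv{I}$) together with the inverted symbols $\BMi_{il}(\kappa_i^2+\|\mv{k}\|^2)^{-\alpha_i/2}$ to read off the cross-spectral density \eqref{eq:cross_spec}, and then identify the diagonal entries as Mat\'ern. The only cosmetic difference is in fixing the constant for $i=j$: the paper cites the known Fourier pair $\mathcal{F}^{-1}\bigl((2\pi)^{-d}(\kappa^2+\|\mv{k}\|^2)^{-\alpha}\bigr)$ from the SPDE literature, whereas you re-derive the normalisation by integrating the spectral density in polar coordinates via a Beta integral (your computation is correct, including the identity $2^d\pi^{d/2}=(4\pi)^{d/2}$) and using that the Mat\'ern correlation equals one at the origin.
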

Note that $\mv{\BM}$ determines the strength of the cross-correlations, and that $\Cov(\proc_i(\mv{s}),\proc_j(\mv{t}))$ for $i\neq j$ is a Mat\'ern covariance function only if $\kappa_j = \kappa_j$. In the case when $\kappa_i=\kappa$ for all $i$, the model coincides with the parsimonious Mat\'ern model by \cite{gneiting2012matern}. Also note that the shapes of the cross-correlation functions are determined by the parameters of the marginal correlation functions. 
This is slightly more restrictive than the general covariance-based multivariate Mat\'ern models, but has the advantage that there are no difficult-to-check restrictions on the model parameters. Furthermore, both \cite{gneiting2012matern} and \cite{apanasovich2012valid} argued that the most important aspect of multivariate models is to allow for flexibility in the marginal covariances while still allowing for some degree of cross-covariance. Thus, the Mat\'ern-SPDE model should be a sufficiently flexible alternative to multivariate Mat\'ern fields for most applications.

\begin{rem}\label{cor2}
	An immediate consequence of Definition \ref{def:tri} is that $\mv{x}$ alternatively can be  obtained as a solution to a diagonal system of SPDEs driven by correlated noise:  $\diag(\mathcal{L}_1,\ldots, \mathcal{L}_p)\mv{\proc}(\mv{s})  = \mv{\noise}_{R}$, where $\mv{\noise}_{R} = \mv{R}\mv{\noise}$ and $\mv{R} = \mv{D}^{-1}$.  This means that the model can be viewed as a linear model of coregionalization. 
\end{rem}

\subsection{Parameterising the model}\label{sec:param}
An important question for practical applications of the multivariate Mat\'ern-SPDE fields is if the model parameters (the dependence matrix and the parameters of the operators) are identifiable. The following proposition shows that this is not the case in general.%for Gaussian models.

\begin{prop}\label{thm2}
	Two multivariate Mat\'ern-SPDE fields, with the same operators $\mathcal{L}_1,\ldots, \mathcal{L}_p$ and with dependence matrices $\mv{\BM}$ and $\hat{\mv{\BM}}$ respectively, have equal covariance functions if and only if $\mv{\BM} = \mv{Q}\hat{\mv{\BM}}$ for an orthogonal matrix $\mv{Q}$. For any choice of $\mv{\BM}$, one can find a triangular matrix $\hat{\mv{\BM}}$ that gives the same covariance functions. In particular, $\hat{\mv{D}} = \chol(\mv{D}^{\trsp}\mv{D})$ is the unique upper-triangular choice with positive diagonal elements.
	
\end{prop}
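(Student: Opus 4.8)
The plan is to reduce the statement to a standard matrix factorization fact by reading off, from Proposition~\ref{thm1}, precisely which functional of the dependence matrix the covariance depends on. Writing $\mv{R} = \mv{D}^{-1}$ and $\hat{\mv{R}} = \hat{\mv{D}}^{-1}$, I observe that both the marginal variances (the $i=j$ case) and the cross-spectral densities $S_{ij}$ in \eqref{eq:cross_spec} depend on the dependence matrix only through the coefficients $\sum_{l=1}^p R_{il}R_{jl} = (\mv{R}\mv{R}^T)_{ij}$; every remaining factor is a fixed function of $\mv{k}$ determined by the shared operators $\mathcal{L}_1,\ldots,\mathcal{L}_p$. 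Since those operator factors are strictly positive and the Fourier transform is injective, the covariance functions of the two fields agree for all $\mv{s},\mv{t}$ if and only if $\sum_l R_{il}R_{jl} = \sum_l \hat{R}_{il}\hat{R}_{jl}$ for every pair $(i,j)$, that is, if and only if $\mv{R}\mv{R}^T = \hat{\mv{R}}\hat{\mv{R}}^T$.

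Next I would rewrite this condition in terms of the dependence matrices themselves. Using $\mv{R}\mv{R}^T = \mv{D}^{-1}\mv{D}^{-T}$ and taking inverses on both sides (legitimate since both are invertible), the condition $\mv{R}\mv{R}^T = \hat{\mv{R}}\hat{\mv{R}}^T$ is equivalent to $\mv{D}^T\mv{D} = \hat{\mv{D}}^T\hat{\mv{D}}$. The first claim then becomes the classical equivalence: for invertible $\mv{D},\hat{\mv{D}}$, one has $\mv{D}^T\mv{D} = \hat{\mv{D}}^T\hat{\mv{D}}$ if and only if $\mv{D} = \mv{Q}\hat{\mv{D}}$ for some orthogonal $\mv{Q}$. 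The ``if'' direction is immediate, since then $\mv{D}^T\mv{D} = \hat{\mv{D}}^T\mv{Q}^T\mv{Q}\hat{\mv{D}} = \hat{\mv{D}}^T\hat{\mv{D}}$. For ``only if'', invertibility lets me set $\mv{Q} = \mv{D}\hat{\mv{D}}^{-1}$ and verify $\mv{Q}^T\mv{Q} = \hat{\mv{D}}^{-T}(\mv{D}^T\mv{D})\hat{\mv{D}}^{-1} = \hat{\mv{D}}^{-T}(\hat{\mv{D}}^T\hat{\mv{D}})\hat{\mv{D}}^{-1} = \mv{I}$.

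For the remaining assertions I would invoke the Cholesky decomposition. Since $\mv{D}$ is invertible, $\mv{D}^T\mv{D}$ is symmetric and positive definite, so it admits a unique factorization $\mv{D}^T\mv{D} = \mv{U}^T\mv{U}$ with $\mv{U}$ upper triangular and having strictly positive diagonal entries; this $\mv{U}$ is exactly $\chol(\mv{D}^T\mv{D})$. Setting $\hat{\mv{D}} = \mv{U}$ gives $\hat{\mv{D}}^T\hat{\mv{D}} = \mv{D}^T\mv{D}$, so by the equivalence just established the two fields share the same covariance function, which establishes the existence of a triangular representative. Uniqueness of the upper-triangular choice with positive diagonal is precisely the uniqueness clause of the Cholesky factorization.

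The argument is essentially routine linear algebra once the right invariant has been identified; the only step requiring genuine care is the reduction in the first paragraph, where I must argue that equality of the covariance functions is not merely implied by, but actually equivalent to, equality of the coefficient matrices $\mv{R}\mv{R}^T$. This is where I would be explicit that the shared, nowhere-vanishing spectral factors together with injectivity of the Fourier transform allow each $(\mv{R}\mv{R}^T)_{ij}$ to be recovered uniquely from the covariance, so that no information beyond $\mv{R}\mv{R}^T$ (equivalently $\mv{D}^T\mv{D}$) is encoded in the second-order structure.
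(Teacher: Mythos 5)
Your proposal is correct and follows essentially the same route as the paper: reduce equality of covariances to the condition $\mv{D}^T\mv{D} = \hat{\mv{D}}^T\hat{\mv{D}}$, prove the orthogonal-factor equivalence by the direct computation $\mv{Q} = \mv{D}\hat{\mv{D}}^{-1}$, and obtain the triangular representative and its uniqueness from the Cholesky factorization of $\mv{D}^T\mv{D}$. The only difference is cosmetic: the paper cites Remark~\ref{cor2} to assert that the covariance depends on $\mv{D}$ only through $(\mv{D}^T\mv{D})^{-1}$, whereas you extract the same invariant from the spectral formulas of Proposition~\ref{thm1} and spell out the ``only if'' direction via positivity of the spectral factors and injectivity of the Fourier transform, which is a slightly more explicit justification of a step the paper takes for granted.
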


We will refer to models with triangular dependence matrices as triangular Mat\'ern-SPDE fields. Since Gaussian fields are uniquely specified by the first two moments, the proposition implies that the matrix $\mv{\BM}$ is not completely identifiable from data for Gaussian models, so there is no point in considering non-triangular Gaussian models. This is however not the case for non-Gaussian models, where non-triangular dependence matrices can be used to define more general dependence structures. 

Since the dependence matrix is not completely identifiable for Gaussian models, a different model parametrization that separates the control of marginal variances, cross-correlations, and higher moments is preferable. To derive such a parametrization, we use Proposition \ref{thm2} to write $\mv{D} = \mv{Q}_p\mv{D}_l$, where $\mv{D}_l$ is a triangular matrix and $\mv{Q}_p$ is an orthogonal matrix. Then $\mv{D}_l$ and $\mv{Q}_p$ respectively determine the cross-covariances and  the higher moments. To separate the control of the variances and cross-correlations, we rescale the operators $\mathcal{L}_i$ by constants $c_i = \sqrt{\sigma_i^{-2}(4\pi)^{-d/2}\kappa_i^{-2\nu_i}\Gamma(\nu_i)/\Gamma(\alpha_i)}$ and parametrize $\mv{D}_l$ as, 
$$
\mv{\BM}_l(\mv{\rho}) = 
\begin{pmatrix}
1 		& 			& 		& 		& \\
\rho_{1,1} 	& 1 			& 		& 		& \\
\rho_{2,1} 	& \rho_{2,2}		& 1 		& 		& \\
\vdots 	& \vdots 		& \ddots 	& \ddots 	& \\
\rho_{p,1} 	& \rho_{p,2}	& \hdots	& \rho_{p,p-1}	& 1
\end{pmatrix}^{-1}
\diag\left(1,k_2(\mv{\rho}),k_3(\mv{\rho}),\ldots,k_p(\mv{\rho})\right),
$$
where $k_j(\mv{\rho}) = \sqrt{1+\sum_{i<j}\rho_{j,i}^2}$. With this parametrization, $\mv{\rho} \in \R^{p(p-1)/2}$ controls the cross-correlations and $\sigma_i^2 = \pV(X_i(\mv{s}))$.
Figure \ref{fig1} shows an example of the resulting covariance function for a bivariate model with $\rho = \rho_{1,1} = 0.5$.

\begin{figure}
	\begin{center}
		\includegraphics[width=0.7\textwidth]{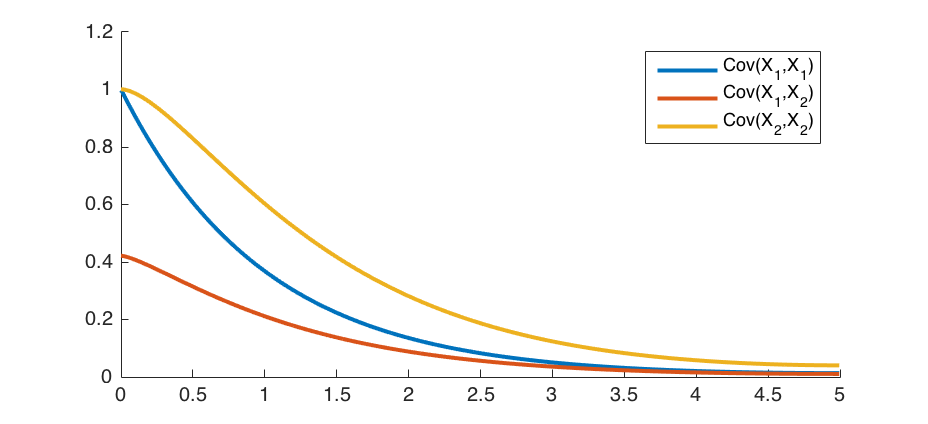}
	\end{center}
	\vspace*{-0.2cm}
	\caption{Example of covariance functions for the solution to the triangular Mat\'ern-SPDE with $\sigma_1 = \sigma_2 = 1$, $\rho = 0.5$, $\kappa_1 = \kappa_2 = 1$, $\alpha_1 = 1.5$, and $\alpha_2 =  2$.}
	\label{fig1}
\end{figure}

What remains is to find a parametrization of $\mv{Q}_p$. The determinant of an orthogonal matrix is $\pm 1$, where the sign is not identifiable in general. It is therefore enough to consider the subclass of special orthogonal matrices, which have determinant $1$. For a general $p$, it is difficult to parametrize such matrices. However, for $p=2$ and $p=3$ we can use the fact that they are equivalent to rotation matrices. We can therefore write 
$$
\mv{Q}_2(\theta) = 
\begin{pmatrix}
\cos(\theta) & -\sin(\theta) \\
\sin(\theta) & \cos(\theta) 
\end{pmatrix}, \quad \mv{Q}_3(\theta_1,\theta_2,\theta_3) = \mv{Q}_{3x}(\theta_1)\mv{Q}_{3y}(\theta_2)\mv{Q}_{3z}(\theta_3),
$$
where $\theta\in[0,2\pi]$, $\mv{Q}_{3x}(\theta) = \diag(\mv{Q}_2(\theta),1)$, $\mv{Q}_{3z}(\theta) = \diag(1,\mv{Q}_2(\theta))$, and
$$
\mv{Q}_{3y}(\theta) = 
\begin{pmatrix}
\cos(\theta) & 0 & -\sin(\theta) \\
0 & 1 & 0 \\
-\sin(\theta) & 0 & \cos(\theta) 
\end{pmatrix}.
$$
To summarize, we use the parametrization 
\begin{equation}\label{eq:general}
\mv{\BM}(\mv{\theta},\mv{\rho}) \diag(c_1\mathcal{L}_1,\cdots, c_p\mathcal{L}_p)\mv{\proc}(\mv{s}) = \mv{\noise},
\end{equation}
where $\mv{\BM}(\mv{\theta},\mv{\rho}) = \mv{Q}_p(\mv{\theta})\mv{\BM}_l(\mv{\rho})$ and $\mv{\theta}\in [0,2\pi]^{p(p-1)/2}$ will control higher moments for non-Gaussian models.  In the bivariate case, the dependence matrix simplifies to
\begin{equation}\label{eq:matern2}
\mv{\BM}(\theta,\rho) = 
\begin{bmatrix}
\cos(\theta)+\rho\sin(\theta) & -\sin(\theta)
\sqrt{1+\rho^2} \\
\sin(\theta)-\rho\cos(\theta) & \cos(\theta)
\sqrt{1+\rho^2}
\end{bmatrix}.
\end{equation}

\section{Type G Mat\'ern SPDE fields}\label{sec:nongauss}
In this section, the multivariate Mat\'ern-SPDE model is extended beyond Gaussianity by replacing the Gaussian noise with non-Gaussian noise. In Section \ref{sec:nonGauss-def}, four different constructions of noise for this approach are introduced and the resulting Mat\'ern-SPDE fields are discussed. The differences between the four constructions are illustrated using FE discretizations of the models in Section \ref{sec:discrete} and further properties of the models are stated in Section \ref{sec:nonGauss-prop}. Finally, asymptotic properties of spatial prediction based on the simplest type G models are derived in Section \ref{sec:nonGauss-outfill}.

\subsection{Four increasingly flexible constructions}\label{sec:nonGauss-def}
The four constructions are based on using different types of normal-variance mixtures 
\begin{equation}\label{eq:variancemixture}
\mv{\gamma} + v\mv{\mu}  + \sqrt{v}z, 
\end{equation}
where $\mv{\gamma}\in\R^p$ and $\mv{\mu}\in\R^p$ are parameters, $z\sim\pN(0,1)$, and $v$ is a non-negative random variable. Inspired by L\'{e}vy process, which are said to be of type G if their increments are normal-variance mixtures, we will refer to these models as type G Mat\'ern-SPDE fields.

The first two constructions are related to the approach where non-Gaussian fields are obtained by multiplying Gaussian fields with random scalars. 

\begin{defn}
	Let $v$ and $v_1,\ldots,v_p$ be independent non-negative infinitely divisible random variables and set $\mv{v}_1 = v \mv{1}_p$ and $\mv{v}_2 = (v_1,\ldots,v_p)^{\trsp}$, where $\mv{1}_p$ denotes a vector with $p$ ones. Further, let $\mathcal{W}(\mv{s}) = (\mathcal{W}_1(\mv{s}),\ldots,\mathcal{W}_p(\mv{s}))^{\trsp}$ be a vector of independent copies of Brownian sheets on $\mathbb{R}^d$. For $i\in \{1,2\}$, a type G$_i$ Mat\'ern-SPDE field is obtained by using $\mv{\noise}_i$ in \eqref{eq:general} where
	$\mv{\mathcal{M}}_i(\mv{s}) = \mv{\gamma} + \diag(\mv{v}_i)\mv{\mu} + \diag(\sqrt{\mv{v}_i})\mv{\mathcal{W}}(\mv{s})$.
\end{defn}

It should be noted that $\mv{\noise}_1$ and $\mv{\noise}_2$ are not independently scattered measures since they have common random scaling $\mv{v}_i$ for different spatial locations. Because of this, one cannot directly use the results from the previous section. However, the results can easily be extended by allowing for measures that are independently scattered conditionally on a random variable $\mv{v}$. 
In particular, if we restrict the distribution of $\mv{v}_i$ such that $\pE(\mv{v}_i) = \mv{1}$, then the result in Proposition~\ref{thm1} still holds in the symmetric case with $\mv{\mu} = \mv{0}$. In the non-symmetric case, the resulting fields have covariance functions given by the covariance function in Proposition~\ref{thm1} plus a constant factor depending on the variance of $\mv{v}$.  
With the restriction $\pE(\mv{v}_i) = \mv{1}$, the mean of the process is given by $\mv{\mu} + \mv{\gamma}$, and we therefore set $\mv{\gamma} = -\mv{\mu}$ to ensure that the process has zero mean as default.
In the type G$_1$ model, we can interpret $v$ as a random scaling of the variance of the entire process, whereas we scale the variance of each $x_i(\mv{s})$ separately with $v_i$ in the type G$_2$ case. When $\mv{\mu}\neq \mv{0}$, $v$ also decides the skewness of the marginal distributions of $\mv{x}(\mv{s})$ in in the type G$_1$ case, whereas $v_i$ controls the skewness of $x_i(\mv{s})$ in the type G$_2$ case. Hence, the type G$_2$ model gives more control of the marginal distributions of the process. 

From a Bayesian point of view, one could interpret $\pi(\mv{v})$ as a prior distribution on the mean and variance of a multivariate Gaussian random field. Thus, these models can be used in the same way as the Gaussian models in a Bayesian setting, but where we have a specific prior that links the mean and variance of the field. From this point of view, one would likely not refer to these models as non-Gaussian. 

The next two constructions are based on type G L\'evy noise. A random variable $x$ is said to be of type G if it can be written as $x \overset{d}{=} \sqrt{v}z$, where $z$ is a Gaussian variable and $v$ is an infinitely divisible non-negative random variable. A univariate type G L\'{e}vy process is a L\'evy process whose increments are of type G. \cite{rosinski91} showed that a type G process $\mathcal{M}(s), s\in [0,1]$, with $\mathcal{M}(1) \overset{d}{=} \sqrt{v}z$ can be represented as $\mathcal{M}(s) = \sum_{k=1}^{\infty}z_k g(e_k)^{\frac{1}{2}}\mathbb{I}(s \geq u_k)$, where $e_k$ are the points of a unit-rate Poisson process on $\mathbb{R}^+$, $z_k$ are iid $\pN(0,1)$ random variables, and  $u_k$ are iid uniform random variables on $(0,1)$. The function $g$ is the generalized inverse of the tail L\'{e}vy measure for $v$, defined as $g(u) = \inf\{x>0: M(x,\infty) \leq u\}$ where $M$ is the L\'evy measure of $v$. The non-decreasing L\'evy process $v(\mv{s}) = \sum_{k=1}^{\infty}g(e_k)\mathbb{I}(s \geq u_k)$ has the same Levy measure as $v$, and can be used to represent $\mathcal{M}$ as a subordinated Wiener process. We refer to \cite{rosinski91} for further technical details on the construction.  In the spatial case, a type G process $\mathcal{M}(\mv{s})$ on the unit square $D = [0, 1]\times [0, 1]$ with $M(\mv{1}) \overset{d}{=} \sqrt{v}z$ can similarly be represented as $\mathcal{M}(\mv{s}) = \sum_{k=1}^{\infty}z_k g(e_k)^{\frac{1}{2}}\mathbb{I}(\mv{s} \geq \mv{u}_k)$, where $\mv{u}_k$ now are uniform random variables on $D$ and $\mathbb{I}(\mv{s}\geq \mv{u}_k) = \mathbb{I}(s_1\geq u_{k,1})\mathbb{I}(s_2\geq u_{k,2})$ is a two-dimensional indicator function. In this case, the associated process $v(\mv{s}) = \sum_{k=1}^{\infty} g(e_k)\mathbb{I}(\mv{s} \geq \mv{u}_k)$ can no longer be seen as a subordinator, but could informally be thought of as a process that determines the variance of the noise. 
For multivariate processes, there are two natural extensions to vector valued noise that we use to define type G$_3$ and type G$_4$ fields.

\begin{defn}\label{def:noise}
	Let $\mathcal{M}(\mv{s})$ be a type G L\'evy processes with $v(\mv{s}) =  \sum_{k=1}^{\infty}g(e_k)\mathbb{I}(\mv{s} \geq \mv{u}_k) $ and let $\mv{\mathcal{M}}(\mv{s}) = (\mathcal{M}_1(\mv{s}), \ldots, \mathcal{M}_p(\mv{s}))^{\trsp}$ be a vector of independent type G L\'evy processes with corresponding variance processes $\mv{v}(\mv{s}) = (v_1(\mv{s}), \ldots v_p(\mv{s}))^{\trsp}$. For $i\in\{3,4\}$ a type G$_i$ Mat\'ern-SPDE field is obtained by using $\mv{\noise}_i$ in \eqref{eq:general} where 
	\begin{align*}
	\mv{\mathcal{M}}_{3}(\mv{s}) &= \mv{\gamma} + \mv{\mu}v(\mv{s}) + \sum_{k=1}^{\infty}g(e_k)^{\frac{1}{2}}\mathbb{I}(\mv{s} \geq \mv{u}_k) \mv{z}_k, & &
	\mv{\mathcal{M}}_4(\mv{s}) = \mv{\gamma} + \diag(\mv{\mu}) \mv{v}(\mv{s}) + \mv{\mathcal{M}}(\mv{s}).
	\end{align*}
\end{defn}

\begin{rem}
	In this section we have assumed a multivariate setting, i.e., $p>1$. However, in the univariate case, the type G$_1$ and type G$_2$ Mat\'ern-SPDE models are equivalent. Further if $\mv{\mu}=0$, the  type G$_1$ model is a Gaussian Mat\'ern field multiplied with a univariate positive random variable. Thus, models such as the t-distributed random fields by \citet{roislien06} belong to the class of type G$_1$ fields. Also, when $p=1$ the type G$_3$ and type G$_4$ Mat\'ern-SPDE models are also equivalent, and coincide with the models in \citep{Wallin15}.
\end{rem}

\subsection{Understanding the four constructions through their discretizations}\label{sec:discrete}
Although the Mat\'ern-SPDE models were formulated on the entire $\mathbb{R}^d$ in Section \ref{sec:spde}, we consider the system of SPDEs on a bounded domain $\mathcal{D}\subset \mathbb{R}^d$ when implementing them numerically. The operators are then equipped with suitable boundary conditions and the solution is approximated using a FE discretization derived in Appendix \ref{sec:fem}.
To understand the differences between the four different type G constructions, we now examine the properties of the discretized models in comparison to the corresponding Gaussian Mat\'ern-SPDE model.
In the FE approximation, the solution of \eqref{eq:general} is represented as a basis expansion $\mv{\proc}(\mv{s}) = \sum_{j=1}^n\sum_{k=1}^p w_{jk}\mv{\varphi}_j^k(\mv{s})$ using piecewise linear basis functions $\mv{\varphi}_j^k(\mv{s})$ obtained from a mesh over $\mathcal{D}$. The value of $\proc_k(\mv{s}_j)$ at the $j$th node in the mesh, $\mv{s}_j$, is then given by the stochastic weight $w_{jk}$.
Assuming $p=2$ and $\alpha_1 = \alpha_2 = 2$, the distribution of $\mv{w} = (\mv{w}_1^{\trsp},\mv{w}_2^{\trsp})^{\trsp} = (w_{11},\ldots, w_{n1},w_{12},\ldots, w_{n2})^{\trsp}$ for the case of Gaussian noise is
\begin{equation}\label{eq:Gausfem}
\mv{w} \sim \pN(\mv{0}, \mv{K}^{-1}\diag(\mv{h},\mv{h})\mv{K}^{-\trsp}), 
\end{equation}
where $\mv{K}$ is a discretization of the operator matrix and $\mv{h}$ is a vector with elements $h_i$ depending on the mesh.

For the corresponding type G$_3$ model, the distribution of the weights is
\begin{equation}\label{eq:G3fem}
\mv{w}|\mv{v} \sim \pN\left(\mv{K}^{-1}\left[\begin{matrix}\gamma_1\mv{h}+\mu_1\mv{v}\\ \gamma_2\mv{h}+\mu_2\mv{v}\end{matrix}\right], \mv{K}^{-1}\diag(\mv{v},\mv{v})\mv{K}^{-\trsp}\right), \quad \mv{v}\sim \pi(\mv{v}),
\end{equation}
where the elements of $\mv{v}\in\mathbb{R}_{+}^{n}$ are independent variables relating to the discretization of the variance process $v(\mv{s})$. 
For the type G$_4$ model, we have 
\begin{equation}\label{eq:G4fem}
\mv{w}|\mv{v}_1,\mv{v}_2 \sim \pN\left(\mv{K}^{-1}\left[\begin{matrix}\gamma_1\mv{h}+\mu_1\mv{v}_1\\ \gamma_2\mv{h}+\mu_2\mv{v}_2\end{matrix}\right], \mv{K}^{-1}\diag(\mv{v}_1,\mv{v}_2)\mv{K}^{-\trsp}\right), \quad \mv{v}_1, \mv{v}_2\sim \pi(\mv{v}),
\end{equation}
where $\mv{v}_1,\mv{v}_2\in\mathbb{R}_{+}^{n}$ have independent components relating to the discretisations of $v_1(\mv{s})$ and $v_2(\mv{s})$ repectively. 
Similarly, the discretization in the type G$_1$ and type G$_2$ cases can be written as \eqref{eq:G3fem} and \eqref{eq:G4fem} respectively, if we define $\mv{v} = v\mv{h}$ and $\mv{v}_i = v_i\mv{h}$. As we discussed for the first two cases, 
we set $\mv{\mu} = -\mv{\gamma}$ to ensure that the process has zero mean, and restrict the distribution of the variances to have mean one. We then have for all cases that $\pE(\mv{v}) = \pE(\mv{v}_1) = \pE(\mv{v}_2) = \mv{h}$. 
Thus, comparing \eqref{eq:Gausfem}, \eqref{eq:G3fem}, and \eqref{eq:G4fem}, we see that a difference between the type G processes and the Gaussian process is that we have replaced the deterministic vector $\mv{h}$ in the covariance matrix by a stochastic vector that has $\mv{h}$ as expected value. Furthermore, the difference between the four constructions lies in the flexibility of this stochastic vector. In the type G$_1$ case, we scale the entire field by a single stochastic variable, whereas we scaled each dimension separately in the type G$_2$ case. For the type G$_3$ case we have replaced the fixed scaling $h_i$ of the distribution of the weights $w_{i1},\ldots, w_{ip}$ for a given spatial location $\mv{s}_i$ by a common stochastic scaling $v_i$, which thus affect the sample path behaviour of the process. The type G$_4$ case is even more general where we have 
individual stochastic scalings $v_{ip}$ for each weight, and thus more control over the sample path behaviour.

\subsection{Properties of the four constructions}\label{sec:nonGauss-prop}
The four type G constructions provide random fields with increasing flexibility. All contain several interesting special cases depending on which distribution that is used for the variance components, such as generalised asymmetric Laplace distributions, normal-inverse gamma distributions, and Student's t-distributions. As an example, we will in the next section use NIG noise to highlight some properties of the constructions. 

Let $\mv{\Sigma}$ be the covariance matrix of the solution $\mv{x}(\mv{s})$ in \eqref{eq:general} in the case of Gaussian driving noise, for a fixed location $\mv{s}$. This matrix has diagonal elements  $\Sigma_{ii} = \sigma_i^2$ and off-diagonal elements $\Sigma_{ij}$ depending on $\sigma_i, \sigma_j$, and $\rho_{ij}$. For the type G$_1$ construction, we can then write the joint cumulative distribution function (CDF) $F^{(1)}$ of $\mv{x}(\mv{s})$, and the marginal CDFs $F^{(1)}_k$ of $x_k(\mv{s})$ for $k=1,\ldots, p$ as
\begin{equation*}
F^{(1)}(\mv{u}) = \int \Phi_{\mv{\Sigma}}\left(\frac{\mv{u}-\mv{\gamma}-\mv{\mu}v}{\sqrt{v}}\right)\md F_v(v), \quad  F^{(1)}_k(u) = \int \Phi\left(\frac{u-\mv{\gamma}-\mv{\mu}v}{\sigma_k\sqrt{v}}\right)\md F_v(v), 
\end{equation*}
where $\Phi_{\mv{\Sigma}}$ denotes the CDF of a $\pN(\mv{0},\mv{\Sigma})$ random variable and $F_v$ denotes the CDF of $v$. There are several choices of $F_v$ that result in fields with known marginal distributions. If for example $\mv{\mu} = \mv{0}$, the field has multivariate Student's $t$ marginals if $v$ is inverse-gamma distributed, and multivariate Laplace marginals if $v$ is gamma distributed. The copula of $\mv{x}(\mv{s})$ is
$
C^{(1)}(\mv{u}) = F^{(1)}[(F^{(1)}_1)^{-1}(u_1),\ldots, (F^{(1)}_p)^{-1}(u_p)],
$
which could be viewed as a generalization of the one-factor copulas in \citep{krupskii2015structured,krupskii2016factor}. However, despite the flexibility of the marginal distributions, the model is limited since it is non-ergodic for any non-singular distribution of $v$, and the sample paths are indistinguishable from sample paths of a Gaussian random field. If repeated realizations are available, one can estimate the distribution of $v$, but not the parameter $\mv{\theta}$ in the dependence matrix.

For the type G$_2$ construction, the joint CDF of $\mv{x}(\mv{s})$ is
\begin{equation*}
F^{(2)}(\mv{u}) = \int \Phi_{\mv{\Sigma}}\left(\diag\left(\frac1{\sqrt{v_1}},\ldots, \frac1{\sqrt{v_p}}\right)(\mv{u}-\mv{\gamma}-\mv{\mu}v)\right)\md F_{v_1}(v_1) \cdots\md F_{v_p}(v_p),
\end{equation*}
and the marginal CDF for $k=1,\ldots,p$ is 
$$
F^{(2)}_k(u) = \int \Phi\left(\frac{u-\gamma_k-\mu_kv_k}{\sigma_k\sqrt{v_k}}\right)\md F_{v_k}(v_k).
$$
The copula of $\mv{x}(\mv{s})$ is 
$
C^{(2)}(\mv{u}) = F^{(2)}[(F^{(2)}_1)^{-1}(u_1),\ldots, (F^{(2)}_p)^{-1}(u_p)],
$
which is similar to the $p$-factor copulas in \citep{krupskii2015structured}. Also fields obtained using the type G$_2$ construction are non-ergodic and have sample paths that are indistinguishable from Gaussian sample paths. However, it is possible to estimate all parameters of the model given multiple realizations. 

Since the type G$_1$ and the type G$_2$ constructions have copulas simular to factor copulas, one can compute their so-called tail dependence coefficients and derive conditions on the distribution of $v$ to study their asymptotic tail dependence similar to \cite{krupskii2016factor}. We leave this for future research as our main interest is in the more flexible type G$_3$ and type G$_4$ constructions. The reason for this is, as we will show in the next subsection, that the type G$_1$ and type G$_2$ models have asymptotically Gaussian conditional distributions. This greatly limits their flexibility for spatial data.
% that the flexibility of the marginal distributions often is insufficient for spatial data, since it is often equally important that the models should allow for flexible conditional distributions when they are used for spatial prediction. In the next subsection, we show 

For the type G$_3$ and type G$_4$ constructions, we in general cannot derive closed-form expressions for the marginal distributions and copulas (we will discuss this further in the next section). However, if we use the representation of the process in Remark 1, and let $F^M_k$ and $\tilde{F}^M_{k}$ denote the distribution functions of the laws of $\mathcal{M}_k$ and $(\mv{R\mathcal{M}})_{k}$ respectively, the copula for the law of $\mv{\mathcal{M}}_R$ can be written as 
$$
C(\mv{u}) = \prod_{k=1}^p F^M_k(\mv{D}_k^{\trsp}((\tilde{F}^M_1)^{-1}(x_1), \ldots, (\tilde{F}_p^M)^{-1}(x_p))^{\trsp}),
$$
where $\mv{D}_k$ is the $k$th row of $\mv{D}$. This is a Gaussian copula only in the case when $\mv{\mathcal{M}}$ is Gaussian. Thus, also for these constructions, the dependence structure induced by the model can be made more flexible than simply using Gaussian copulas to model the dependence. 
The type G$_4$ construction is the most general but the type G$_3$ construction could be of interest for applications where one wish to capture dependence of the extreme values on different variables. It also has the following interesting feature.
\begin{prop}\label{thm3}
	Let $\mv{\proc}$  be a type G$_3$ Mat\'ern-SPDE field with $\mv{\rho} = \mv{0}$. Then, for $\mv{s},\mv{t}\in\mathcal{D}$ and $i\neq j$, $\proc_i(\mv{s})$ and $\proc_j(\mv{t})$ are dependent but uncorrelated random variables.
\end{prop}

\subsection{Increasing domain asymptotics for the type G$_1$ model}\label{sec:nonGauss-outfill}
In this section we explore the distributions of spatial predictions for the type G$_1$ models and show that they converge to Gaussian distributions as the number of observations goes to infinity. This implies that one might as well use the simpler Gaussian model for the purpose of prediction if the data set is sufficiently large. Similar issues with related non-Gaussian models have been noted in the mixed effect literature \citep{rubio2018flexible}. To simplify the notations, we focus on the mean-zero univariate case, but the results are easily extended to the general multivariate setting for both type G$_1$ and type G$_2$ models. 

Let $x_{i}=x(\mv{s}_i), i=1,\ldots,n$, be observations of  a mean-zero random field $x(\mv{s})$, for which we want to predict $x_{0}=x(\mv{s}_0)$.
Let $\mv{x}_{k:n}$ denote the vector $\left[x_k, x_{k+1}, \ldots, x_n\right]^{\trsp}$ and assume that the covariance function, $C(\mv{s},\mv{t})$, of $x$ and the locations $\mv{s}_0,\mv{s}_1,\ldots\mv{s}_n$ are such that covariance matrix of $\mv{x}_{0:n}$ is positive definite. 
Assuming that a mean-zero type G$_1$ model, with the same covariance function as $x$, is used for the prediction, the distribution of $x_0$ given $\mv{x}_{1:n}$ is
$$
\pi_{G_1,x_0}(x_0|\mv{x}_{1:n}) = \int N(x_0; \mv{c}_{0,1:n} \mv{C}_n^{-1} \mv{x}_{1:n} ,\, vc_{0}- v\mv{c}^{\trsp}_{0,1:n} \mv{C}_n^{-1} \mv{c}_{0,1:n})\pi(v) dv,
$$
where $c_0 = \pV(x_0)$, $\mv{c}_{0,1:n}$ is the cross-covariance between $\mv{x}_{1:n}$ and $x_0$, and $\mv{C}_n$ is the covariance matrix of $\mv{x}_{1:n}$.
To show that this distribution converges to a Gaussian distribution we need the following weak assumptions on the observed data. 

\begin{assumption}\label{ass1}
	The random field $x(\mv{s})$ and the observations satisfy, as $n\rightarrow \infty$,
	\begin{alignat}{4}
	&(\mv{x}_{1:n}^{\trsp} \mv{C}_n^{-1}\mv{x}_{1:n} )/n  		&&\overset{p}{\to}\, K_0, 	   \label{eq:mean_est}\\
	&\pV[(\mv{x}_{1:n}^{\trsp} \mv{C}_n^{-1}\mv{x}_{1:n} )/\sqrt{n} ] 	&&\to\, k_v, 		\label{eq:var_est}\\
	&\mv{c}_{0,1:n} \mv{C}_n^{-1} \mv{x}_{1:n}   			&&\overset{p}{\to}\, K_1, 	\label{eq:kriging_mean}\\
	&c_{0}- \mv{c}^{\trsp}_{0,1:n} \mv{C}_n^{-1} \mv{c}_{0,1:n} 	&&\to\, k_2, 			\label{eq:kriging_var}
	\end{alignat}
	where $K_0\geq 0$ and $K_1$ are random variables, $k_2 \in [0, c_0]$, and $k_v>0$.
\end{assumption}
The first two assumptions are satisfied for all models considered in this article given they have finite moments, and given that the sequence $\{\mv{s}_i\}$ does not result in a singular the covariance matrix (which for example is the case if $\mv{s}_i = \mv{s}_j$ for $i\neq j$). The last two assumptions assure that the linear predictor converges to a constant given the data. Assuming that $x$ has a Mat\'ern covariance function with $\nu<\infty$, this is also fulfilled as long as the sequence $\{\mv{s}_i\}$ is not chosen so that the covariance is degenerate. Given these assumptions, we have the following result. 

\begin{Theorem}\label{thm:krig}
	Let Assumption \ref{ass1} hold and assume that $\pi(v)$ is a bounded function which is absolutely continuous with respect to the Lebesgue measure, such that $\E[v]=1$. Then
	$\pi_{G_1,x_0}(\cdot|\mv{x}_{1:n})  \overset{p}{\to} N(\cdot; k_1, k_0 k_2)$ as $n \rightarrow \infty$.
	Here $k_0$ and $k_1$ are the realisations of $K_0$ and $K_1$ in \eqref{eq:mean_est} and \eqref{eq:kriging_mean} respectively.
\end{Theorem}
The theorem shows that the predictive distribution for a type G$_1$ model converges to a Gaussian distribution, and thus the predictor (the mean of the distribution) converges to the corresponding predictor for a Gaussian model, under quite general assumptions on the distribution for the data. In particular, it holds if the data comes from a type G$_1$ model. 

\begin{cor}\label{eq:Lemma}
	Let $x(\mv{s}), \mv{s}\in\mathbb{R}^d$, be a univariate type G$_1$ Mat\'ern-SPDE field let and $\mv{s}_0, \ldots, \mv{s}_n$ be locations in $\mathbb{R}^d$ such that  $i<\|\mv{s}_0-\mv{s}_i\|<i+1$ for $i=1\ldots, n$.  Assume that $\pi(v)$ is a bounded function and absolutely continuous with respect to the Lebesgue measure, such that $\E[v]=1$. 
	Then the predictive distribution for $x(s_0)$, $\pi_{x(s_0)}(\cdot|x(\mv{s}_1),\ldots,x(\mv{s}_n))$, converges in probability to a Gaussian distribution as $n\rightarrow \infty$.
\end{cor}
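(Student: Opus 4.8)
The plan is to prove the corollary by verifying that its concrete hypotheses imply Assumption~\ref{ass1}, and then invoking Theorem~\ref{thm:krig} directly. Since $x$ is a univariate type-G$_1$ field with a Mat\'ern covariance, a single realization has the form $x(\mv{s}) = \sqrt{v}\,g(\mv{s})$, where $g$ is a Gaussian Mat\'ern field and $v>0$ is the scalar variance-mixing variable, which has mean one so that $\Cov(x(\mv{s}),x(\mv{t}))$ is the Mat\'ern covariance $\materncorr{\|\mv{s}-\mv{t}\|}{\kappa}{\nu}$ up to the marginal variance. Because kriging here is based on a single realization, I would condition on the realized value of $v$ throughout, so that all stochastic convergences are taken over the Gaussian part $g$ with $v$ held fixed; this is precisely why $K_0$ and $K_1$ in Assumption~\ref{ass1} are allowed to be random.

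First I would record that $\mv{C}_n$ is positive definite. The separation condition $i<\|\mv{s}_0-\mv{s}_i\|<i+1$ places the distances $\|\mv{s}_0-\mv{s}_i\|$ in pairwise disjoint intervals, so the locations $\mv{s}_0,\ldots,\mv{s}_n$ are distinct; since the Mat\'ern covariance is strictly positive definite on $\R^d$, the covariance matrix of $\mv{x}_{0:n}$ is positive definite, as required. Conditional on $v$, one has $\mv{x}_{1:n} = \sqrt{v}\,\mv{C}_n^{1/2}\mv{z}$ with $\mv{z}\sim\pN(\mv{0},\mv{I}_n)$, whence $\mv{x}_{1:n}^T\mv{C}_n^{-1}\mv{x}_{1:n} = v\|\mv{z}\|^2$ and $\|\mv{z}\|^2\sim\chi^2_n$. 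The law of large numbers gives $(\mv{x}_{1:n}^T\mv{C}_n^{-1}\mv{x}_{1:n})/n \overset{p}{\to} v$, so \eqref{eq:mean_est} holds with $K_0=v\ge 0$, while $\pV[v\|\mv{z}\|^2/\sqrt{n}] = v^2(2n)/n = 2v^2$, so \eqref{eq:var_est} holds with $k_v = 2v^2>0$. These are the two conditions the text notes hold for any model with finite moments.

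The substance is \eqref{eq:kriging_mean} and \eqref{eq:kriging_var}, which I would treat via the Hilbert-space geometry of kriging: the linear predictor $\mv{c}_{0,1:n}\mv{C}_n^{-1}\mv{x}_{1:n}$ is the $L_2$-projection of $x_0$ onto $H_n = \spn\{x(\mv{s}_1),\ldots,x(\mv{s}_n)\}$, and the residual variance $c_0-\mv{c}^T_{0,1:n}\mv{C}_n^{-1}\mv{c}_{0,1:n}$ is the squared distance from $x_0$ to $H_n$. Since $H_1\subseteq H_2\subseteq\cdots$ is a nested sequence of closed subspaces and $x_0\in L_2$, the projections converge in $L_2$ (hence in probability) to the projection onto $\overline{\bigcup_n H_n}$, giving a random limit $K_1$, while the residual variances decrease monotonically to a limit $k_2\in[0,c_0]$; this yields \eqref{eq:kriging_mean} and \eqref{eq:kriging_var}. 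The exponential decay of the Mat\'ern covariance together with $\|\mv{s}_0-\mv{s}_i\|>i$ makes the entries of $\mv{c}_{0,1:n}$ summable, which confirms these limits are finite and that the degenerate case cannot occur. With all four parts of Assumption~\ref{ass1} in hand, Theorem~\ref{thm:krig} gives convergence in probability of $\pi_{x(s_0)}(\cdot\mid x(\mv{s}_1),\ldots,x(\mv{s}_n))$ to the Gaussian law $N(\cdot;k_1(x),k_0(x)k_2)$.

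The delicate point, and the one I expect to require the most care, is the interpretation under which \eqref{eq:var_est} yields a finite limit. Because the type-G$_1$ field carries a single scalar $v$ for the whole realization, conditioning on $v$ turns $\mv{x}_{1:n}^T\mv{C}_n^{-1}\mv{x}_{1:n}/\sqrt{n}$ into $v(\chi^2_n-n)/\sqrt{n}$ plus a deterministic shift, whose variance converges to $2v^2$; this finiteness is special to type-G$_1$ and would fail for a model with a separate variance variable at each location. Granting this reading, the projection argument for \eqref{eq:kriging_mean}--\eqref{eq:kriging_var} is essentially automatic once each $\mv{C}_n$ is invertible, which the separation condition guarantees. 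Thus the only genuine work beyond Theorem~\ref{thm:krig} is verifying that the geometric spacing makes $\mv{C}_n$ nonsingular and the Mat\'ern cross-covariances summable, after which the theorem closes the argument.
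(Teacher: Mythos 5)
Your proof is correct, and it follows the paper's overall strategy---verify Assumption~\ref{ass1} for the type-G$_1$ field and then invoke Theorem~\ref{thm:krig}---but your verification of the two kriging conditions \eqref{eq:kriging_mean} and \eqref{eq:kriging_var} takes a genuinely different route. The paper works concretely with Cholesky factors: writing $\mv{z}_{1:n} = v^{-1/2}\mv{C}_n^{-1/2}\mv{x}_{1:n}$, it uses the nestedness property $(\mv{C}_n^{-1/2})_{1:n-1,1:n-1} = \mv{C}_{n-1}^{-1/2}$ to conclude that the coefficient sequence $\tilde{\mv{c}} = \lim_n \mv{c}_{0,1:n}\mv{C}_n^{-1/2}$ exists, deduces $\tilde{\mv{c}} \in l^2$ from the convergence of the kriging variances, and then shows $\pV[\mv{c}_{0,1:n}\mv{C}_n^{-1}\mv{x}_{1:n} - K_1] = \pE[v]\sum_{i=n}^{\infty}\tilde{c}_i^2 \to 0$, thereby realizing $K_1$ as the $L_2$-sum $\sqrt{v}\sum_i \tilde{c}_i z_i$. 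You instead appeal to the abstract fact that for the nested closed subspaces $H_n = \spn\{x(\mv{s}_1),\ldots,x(\mv{s}_n)\}$ the orthogonal projections $P_{H_n}x_0$ converge in $L_2$ to the projection onto $\overline{\cup_n H_n}$, so the linear predictors converge in probability and the residual variances decrease monotonically to a limit in $[0,c_0]$; this yields \eqref{eq:kriging_mean}--\eqref{eq:kriging_var} in one stroke, with no Cholesky bookkeeping. The two arguments identify the same limit $K_1$: the paper's version buys an explicit series representation of it, yours buys brevity and generality (you only need that $x_0$ has finite variance and that each $\mv{C}_n$ is invertible). Two further points are in your favour: you actually justify positive definiteness of $\mv{C}_n$ (the disjoint distance intervals force distinct locations, and the Mat\'ern kernel is strictly positive definite), which the paper simply asserts; and you make explicit the conditioning-on-$v$ reading under which \eqref{eq:var_est} has a finite limit ($k_v = 2v^2$)---the paper relies on this reading implicitly, since the unconditional variance in \eqref{eq:var_est} behaves like $n\Var(v) + 2\pE[v^2]$ and would diverge unless $v$ were degenerate. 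Your treatment of \eqref{eq:mean_est} via the law of large numbers, with $K_0 = v$, coincides with the paper's; your closing remark about summability of the entries of $\mv{c}_{0,1:n}$ is true but not needed, as the projection argument already guarantees finite limits.
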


\section{Normal inverse Gaussian fields}\label{sec:nig}
The NIG distribution \citep{barndorff1997normal} is obtained by choosing $p=1$ and $v$ as an inverse gamma (IG) random variable in \eqref{eq:variancemixture}.
The IG distribution has density 
$$
IG(v;\nignu_1,\nignu_2) = \frac{\sqrt{\nignu_2} }{ \sqrt{2\pi v^3}} \exp\left( - \frac{\nignu_1}{2} v - \frac{\nignu_2}{2v} + \sqrt{\nignu_1\nignu_2} \right), \quad \nignu_1,\nignu_2>0.
$$
The resulting density for the NIG variable is 
\begin{align*}
NIG(x;\gamma,\mu,\nignu_1,\nignu_2) &=  \frac{ e^{\sqrt{\nignu_1\nignu_2} +\mu(x-  \gamma)}\sqrt{ \nignu_2\mu^2 + \nignu_1 \nignu_2}}{\pi \sqrt{ \nignu_2 + (x -  \gamma)^2}}  K_1\left( \sqrt{(\nignu_2  + (x -  \gamma)^2) (\mu^2 + \nignu_1)}  \right).
\end{align*}
In this form the $NIG$ density is overparameterized, and we therefore typically set $\nignu_1=\nignu_2=\nignu$ which results in $\E(v)=1$. If $\mu=0$, one has that the random variable defined in \eqref{eq:variancemixture} has variance 1, but for $\mu\neq 0$, the variance depends on $\eta$.
An important property of the NIG distribution is that its variance mixture distribution, the IG distribution, is closed under convolution. This simplifies inference as explained in later sections.

The simplest multivariate NIG Mat\'ern-SPDE field is obtained by using the type G$_1$ construction with $v\sim IG(\nignu,\nignu)$, resulting in a field with multivariate NIG marginal distributions. To construct the more flexible type G$_3$ and type G$_4$ models, we use IG random variables in the univariate type G L\'evy processes, which results in NIG processes. Since the NIG distribution has both the Gaussian the Cauchy distributions as limiting cases (as $\eta \rightarrow \infty$ and $\eta \rightarrow 0$ with suitable scalings of the other parameters), the NIG Mat\'ern-SPDE processes have both a Gaussian process and a L\'evy flight process as limiting cases. When using NIG noise in \eqref{eq:general}, it is convenient to note that the noise can be represented by an independently scattered random measure \citep{rajput1989spectral}. Specifically, for any Borel set $A$ in the domain, the measure is a univariate NIG random variable with probability density function
$f_{\dot{\mathcal{N}}(A)}(x) =   NIG(x;m(A)\gamma,\mu,\nignu,m(A)^2\nignu)$,
where $m(A)$ denotes the Lebesgue measure of $A$. Note that a random variable with density $f_{\dot{\mathcal{N}}(A)}(x)$ can be obtained from equation \eqref{eq:variancemixture} where $v\sim IG( \nignu,m(A)^2 \nignu )$ and thus $\E(v)=m(A)$.

We let $\mv{\mathcal{N}}_3$ and $\mv{\mathcal{N}}_4$ denote the vector-valued processes in Definition \ref{def:noise} when univariate NIG processes are used.  The density of $\mv{\proc}(\mv{s})$ in \eqref{eq:general} does not have an explicit form in this case but  one can derive the characteristic function (CF) of $\mv{\proc}(\mv{s})$. The following proposition provides the CF for the type G$_4$ case.

\begin{prop}\label{thm:charf}
	The CF of a stationary solution $\mv{\proc}$ to \eqref{eq:general}, evaluated at $\mv{s}$, where the driving noise is $\mv{\mathcal{N}}_4$, is $\phi_{\mv{\proc}(\mv{s})}(\mv{u}) = \prod_{k=1}^p \phi_k(\mv{u})$ where
	\begin{equation*}
	\phi_k(\mv{u}) = \exp\left[  - i\gamma_k  \int  \mv{u}^{\trsp} \mv{v}_{k,\mv{t}} d\mv{t} +    \sqrt{\nignu_k} \int  \nignu_k - \sqrt{\nignu_k   -2i \mu_k^2\mv{u}^{\trsp} \mv{v}_{k,\mv{t}}   +  (\mv{u}^{\trsp} \mv{v}_{k,\mv{t}})^2  } d\mv{t}\right].
	\end{equation*}
	Here  $\mv{v}_{k,\mv{t}} =  [\BMi_{1k} G_1(\mv{s},\mv{t}), \BMi_{2k} G_2(\mv{s},\mv{t}), \ldots, \BMi_{pk} G_p(\mv{s},\mv{t}) ]^{\trsp}$, $\mv{\BMi} = \mv{\BM}^{-1}$, and
	$$
	G_k(\mv{s},\mv{t})  = \frac{\Gamma\left(\frac{\alpha_k-d}{2}\right)}{c_k (4\pi)^{d/4}\Gamma(\frac{\alpha_k}{2})\kappa_k^{\alpha_k-d}} \materncorr{\|\mv{s}-\mv{t}\|}{\kappa_k}{\frac{\alpha_k-d}{2}}, \quad k=1,\ldots, p.
	$$
\end{prop}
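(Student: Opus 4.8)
The plan is to represent the solution as a superposition of stochastic integrals of deterministic Green's-function kernels against the driving noise, use the independence of the noise components to factor the characteristic function into a product over $k$, and evaluate each factor through the cumulant function of the NIG L\'evy noise.

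First I would invert the operator matrix. By Remark~\ref{cor2}, equation~\eqref{eq:general} is equivalent to the diagonal system $\diag(c_1\mathcal{L}_1,\ldots,c_p\mathcal{L}_p)\mv{\proc}(\mv{s}) = \mv{R}\,\mv{\mathcal{N}}_4$ with $\mv{R}=\mv{\BM}^{-1}$, so that $c_i\mathcal{L}_i\proc_i(\mv{s}) = \sum_{k=1}^p\BMi_{ik}(\mv{\mathcal{N}}_4)_k$. Since $G_i(\mv{s},\mv{t})$ is precisely the Green's function of $c_i\mathcal{L}_i$ (verified below), inverting each scalar operator yields the representation
$$
\proc_i(\mv{s}) = \sum_{k=1}^p \BMi_{ik}\int G_i(\mv{s},\mv{t})\,\md(\mv{\mathcal{N}}_4)_k(\mv{t}),
$$
where the components $(\mv{\mathcal{N}}_4)_1,\ldots,(\mv{\mathcal{N}}_4)_p$ are independent scalar NIG L\'evy noises. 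Forming $\mv{u}^T\mv{\proc}(\mv{s})$ and interchanging the finite sum over $i$ with the integral gives $\mv{u}^T\mv{\proc}(\mv{s}) = \sum_{k=1}^p\int (\mv{u}^T\mv{v}_{k,\mv{t}})\,\md(\mv{\mathcal{N}}_4)_k(\mv{t})$, with $\mv{v}_{k,\mv{t}}$ exactly the vector in the statement. By independence of the $(\mv{\mathcal{N}}_4)_k$, the characteristic function factors as $\phi_{\mv{\proc}(\mv{s})}(\mv{u}) = \prod_{k=1}^p \E[\exp(i\int \mv{u}^T\mv{v}_{k,\mv{t}}\,\md(\mv{\mathcal{N}}_4)_k(\mv{t}))] = \prod_{k=1}^p\phi_k(\mv{u})$, which is the product structure of the claim.

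Next I would evaluate each factor using the theory of integrals with respect to independently scattered random measures \citep{rajput1989spectral}: for a deterministic integrand $f$ and a homogeneous noise $(\mv{\mathcal{N}}_4)_k$ whose cumulant (log-characteristic) function per unit Lebesgue measure is $C_k$, one has $\log\E[\exp(i\int f\,\md(\mv{\mathcal{N}}_4)_k)] = \int C_k(f(\mv{t}))\,\md\mv{t}$. It then remains to compute $C_k$. Writing the NIG seed as the normal-variance mixture~\eqref{eq:variancemixture}, I would condition on the inverse-Gaussian variance $v$ to get $\E[e^{iuX}\mid v] = \exp(iu\gamma_k + (iu\mu_k - \tfrac12 u^2)v)$, and then integrate out $v$ with the inverse-Gaussian Laplace transform $\E[e^{sv}] = \exp(\sqrt{\nignu_k}(\sqrt{\nignu_k}-\sqrt{\nignu_k-2s}))$ that follows from the density in Section~\ref{sec:nig}. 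Evaluating at $s = iu\mu_k - \tfrac12 u^2$ gives $C_k(u) = iu\gamma_k + \sqrt{\nignu_k}(\sqrt{\nignu_k} - \sqrt{\nignu_k - 2iu\mu_k + u^2})$; substituting $u = \mv{u}^T\mv{v}_{k,\mv{t}}$ and integrating over $\mv{t}$ reproduces the displayed expression for $\phi_k(\mv{u})$.

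Finally I would confirm that $G_k$ is the Green's function used above: since $\mathcal{L}_k=(\kappa_k^2-\Delta)^{\alpha_k/2}$ has Fourier symbol $(\kappa_k^2+\|\mv{k}\|^2)^{\alpha_k/2}$, the kernel of $(c_k\mathcal{L}_k)^{-1}$ is the inverse Fourier transform of $c_k^{-1}(\kappa_k^2+\|\mv{k}\|^2)^{-\alpha_k/2}$, which is the Mat\'ern kernel of smoothness $(\alpha_k-d)/2$ with the normalizing constant displayed in the statement. I expect the main obstacle to be analytic bookkeeping rather than any single hard idea: one must check that the kernels $G_i(\mv{s},\cdot)$ are square integrable so the stochastic integrals are well defined (which uses $\alpha_i>d/2$ from Definition~\ref{def:tri}), justify the Rajput--Rosinski cumulant identity for these integrands, and verify that $\int C_k(\mv{u}^T\mv{v}_{k,\mv{t}})\,\md\mv{t}$ converges on all of $\R^d$. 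The latter holds because the integrand vanishes as $\mv{u}^T\mv{v}_{k,\mv{t}}\to0$ and inherits the Mat\'ern tail decay of the $G_i$, but making the successive interchanges of sum, integral, and expectation rigorous is where the care is needed.
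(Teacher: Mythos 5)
Your proposal is correct and takes essentially the same route as the paper's proof: both use Remark~\ref{cor2} to invert the operator matrix and represent $\mv{\proc}(\mv{s})$ as Green's-function integrals against the independent NIG noise components, factor the characteristic function over those components, and evaluate each factor via the Rajput--Rosinski cumulant formula for integrals with respect to independently scattered random measures. The only cosmetic difference is that you derive the NIG cumulant from the normal-variance-mixture representation and the inverse-Gaussian Laplace transform, whereas the paper quotes the NIG noise characteristic function directly; your resulting expression $C_k(u)=iu\gamma_k+\sqrt{\nignu_k}\left(\sqrt{\nignu_k}-\sqrt{\nignu_k-2iu\mu_k+u^2}\right)$ agrees with the paper's intermediate formulas (and incidentally shows that the $\mu_k^2$ appearing under the square root in the stated proposition should read $\mu_k$).
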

The following example illustrates the effect of the shape parameter $\theta$ on the multivariate marginal distributions of the type G$_4$ model. 

\begin{figure}[t]
	\begin{center}
		\begin{minipage}[t]{0.25\linewidth}
			\begin{center}
				$\theta=0$\phantom{$\rho$}
				\includegraphics[width=\textwidth]{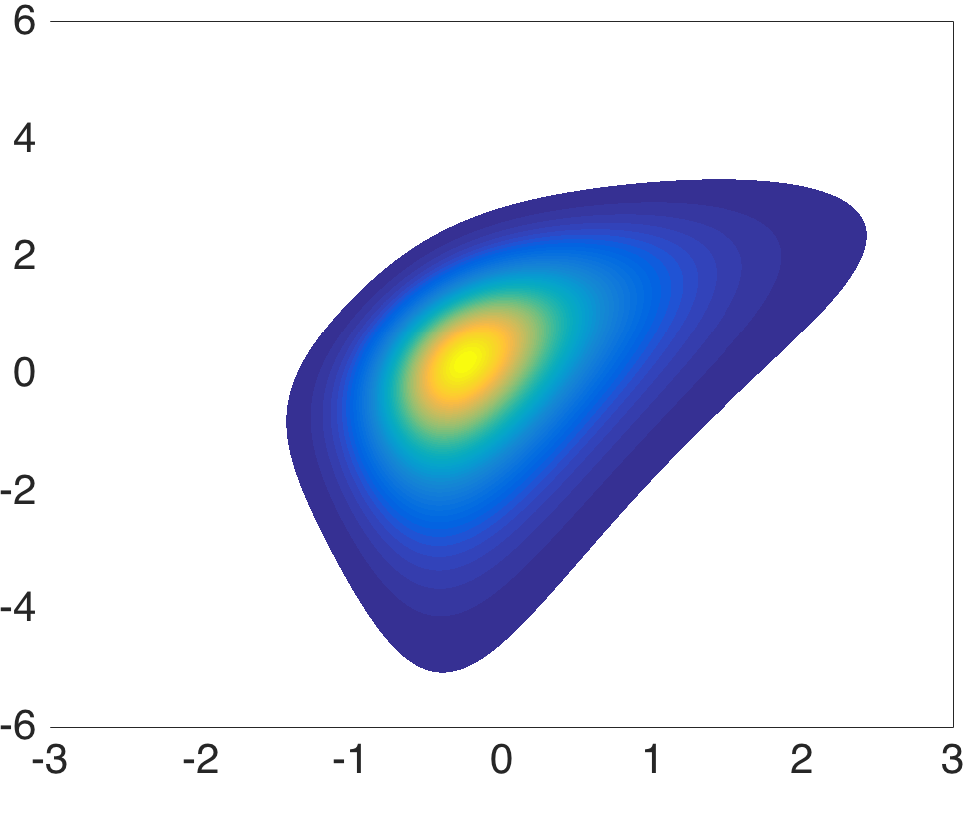}
			\end{center}
		\end{minipage}
		\begin{minipage}[t]{0.25\linewidth}
			\begin{center}
				$\theta=\arctan(\rho)$
				\includegraphics[width=\textwidth]{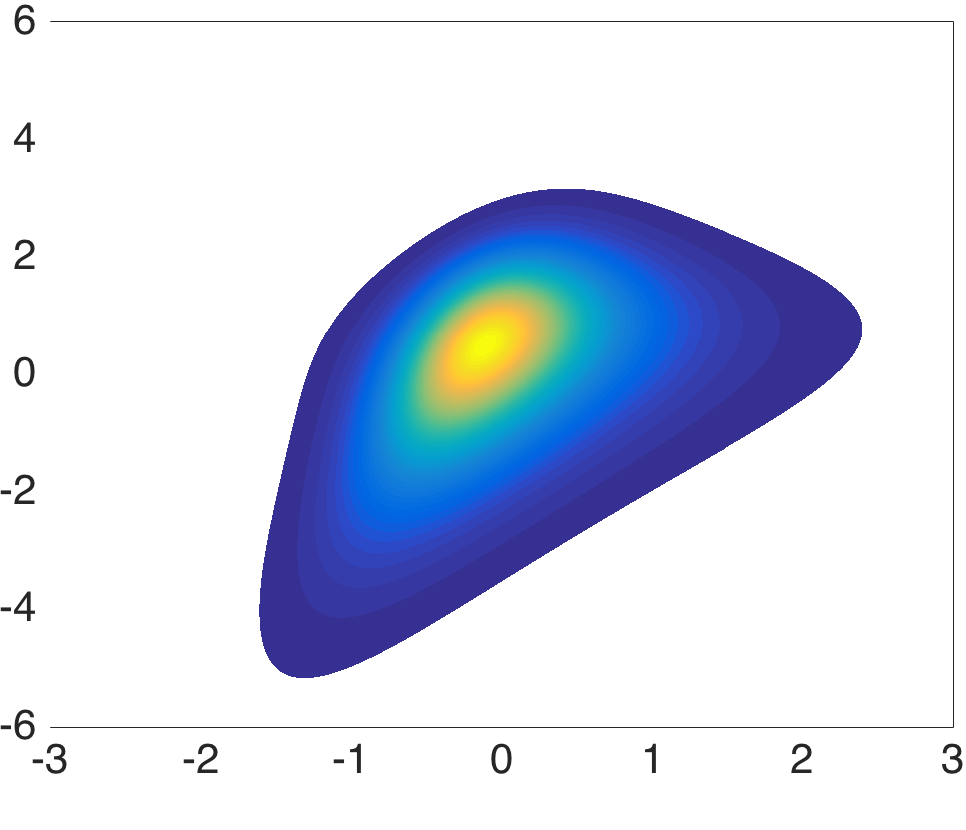}
			\end{center}
		\end{minipage}
		\begin{minipage}[t]{0.25\linewidth}
			\begin{center}
				$\theta=\pi/2$
				\includegraphics[width=\textwidth]{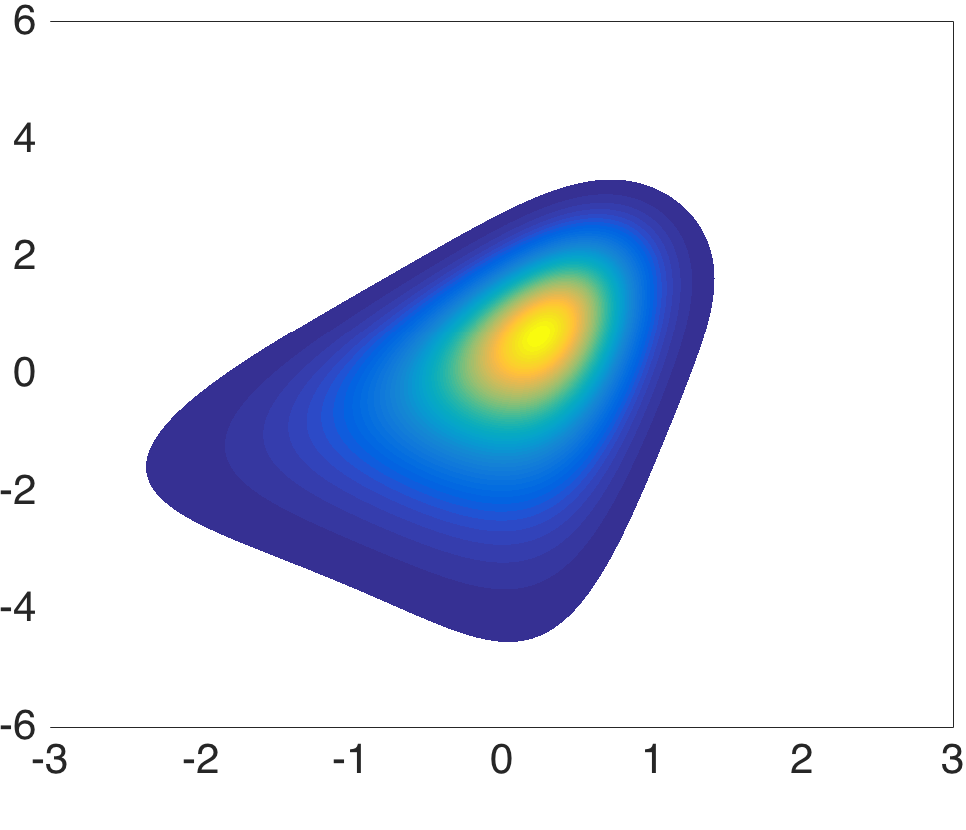}
			\end{center}
		\end{minipage}
		\begin{minipage}[t]{0.25\linewidth}
			\begin{center}
				$\theta=\pi$\phantom{$\rho$}
				\includegraphics[width=\textwidth]{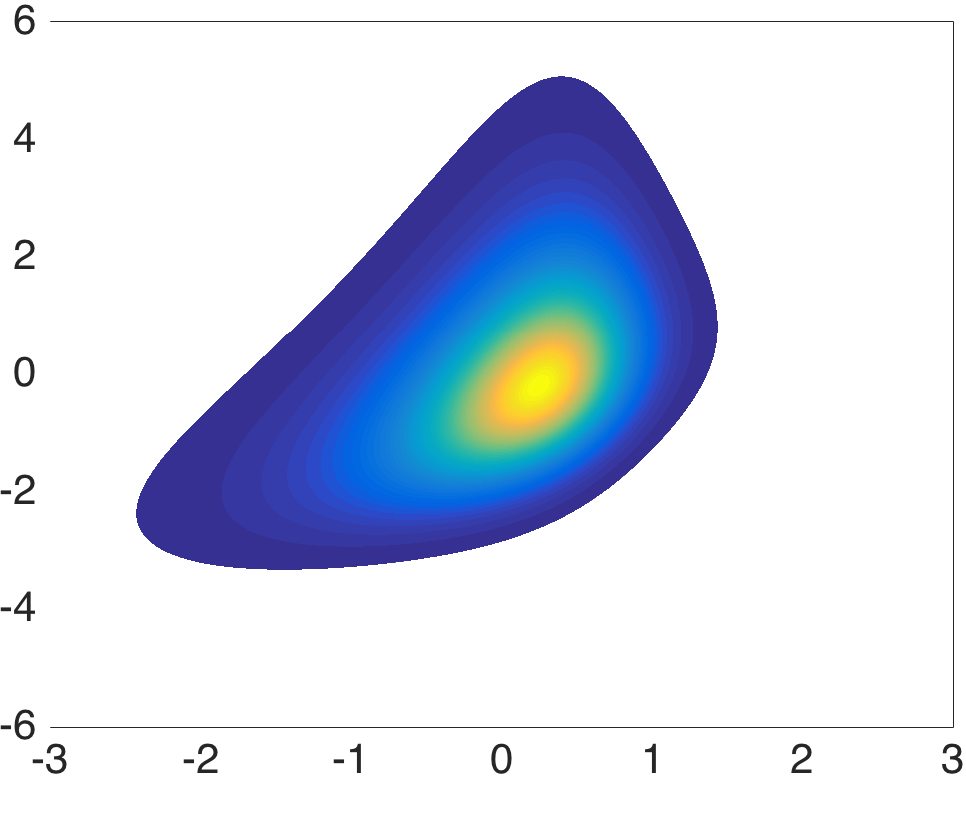}
			\end{center}
		\end{minipage}
		\begin{minipage}[t]{0.25\linewidth}
			\begin{center}
				$\theta=\pi+\arctan(\rho)$
				\includegraphics[width=\textwidth]{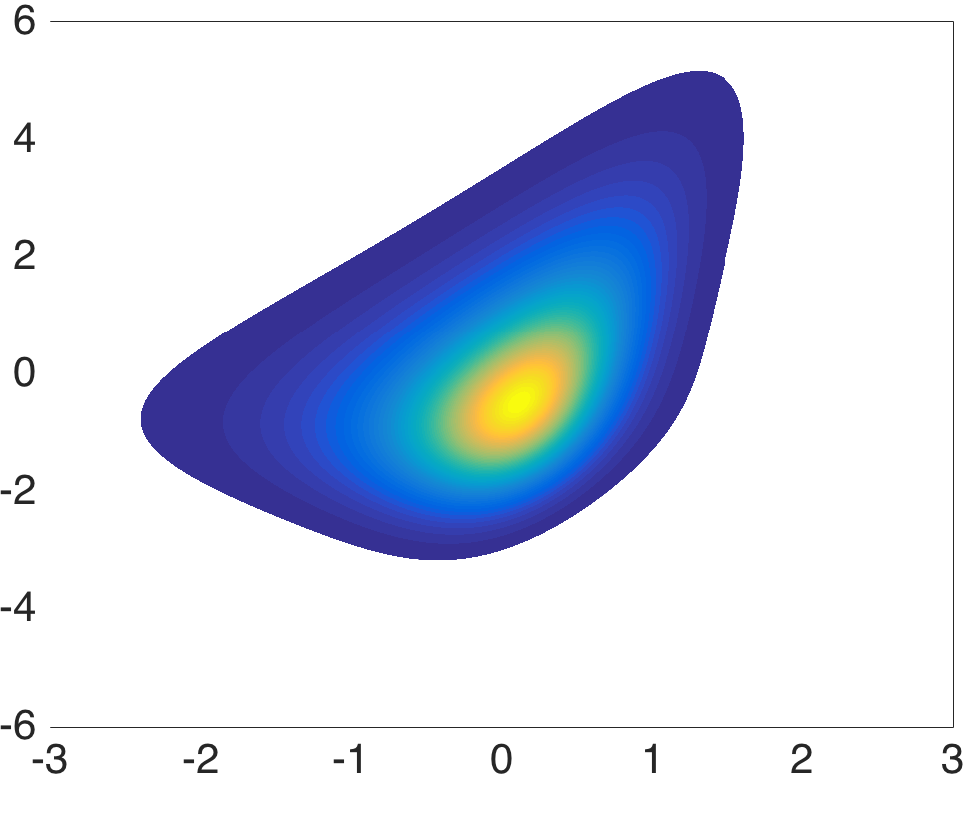}
			\end{center}
		\end{minipage}
		\begin{minipage}[t]{0.25\linewidth}
			\begin{center}
				$\theta=3\pi/2$
				\includegraphics[width=\textwidth]{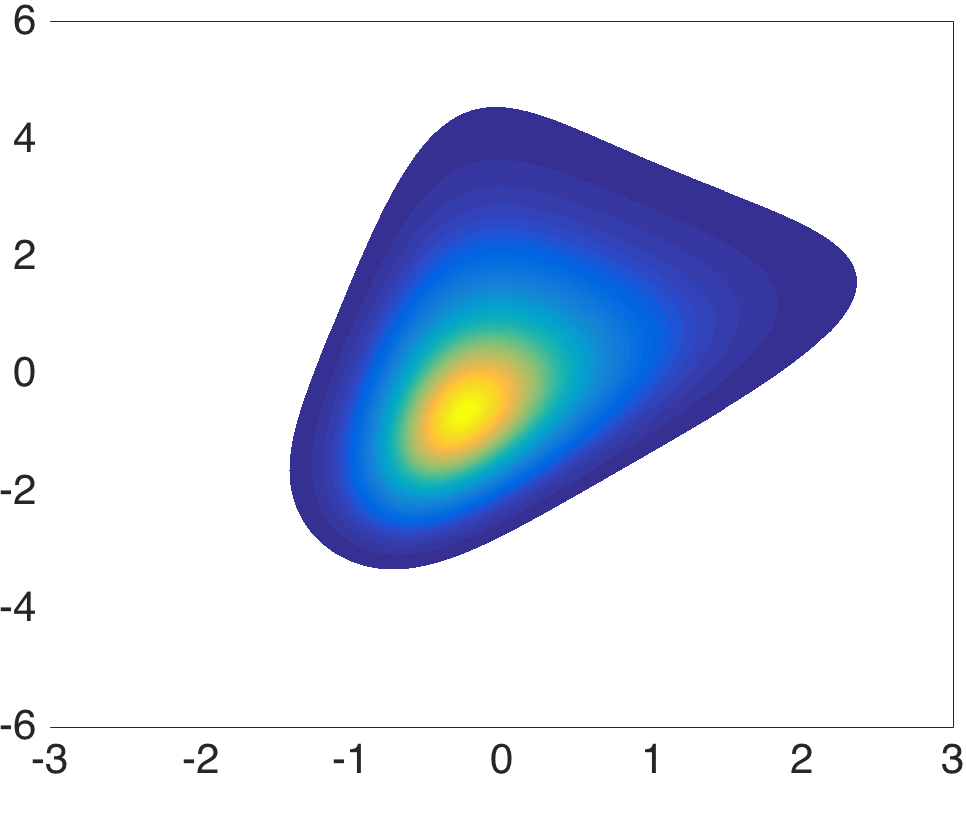}
			\end{center}
		\end{minipage}
	\end{center}
	\caption{Marginal distributions of a bivariate NIG Mat\'ern-SPDE field for different values of $\theta$. %The other model parameters are the same as in Figure \ref{fig1}, which means that 
		All six cases have the correlation function shown in Figure \ref{fig1}.}
	\label{fig:simmarg}
\end{figure}

\begin{example}
	Let $\mv{x}(\mv{s})$ be a type G$_4$ bivariate NIG Mat\'ern-SPDE field with the same parameters as in Figure \ref{fig1}. For the driving noise, we let $\mu_1 = \gamma_2 = 1$, $\mu_2 = \gamma_1 = -1$, and $\nignu = 0.9$. Figure \ref{fig:simmarg} shows bivariate marginal distributions of the resulting field for different values of $\theta$ in the dependence matrix \eqref{eq:matern2}, computed using Proposition \ref{thm:charf}. Recall that $\rho$ determines the cross-correlations between $x_1(\mv{s})$ and $x_2(\mv{s})$ whereas $\theta$ determines the shape of the bivariate marginal distributions, but does not affect the covariance function. Thus, all six examples have the same correlation function, which is shown in Figure \ref{fig1}.  The case $\theta = 0$ corresponds to a lower-triangular operator matrix, and $\theta = \arctan(\rho)$ corresponds to an upper-triangular operator matrix. 
\end{example}

As discussed in Section \ref{sec:nongauss}, the simpler type G constructions have similar flexibility of the marginal distributions, but  lower flexibility in terms of conditional distributions. %In particular, we showed that the simpler constructions have asymptotically Gaussian spatial predictions. 
The following example illustrates how different the predictive distributions can be.% for the four constructions.

\begin{figure}[t]
	\begin{center}
		\begin{minipage}[t]{0.32\linewidth}
			\begin{center}
				\includegraphics[width=\textwidth]{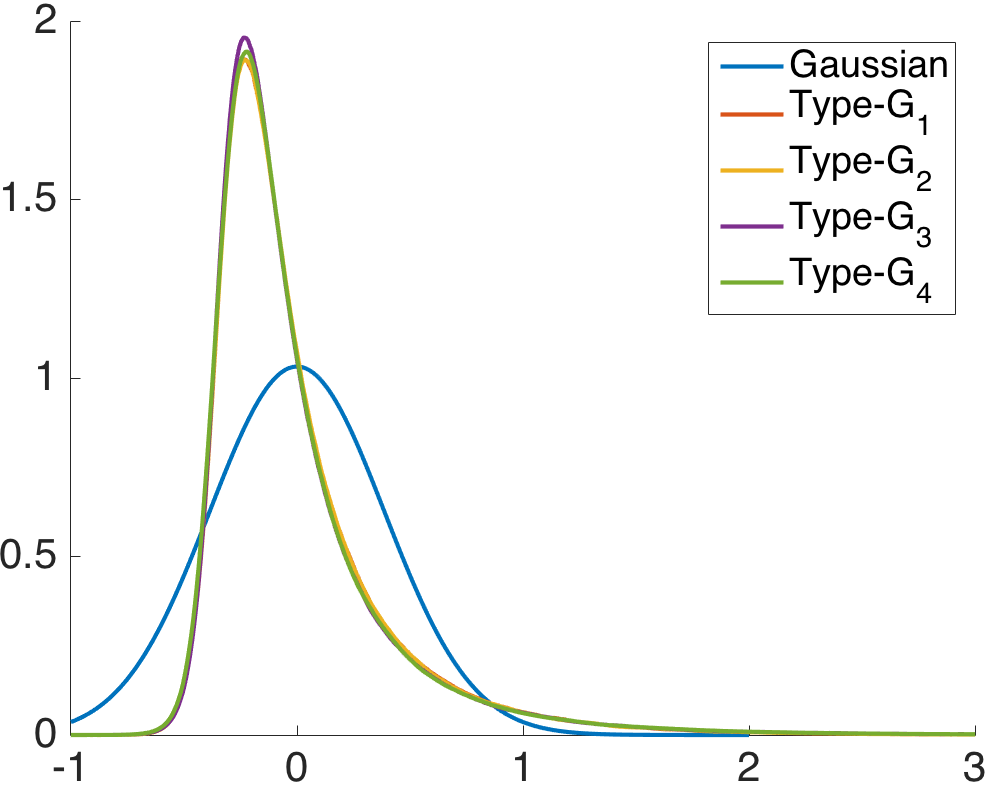}
				(a) $\pi(x_1(0))$
				\includegraphics[width=\textwidth]{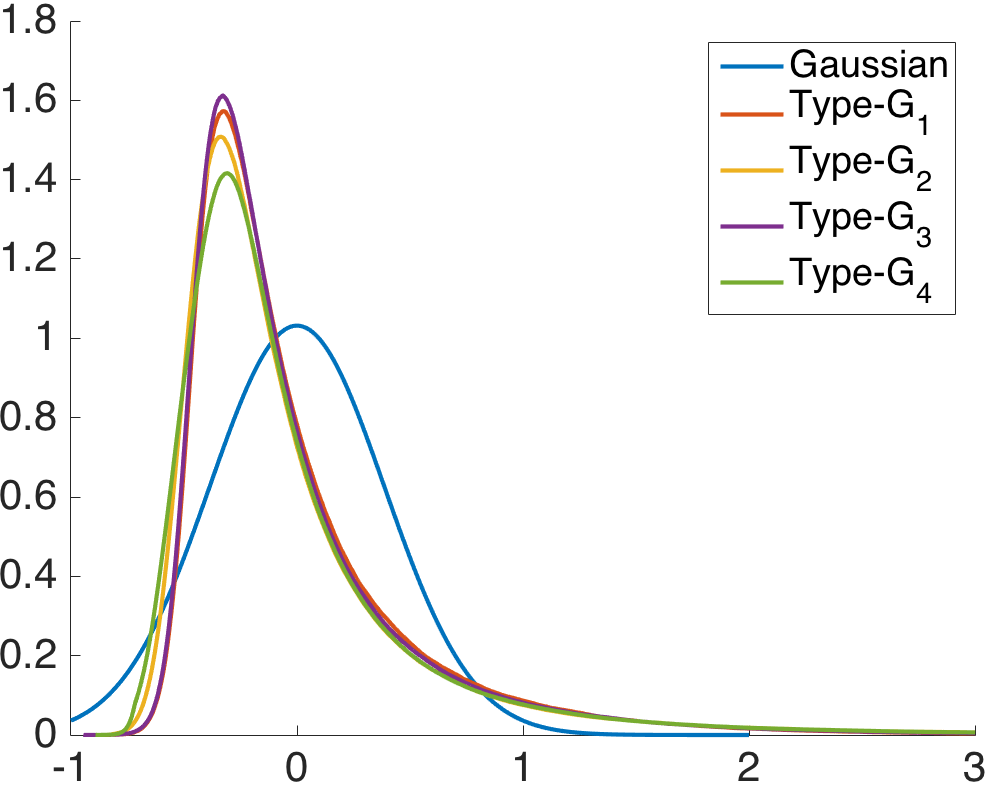}
				(d) $\pi(x_2(0))$
			\end{center}
		\end{minipage}
		\begin{minipage}[t]{0.32\linewidth}
			\begin{center}
				\includegraphics[width=\textwidth]{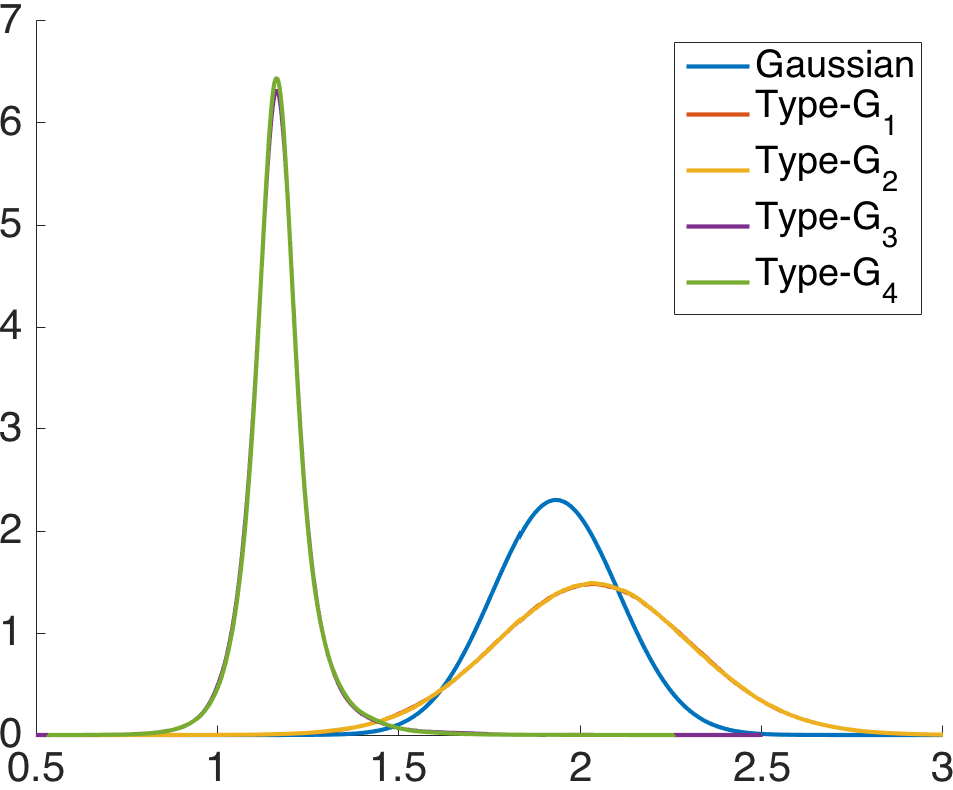}
				(b) $\pi(x_1(0)|\mv{y}),  \sigma_e = 10^{-3}$
				\includegraphics[width=\textwidth]{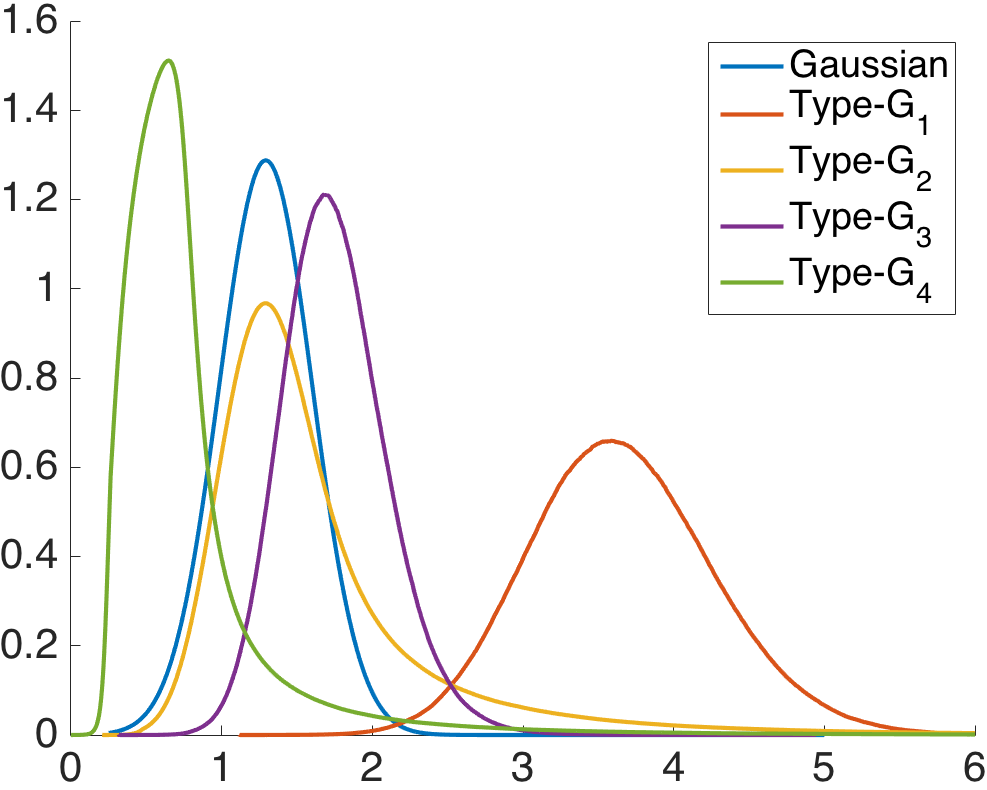}
				(e) $\pi(x_2(0)|\mv{y}), \sigma_e = 10^{-3}$
			\end{center}
		\end{minipage}
		\begin{minipage}[t]{0.32\linewidth}
			\begin{center}
				\includegraphics[width=\textwidth]{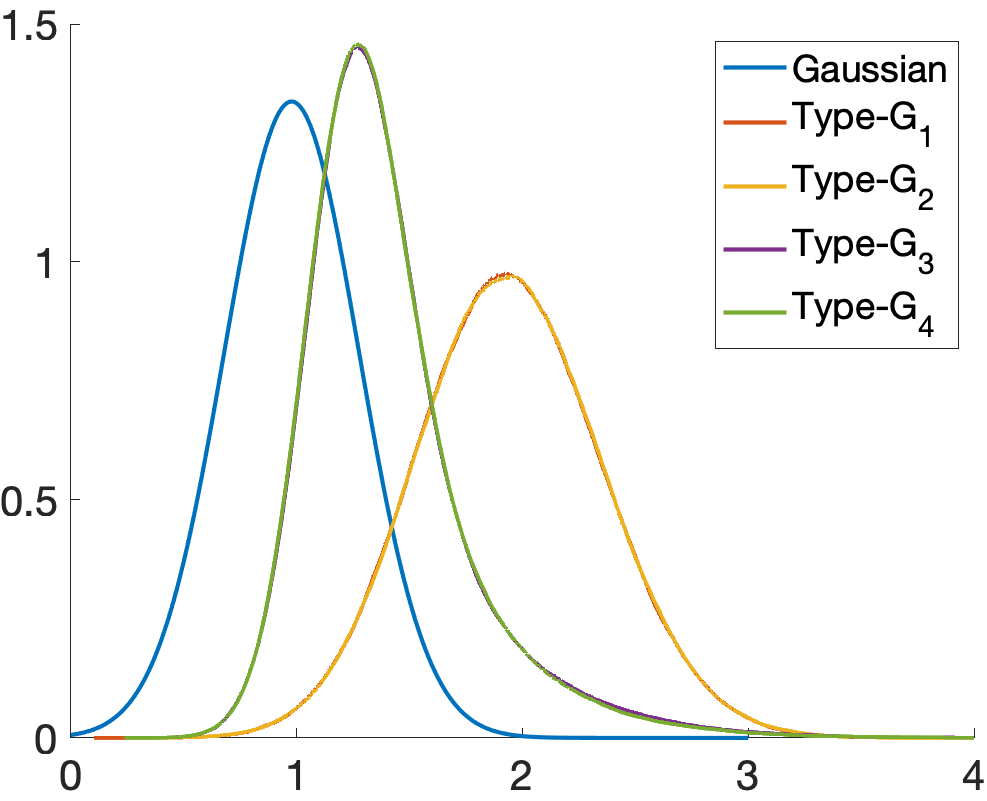}
				(c) $\pi(x_1(0)|\mv{y}), \sigma_e = 0.5$
				\includegraphics[width=\textwidth]{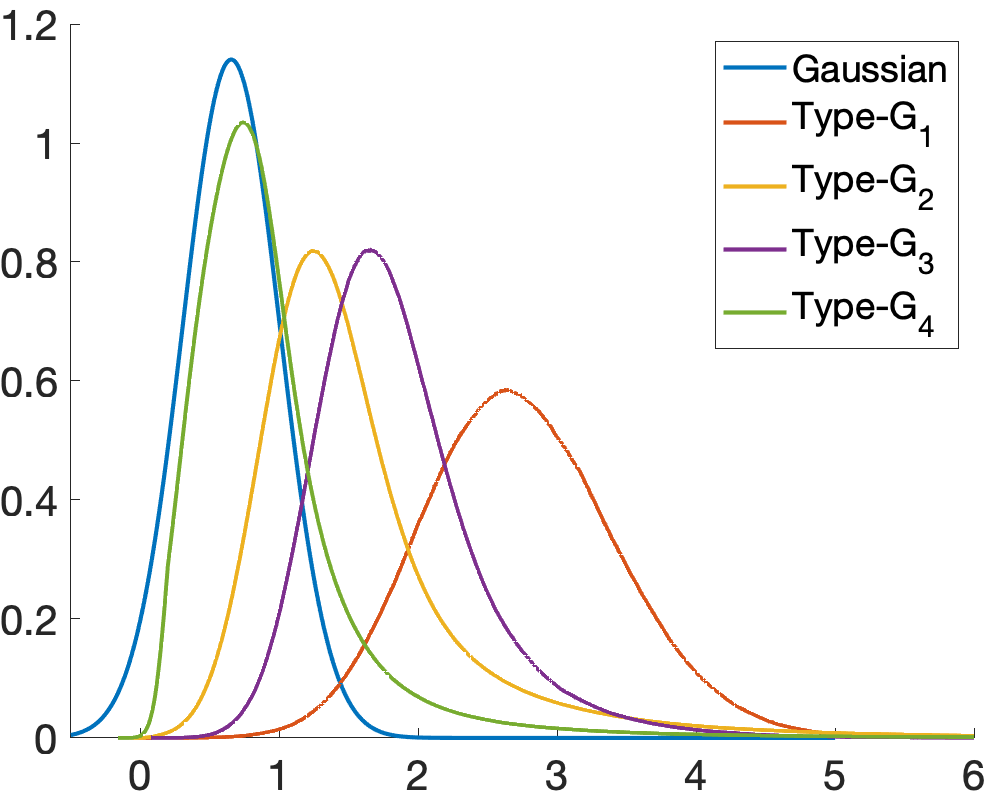}
				(f) $\pi(x_2(0)|\mv{y}), \sigma_e = 0.5$
			\end{center}
		\end{minipage}
	\end{center}
	\caption{Marginal distributions for $\mv{x}(0)$ based on the five bivariate models in Example \ref{example2}. Here $\mv{y} = \{x_1(-1) = 0, x_1(1) = 4\}$ and $\sigma_e$ denotes the measurement noise standard deviation.} % ; one Gaussian and four type G NIG models: 
	%$\pi(x_1(0))$ (Panel a) and $\pi(x_2(0))$ (Panel c), as well as the predictive distributions $\pi(x_1(0)|x_1(-1) = 0, x_1(1) = 4)$ (Panel b) and $\pi(x_2(0)|x_1(-1) = 0, x_1(1) = 4)$ (Panel d).}
	\label{fig:preddist}
\end{figure}

\begin{example}\label{example2}
	Let $\mv{x}_i(t)$, $i=1,\ldots,4$, be bivariate type G$_i$ NIG Mat\'ern-SPDE processes on $\R$ with $\alpha=2$, $\kappa = 1$, $\sigma = 0.1$, $\rho = 0.9$, and $\theta = 0$. The processes have the same operator matrix $\mv{\mathcal{K}}$ and we choose the parameters $\mv{\mu}$ and $\mv{\nignu}$ so that they have similar (univariate) marginal distributions, see Figure \ref{fig:preddist}, Panels (a) and (d), where the marginal distributions of a corresponding Gaussian process also is shown for reference. We predict the value of the four processes at $t=0$ based on two observations of the first dimension $y_{1} = x_{i,1}(-1) +\vep_{1} = 0$ and $y_2 = x_{i,1}(1) + \vep_{2} = 4$, where $\vep_{-1}$ and $\vep_{1}$ are independent $\pN(0,0.001^2)$ variables representing measurement noise. How the prediction is done is presented in Section \ref{sec:kriging}. The predictive distributions are shown in Panels (b) and (e). Even though the four processes have similar marginal distributions for $\mv{x}_i(0)$, their predictive  distributions are very different. For the prediction of the first dimension, the type G$_1$ and type G$_2$ processes have similar distributions, which is expected since they have the same marginal structures. The type G$_3$ and type G$_4$ also have equal marginal structures and therefore similar predictions, which are very different from the first two. For the prediction of the second dimension, we get different predictions for all models since they have different cross-dependence structures. In particular we can note the counter intuitive type G$_1$ prediction, where the prediction of the second dimension is larger than the first, even though there are no observations for this dimension. The same predictive distributions in the case when $\vep_{-1}$ and $\vep_1$ instead have variance $0.5^2$ are shown in Panels~(c) and (f), and one can note the same behaviour of the predictions for this case.
\end{example}

\section{Geostatistical modeling and estimation}\label{sec:model}
To use the multivariate type G fields for geostatistical applications, we need  to be able to include them in hierarchical models that include covariates and measurement noise. In this section, we formulate such a model and describe how to perform likelihood-based estimation of the model parameters and how to use the model for spatial prediction. 

We consider a standard geostatistical model where a latent field is specified using covariates for the mean, and the data consists of noisy observations of this latent field at some locations $\mv{s}_1, \ldots, \mv{s}_n$. 
Let $\data_{ki}$ be the $i$th observation of the $k$th dimension, defiened as $y_{ki} = \sum_{j=1}^K B_{kj}(\mv{s}_i)\beta_j + \proc_k(\mv{s}_i) + \vep_{ki}$ for $k=1,\ldots,p$,  
where the independent variables $\vep_{ki}\sim \pN(0,\sigma_{e,k}^2)$ represent the measurement noise. The functions $B_j(\mv{s})$ are covariates for the mean  and $\proc_k(\mv{s})$ is the $k$th variable of a mean-zero multivariate type G Mat\'ern-SPDE field $\mv{\proc}(\mv{s})$. Since the mean of $\mv{y}$ is modeled using covariates, 
we assume that the mixing variables in the type G construction are scaled so that they have unit expectation (if the expected value exists) and set $\gamma_k = -\mu_k$ to guarantee that  $\proc_k(\mv{s})$ has mean zero in the case that it has an expected value. 

Assuming that the smoothness parameters satisfy $\alpha_i/2\in\mathbb{N}$ for $i=1,\ldots,p$ and using the finite-dimensional representation of $\mv{\proc}(\mv{s})$ derived in Appendix \ref{sec:fem}, we have $\mv{\proc}(\mv{s}) = \sum_{j=1}^n\sum_{k=1}^p w_{jk}\mv{\varphi}_j^k(\mv{s})$. Here $\mv{\varphi}_j^k(\mv{s}) = \varphi_j(s)\mv{e}_k$ are p-dimensional basis functions, where $\mv{e}_k$ is the $k$th column in a $p \times p$ identity matrix, $\{\varphi_i\}$ are piecewise linear FE basis functions defined by a mesh on $\mathcal{D}$, and $\{w_{jk}\}$ are stochastic weights. The model is then
\begin{equation}\label{eq:mainmodel}
\begin{split}
\mv{v} &\sim \pi(\mv{v}), \\
\mv{\weight} | \mv{\var} &\sim  \pN \left( \mv{K}^{-1}(\mv{\mu}\otimes\mv{I}_n)(\mv{\var}-\mv{h}), \mv{K}^{-1} \diag(\mv{\var}) \mv{K}^{-\trsp}  \right),
\\
\mv{\data}_{k}|\mv{\weight} &\sim  \pN\left(  \mv{B} \mv{\beta} +  \mv{A}_k\mv{\weight}, \sigma_{e,k}^2\mv{I}\right), \, k = 1, \ldots,p,
\end{split}
\end{equation}
where $\mv{\data}_k$ denotes the vector of all $n$ observations of the $k$th dimension of the data, $\mv{w}$ is a vector with all stochastic weights, and $\mv{K}$ is a discretization of the operator matrix. The matrix $\mv{B}$ contains the covariates evaluated at the measurement locations and $\mv{A}_k = \diag(\mv{e}_k) \otimes \mv{A}$ where $\mv{A}$ is an observation matrix with elements $\mv{A}_{ij} = \varphi_{j}(\mv{s}_i)$. Finally, the distribution of the variance components, $\pi(\mv{v})$, depends on which model that is used, as described in Appendix~\ref{sec:fem}.

\subsection{Parameter estimation}\label{sec:estimation}
As is standard in the SPDE approach, we assume that the smoothness parameters are fixed and known. It should be noted that models with general smoothness parameters likely could be estimated from data using the rational SPDE approach \citep{bolin2017rational}. However, we leave the adaptation of this approach to the multivariate typee-G setting for future research.

Let $\mv{y} = (\mv{y}_1^\trsp,\ldots, \mv{y}_p^\trsp)^\trsp$ denote the vector of all observations in  \eqref{eq:mainmodel}, and let $\mv{\Psi}$ be the model parameters to be estimated.
There is no explicit expression for the likelihood distribution $\pi(\mv{\data}|\mv{\Psi})$. However, it is possible to compute maximum likelihood parameter estimates using Monte Carlo (MC) methods. This is computationally feasible because of two important properties of the model: Firstly, $\mv{w}|\mv{y},\mv{v},\mv{\Psi}$ is a Gaussian Markov random field (GMRF) and can thus be sampled efficiently. Secondly, $\mv{v}|\mv{w},\mv{y},\mv{\Psi}$ is a vector of independent variables and can thus be sampled in parallel. 

We use a stochastic gradient (SG) method \citep{kushner2003stochastic} to estimate the parameters. The idea of SG is that one only needs an asymptotically unbiased estimator (as the number of MC samples goes to infinity), $\mv{G}(\mv{\Psi})$, of the gradient of the likelihood in order to utilize an iterative procedure where one at iteration $i$ updates the parameters as $\mv{\Psi}^{(i)} = \lambda_{i} \mv{G}(\mv{\Psi}^{(i-1)} ) + \mv{\Psi}^{(i-1)}$. Here $\{\lambda_i\}$ is a sequence satisfying $\sum \lambda_{i} \rightarrow \infty$ and $\sum  \lambda_{i}^2 < \infty$, which ensures that the method converges to a stationary point of the likelihood \citep{kushner2003stochastic,andrieu2005stability}. 
To derive  the estimator of the gradient, we use Fisher's identity \citep{dempster1977maximum} to obtain
\begin{align}
\nabla_{\mv{\Psi}} \log \pi(\mv{\data}|\mv{\Psi}) &= \pE_{\mv{\var},\mv{\weight}}\left(\nabla_{\mv{\Psi}} \log \pi(\mv{\var},\mv{\weight}|\mv{\data},\mv{\Psi}) | \mv{\data},\mv{\Psi}\right) 
=\E_{ \mv{\var}} (\nabla_{\mv{\Psi}}\log \pi( \mv{\var}| \mv{\data},\mv{\Psi})| \mv{\data},\mv{\Psi}). \label{eq:Egrad}
\end{align}

Since $\pi(\mv{\weight}|\mv{\var},\mv{y},\mv{\Psi})$ is Gaussian, we have a closed-form expression for  $\nabla_{\mv{\Psi}}\log \pi( \mv{\var}| \mv{\data},\mv{\Psi})$, see the Appendix \ref{sec:gradients}, but there is no closed form expression for its expected value. We therefore use
$
\mv{G}(\mv{\Psi}) = \frac1{N}\sum_{i=1}^N \nabla_{\mv{\Psi}}\log\pi_{\mv{\Psi}}( \mv{\var}^{(i)} | \mv{\data},\mv{\Psi})
$
as a MC estimate of the expectation, where $\mv{\var}^{(i)}$ are samples from distribution $\pi(\mv{\var}|\mv{\data},\mv{\Psi})$. These samples are obtained using a Gibbs sampler (Algorithm \ref{alg:Gibbs} in Appendix \ref{sec:pseudo}) which samples $\pi(\mv{w}|\mv{y},\mv{v},\mv{\Psi})$ and $\pi(\mv{v}|\mv{w},\mv{y},\mv{\Psi})$ respectively. The sampling of $\pi(\mv{v}|\mv{w},\mv{y},\mv{\Psi})$ typically needs to be done with a general sampling method, such as a Metropolis Hastings algorithm. However, if $\pi(v)$ is a generalized inverse Gaussian (GIG) distribution, then the conditional distribution remains in the GIG family which can be sampled uniformly fast over the entire parameter space, see \cite{hormann2014generating}. The GIG distribution has density
$GIG(v; c, a, b) = \left(\frac{a}{b}\right)^{\frac{c}{2}}(2 K_c(\sqrt{ab}))^{-1} v^{c-1} e^{-\frac{1}{2}\left(av + bv^{-1}\right)}.$
For further details, including parameter ranges, see \cite{Jorgensen}.
The GIG distribution contains several known distributions as special cases, such as the gamma distribution, the inverse gamma distribution, and the IG distribution. Because of this, one can sample the variance components of the NIG distribution explicitly. The following example provides the conditional distributions for the NIG Mat\'ern-SPDE fields from Section \ref{sec:nig}.
\begin{example}
	For the NIG processes in Section \ref{sec:nig}, the distribution of the variance components $v$, $v_i$ and $v_k$ is  $IG(v;\nignu_1, \nignu_2) = GIG(v;-\frac{1}{2}, \nignu_1 , \nignu_2 )$. It can therefore be shown that the different type G constructions result in the following posterior distributions 
	\begin{align*}
	&\mbox{type G$_1$:} &\pi(v|\mv{E},\mv{\Psi}) &= \mbox{GIG}\left(v;-\frac{np+1}{2}, \nignu+ \sum_{k=1}^p \mu_k^2\mv{1}_n ^\trsp\mv{h}_k  , \nignu + \sum_{k=1}^p \left(\frac{\mv{\xi}_k}{\mv{h_k}}\right)^\trsp \mv{\xi}_k \right),\\
	&\mbox{type G$_2$:} &\pi(v_k|\mv{E},\mv{\Psi}) &= \mbox{GIG}\left(v_k;-\frac{n+1}{2},\nignu_k+  \mu_k^2\mv{1}_n ^\trsp\mv{h}_k  , \nignu_k + \left(\frac{\mv{\xi}_k}{\mv{h_k}}\right)^\trsp \mv{\xi}_k\right),\\
	&\mbox{type G$_3$:} &\pi(\mv{v}|\mv{E},\mv{\Psi}) &= \mbox{GIG}\left(\mv{v}; -\frac{p+1}{2} , \nignu  +  \sum_{k=1}^p \mu_k^2 , \mv{h}_k^2 \nignu + \sum_{k=1}^p \mv{\xi}_k^2 \right),\\
	&\mbox{type G$_4$:} &\pi(\mv{v}_k|\mv{E},\mv{\Psi}) &= \mbox{GIG}\left(\mv{v};-1, \mu_k^2 + \nignu_k , \mv{\xi}_k^2 + \mv{h}_k^2 \nignu_k\right),
	\end{align*}
	where $\mv{E}=[\mv{E}_1^{\trsp},\ldots,\mv{E}_p^{\trsp}]^{\trsp} = \mv{K} \mv{\weight}$ and $\mv{\xi}_k = \mv{E}_k + \mv{h}_k\mu_k$.
	For the two last densities it is explicitly understood that $GIG$ in vector form denotes product of independent GIG distributions with parameter values given by the values in the vectors. 
\end{example}

% One could alternatively base the gradient estimate on an MC estimate of the expected value in \eqref{eq:Egrad1}, which would not require computing the expected value with respect to $\mv{w}$ analytically. This would however increase the variance and result in slower convergence.% of the SG method. 

%Because of these properties, this estimation procedure is in general quite computationally efficient. For the application in Section \ref{sec:applications}, the estimation only required a few minutes of computing time (precise numbers are given in Section \ref{sec:applications}).

\subsection{Spatial prediction and evaluation of predictive performance}\label{sec:kriging}
In applications one is often interested in predictions of the latent field given data. The predictive distribution for the $k$th variable of the latent field, at a location $\mv{s}_0$, is $\pi(\proc_k(\mv{s}_0)|\mv{\data},\mv{\Psi})$. This distribution is often summarized using the mean as a point estimate, and the variance as a measure of uncertainty. To estimate these two quantities, let $\mv{A}_p = [\varphi_1(\mv{s}_0),\ldots, \varphi_n(\mv{s}_0)]$ and use the Gibbs sampler in Algorithm \ref{alg:Gibbs}, Appendix \ref{sec:pseudo}, to obtain samples, $\{\mv{v}^{i}\}_{i=1}^N$, from $\pi(\mv{v}|\mv{\data},\mv{\Psi})$. Based on these samples, we compute MC estimates $\pE(\proc_k(\mv{s}_0)|\mv{\data}) \approx \frac1{N} \sum_{i=1}^N \mv{A}_p\hat{\mv{\xi}}^{(i)}$ and 
$\pV(\proc_k(\mv{s}_0)|\mv{\data})\approx \frac1{N} \sum_{i=1}^N \mv{A}_p^{\trsp}(\hat{\mv{Q}}^{(i)})^{-1}\mv{A}_p$, where $\hat{\mv{\xi}}^{(i)}$ is the expected value of $\mv{w}|\data, \mv{v}^{(i)}$ and $\hat{\mv{Q}}^{(i)}$ is the corresponding precision matrix (see Appendix \ref{sec:gradients} for analytic formulas of these quantities).
The posterior median, which may be a more appropriate point estimator if the distribution is asymmetric, can similarly be estimated by the sample median of $\{\mv{A}_p\hat{\mv{\xi}}^{(i)}\}_{i=1}^N$. 

To evaluate a proposed model one also need to compute various goodness-of-fit measures, such as the  continuous ranked probability scores (CRPS) \citep{matheson1976scoring}. Let $y_k$ be an observation in the $k$th dimension at $\mv{s}_0$, and let $F$ denote the marginal CDF of  $\pi(y_k|\mv{\data}_{-0},\mv{\Psi})$, where $\mv{\data}_{-0}$ denotes all observations but $y_k$, then the (negatively oriented) CRPS value for this location can be computed as \citep{gneiting2007strictly}
\begin{equation}\label{eq:crpsE}
\mbox{CRPS}(F,y_k) = \pE(|Y^{(1)}_k - y_k|) - \frac1{2}\pE(|Y^{(1)}_k - Y^{(2)}_k|)
\end{equation}
where $Y^{(1)}_k$ and $Y^{(2)}_k$ are independent random variables with distribution $F$. For a Gaussian distribution this expression can be used to derive CRPS value analytically \citep[see e.g.][]{gneiting2007strictly}. For the multivariate type G SPDE-Mat\'ern fields, one option is to approximate the expected values in \eqref{eq:crpsE} by MC integration. Basing such an estimate on $N$ draws of $Y_k^{(1)}$ and $Y_k^{(2)}$ yields an estimate $\mbox{CRPS}_N(F,y)$. Unfortunately, $N$ often needs to be quite large to obtain good approximations with this estimator. The following proposition provides a more efficient way of approximating the CRPS value in the case of a general normal-variance mixture distribution.

\begin{prop}\label{crpsthm}
	Assume that the random variable $X$ is a normal-variance mixture with CDF $F(x) = \int \Phi\left(\frac{x-\mu(v)}{\sigma(v)}\right) \md F_v(v)$. Let $V_j^{(i)}, j=1,2, i=1,\ldots, N$ be independent draws from the mixing distribution $F_v$, and define $\mu_V = \pE(X|V)$, $\sigma_V^2 = \pV(X|V)$, and
	\begin{equation}\label{eq:M}
	M(\mu,\sigma^2) = 2\sigma\varphi\left(\frac{\mu}{\sigma}\right) + \mu\left(2\Phi\left(\frac{\mu}{\sigma}\right)-1\right),
	\end{equation}
	where $\varphi$ denotes the density function of a standard Gaussian distribution. Then
	\begin{align*}
	\mbox{CRPS}_N^{RB}(F,y)  &= \frac1{N}\sum_{i=1}^N \left[M\left(\mu_{V_1^{(i)}} - y,\sigma_{V_1^{(i)}}^2\right) 
	- \frac1{2}M\left(\mu_{V_1^{(i)}} - \mu_{V_2^{(i)}},\sigma_{V_1^{(i)}}^2 +\sigma_{V_2^{(i)}}^2\right)\right]
	\end{align*}
	%\begin{align*}
	%\mbox{CRPS}_N^{RB}(F,y)  = \frac1{N}\sum_{i=1}^N &\left[M\left(\mu\left(V_1^{(i)}\right) - y,\sigma\left(V_1^{(i)}\right)^2\right) \right.\\  
	%&\left.- \frac1{2}M\left(\mu\left(V_1^{(i)}\right) - \mu\left(V_2^{(i)}\right),\sigma\left(V_1^{(i)}\right)^2 +\sigma\left(V_2^{(i)}\right)^2\right)\right]
	%\end{align*}
	satisfies $\pE(\mbox{CRPS}_N^{RB}(F,y))  = \mbox{CRPS}(F,y)$ and $\pV(\mbox{CRPS}_N^{RB}(F,y)) \leq \pV(\mbox{CRPS}_N(F,y))$.
\end{prop}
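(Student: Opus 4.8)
The plan is to build on two observations: that the function $M(\mu,\sigma^2)$ in \eqref{eq:M} is exactly the mean absolute value of a Gaussian, and that a normal-variance mixture is conditionally Gaussian given its mixing variable. First I would establish the identity $M(\mu,\sigma^2) = \pE|Z|$ for $Z\sim\pN(\mu,\sigma^2)$. This is the standard folded-normal mean: writing $\pE|Z| = \pE[Z\mathbf{1}_{Z\ge 0}] - \pE[Z\mathbf{1}_{Z<0}]$ and integrating, the density contributes a term proportional to $e^{-\mu^2/(2\sigma^2)}$ and the tail probabilities contribute a term in $\Phi(\mu/\sigma)$. Using $\varphi(\mu/\sigma)=\frac{1}{\sqrt{2\pi}}e^{-\mu^2/(2\sigma^2)}$ these collapse to $2\sigma\varphi(\mu/\sigma)+\mu(2\Phi(\mu/\sigma)-1)$, which is exactly \eqref{eq:M}.

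For unbiasedness I would condition on the mixing variables. Since $F(x)=\int\Phi((x-\mu(v))/\sigma(v))\md F_v(v)$, the conditional law of $X$ given its mixing variable $V$ is $\pN(\mu_V,\sigma_V^2)$. For the first term of \eqref{eq:crpsE}, writing $Y^{(1)}$ as a copy of $X$ with mixing variable $V_1$, the $M$-identity gives $\pE(|Y^{(1)}-y|\mid V_1)=M(\mu_{V_1}-y,\sigma_{V_1}^2)$, and integrating over $V_1$ recovers $\pE|Y^{(1)}-y|$. For the second term, $Y^{(1)}$ and $Y^{(2)}$ are independent copies of $X$, so given the pair $(V_1,V_2)$ they are independent Gaussians and their difference is $\pN(\mu_{V_1}-\mu_{V_2},\sigma_{V_1}^2+\sigma_{V_2}^2)$; the $M$-identity then yields $\pE(|Y^{(1)}-Y^{(2)}|\mid V_1,V_2)=M(\mu_{V_1}-\mu_{V_2},\sigma_{V_1}^2+\sigma_{V_2}^2)$. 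Hence each summand of $\mbox{CRPS}_N^{RB}$ has expectation equal to \eqref{eq:crpsE}, and averaging the $N$ independent draws preserves this, so $\pE(\mbox{CRPS}_N^{RB}(F,y))=\mbox{CRPS}(F,y)$ by the tower property.

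The variance inequality is a Rao--Blackwell statement. I would place both estimators on a common probability space by first drawing the mixing pairs $(V_1^{(i)},V_2^{(i)})$ and then drawing $Y^{(1),i}\mid V_1^{(i)}$ and $Y^{(2),i}\mid V_2^{(i)}$. Then the $i$th summand $T_i^{RB}$ of $\mbox{CRPS}_N^{RB}$ equals $\pE(T_i\mid V_1^{(i)},V_2^{(i)})$, where $T_i=|Y^{(1),i}-y|-\tfrac12|Y^{(1),i}-Y^{(2),i}|$ is the $i$th summand of $\mbox{CRPS}_N$, precisely by the conditional computation above. The law of total variance gives $\pV(T_i)=\pE[\pV(T_i\mid V_1^{(i)},V_2^{(i)})]+\pV(T_i^{RB})\ge\pV(T_i^{RB})$. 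Because the summands are iid across $i$ in both estimators, the variance of each average is $1/N$ times the single-summand variance, and the inequality transfers to $\pV(\mbox{CRPS}_N^{RB})\le\pV(\mbox{CRPS}_N)$.

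I do not expect a serious obstacle: once the $M$-identity is in hand this is a textbook Rao--Blackwellization. The only genuine computation is the folded-normal integral of the first step, and the single point needing care is that the conditional variances \emph{add} in the second term, which holds because $Y^{(1)}$ and $Y^{(2)}$ are drawn independently (with independent mixing variables) rather than sharing a common $v$.
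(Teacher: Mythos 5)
Your proof is correct and follows essentially the same route as the paper's: the folded-normal mean identity for $M$, conditioning on the mixing variables with the tower property for unbiasedness, and the observation that $\mbox{CRPS}_N^{RB}$ is the conditional expectation of $\mbox{CRPS}_N$ given the mixing variables, so the law of total variance gives the inequality. Your explicit coupling and per-summand treatment merely spell out details the paper's Rao--Blackwell argument states more compactly.
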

The $\mbox{CRPS}_N^{RB}$ estimator can be used for the type G fields since $\pi(\proc_k(\mv{s}_0)|\mv{\data}_{-0},\mv{v},\mv{\Psi})$ is Gaussian and since we easily can sample the variances $\mv{v}$ using the Gibbs sampler. %Specifically, we take $\mu^{(i)} = \mv{A}_p \hat{\mv{\xi}}^{(i)}$ and $\sigma^{(i)} = \mv{A}_p^{\trsp}(\hat{\mv{Q}}^{(i)})^{-1}\mv{A}_p$.

To give an idea of the improvement that can be obtained by using the RB estimator, both estimators were used to compute the CRPS value for the final fold of the most general NIG model in the cross-validation study in Section~\ref{sec:applications}. Based on $N=10000$ samples, the MC variances of the two estimators were $\pV(\sqrt{N}\mbox{CRPS}_N^{RB}(F,y)) \approx 187$ and  $\pV(\sqrt{N}\mbox{CRPS}_N(F,y)) \approx 2225$. %This means that one needs to take $N \approx 222 000$ to obtain a CRPS estimate with a standard error of $0.1$ if the standard estimator is used, whereas it is enough to use $N \approx 18700$ for the RB estimator.

\section{Applications}\label{sec:applications}
In this section we illustrate for two different data sets how the multivariate Type G SPDE fields can be used for spatial modelling. The first data set consists of temperature and pressure measurements from the North American Pacific Northwest and was previously studied in \cite{gneiting2012matern} and  \cite{apanasovich2012valid}. The second data set consists of seawater temperature measurements taken at two different depths in the ocean.

For both data sets we assume the model $\mv{\data}_i = \mv{\beta} + \mv{\proc}(\mv{s}_i) + \mv{\vep}_i$ for the bivariate observations $\mv{y}_i$, where $\mv{\proc}(\mv{s}) = (\proc_1(\mv{s}), \proc_2(\mv{s}))^{\trsp}$ is a mean-zero random field,  $\mv{\beta} = (\beta_1,\beta_2)^{\trsp}$ is the expected value, and $\mv{\vep}_i$ are independent $\pN(\mv{0},\diag(\sigma_{1}^2,\sigma_{2}^2))$ variables representing measurement noise.  As reference models, we will for each data set use four Gaussian models for $\mv{x}(\mv{s})$. The first of these assumes that $x_1(\mv{s})$ and $x_2(\mv{s})$ are independent Gaussian Mat\'ern fields with covariance functions $C_{11}(\mv{h}) = \sigma_1^2\materncorr{\mv{h}}{\kappa_1}{\nu_1}$ and  $C_{22}(\mv{h}) = \sigma_2^2\materncorr{\mv{h}}{\kappa_2}{\nu_2}$ respectively. We also use the parsimonious Gaussian Mat\'ern field by \cite{gneiting2012matern} as well as two Gaussian Mat\'ern-SPDE models specified using \eqref{eq:general}, one lower-triangular and one independent model with $\rho= 0$.  For the applications, we focus on comparing the reference models to the type G$_4$ models and do not evaluate the simpler type G constructions. We do not consider the type G$_1$ and type G$_2$ models since the first data set does not have repeated measurements, and since one does not expect these models to improve the predictive performance compared to the Gaussian models because of Theorem~\ref{thm:krig}. We do not consider the type G$_3$ model since there is no specific reason for why a shared variance component would be beneficial for the considered data sets.

\begin{figure}[t]
	\begin{center}
		\includegraphics[width=0.4\textwidth]{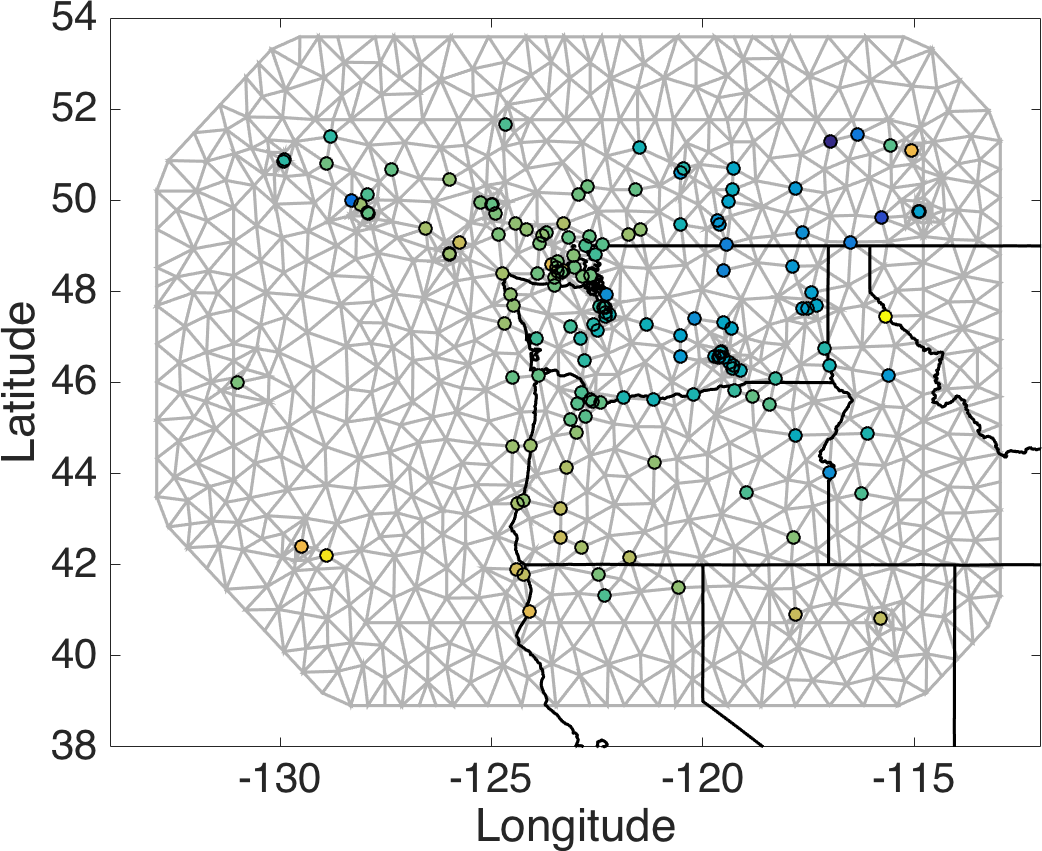}
		\raisebox{0.14\height}{\includegraphics[height=0.28\textwidth]{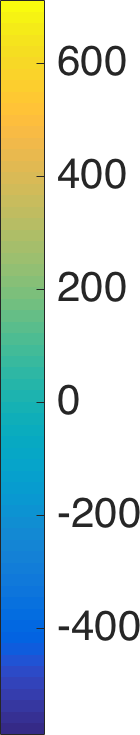}}
		\includegraphics[width=0.4\textwidth]{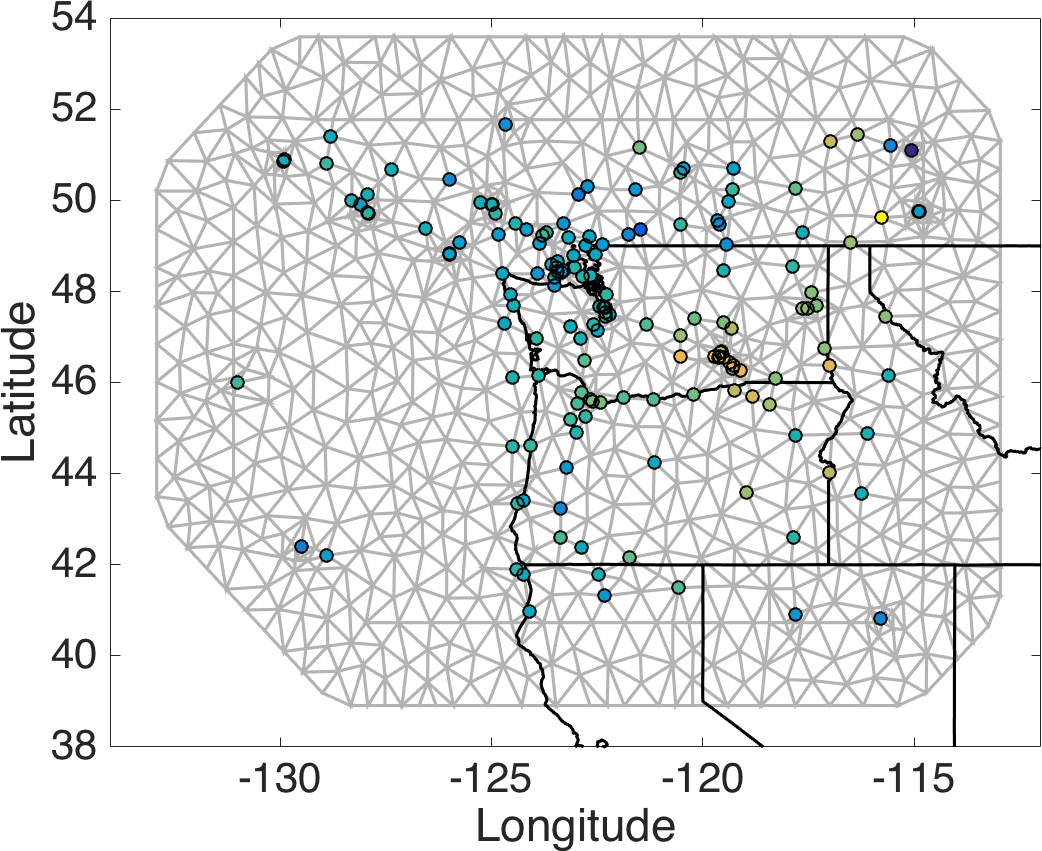}
		\raisebox{0.14\height}{\includegraphics[height=0.28\textwidth]{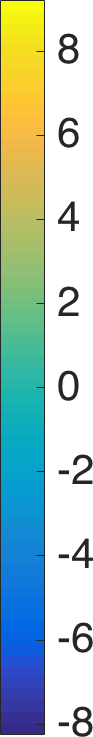}}
	\end{center}
	\caption{Measurements of pressure (left) and temperature (right) in the North American Pacific Northwest together with the mesh used for the SPDE models. The sample mean has been subtracted from the data in both cases.}
	\label{fig:data}
\end{figure}

\subsection{Temperature and pressure in the North American Pacific Northwest}
The data, shown in Figure \ref{fig:data}, consists of temperature and pressure observations, $\mv{\data}_i = (\data_{P}, \data_{T})_i^{\trsp}$ where $\data_{P}$ denotes pressure and $\data_{T}$  temperature, at 157 locations in the North American Pacific Northwest. 
Besides the four baseline models, we test four different type G Mat\'ern-SPDE models for the data. A Gaussian model for temperature seems adequate whereas the pressure data has short-range variations that is inflating the measurement noise variance (see parameter estimates in Appendix \ref{seq:parameter_estimates}), which possibly could be captured by the latent field if a non-Gaussian model was used. We therefore consider type G$_4$ models where the driving noise for the pressure is NIG distributed, whereas the driving noise for temperature is Gaussian. In order to investigate the effects of the operator matrix, we use one independent model, with $\rho = 0$, and two dependent models. The first of these is triangular with $\theta = 0$, and the second has a general operator where $\theta$ is estimated jointly with the other parameters. 

The mesh that is used for the discretization of the SPDE models is shown in Figure~\ref{fig:data}. It consists of  981 nodes and was built using \texttt{R-INLA} \citep{lindgren2015bayesian}. We fix the $\alpha$ parameters to $2$ for all SPDE models, which corresponds to $\nu = 1$ for the Mat\'ern covariances. 
The parameters of the Gaussian models are estimated using numerical maximisation of the log-likelihood function, whereas the gradient-based method from Section \ref{sec:estimation} is used for the non-Gaussian models. The gradient method is run 1000 iterations, using starting values obtained from the corresponding Gaussian model.  For the lower-triangular models, the estimation took $44$ seconds for the Gaussian model and $156$ seconds for the NIG model. These values were obtained using a \cite{Matlab} implementation of the algorithm on a Macbook Pro computer with a 2.6GHz Intel Core i7 processor. The parameter estimates for the different models are shown in Appendix~\ref{seq:parameter_estimates}. 

To compare the models, we perform a leave-one-out pseudo cross-validation study. For each observation location, the pressure and temperature values are predicted using the data from all 156 other locations using the models with parameters given in Appendix~\ref{seq:parameter_estimates}. For all models, the point estimates are computed using the expected values of the values at the held-out location conditionally on the data at all other locations. Using the posterior median as a predictor did not improve the predictive performance for this data, and we therefore omit those results. The predictive performance of the models is assessed using the median absolute error of the 157 predicted values, as well as the median CRPS. The resulting values are shown in Table~\ref{tab2}. One can note that the dependent NIG models have better predictive performance than the Gaussian models. Spatial predictions using the parsimonious Mat\'ern model and the general NIG model can be seen in Figure~\ref{fig:krig}.

\begin{table}
\centering
	\begin{tabular}{lcccccc}
		\toprule
		& Operator & Number of & \multicolumn{2}{c}{Pressure} & \multicolumn{2}{c}{Temperature} \\
		Model & matrix &  parameters & \multicolumn{2}{c}{(Pascal)} & \multicolumn{2}{c}{(degrees Celcius)} \\
		\cmidrule(r){1-3} \cmidrule(r){4-5} \cmidrule(r){6-7}
		& &    & MAE & CRPS & MAE & CRPS \\
		Independent Mat\'ern & -		& 10 & $41.632$ 		& $28.994$ 		& $0.956$ 	& $0.598$ \\
		Parsimonious Mat\'ern & -	& 10 & $39.068$ 		& $27.682$ 		& $0.921$ 	& $0.576$ \\
		Gaussian SPDE & Diagonal  	& 8 	& $38.624$ 		& $31.711$ 		& $0.917$ 	& $0.594$\\
		Gaussian SPDE & Triangular 		& 9 	& $38.856$ 		& $31.829$ 		& $0.915$ 	& $0.580$\\
		NIG SPDE & Diagonal  	& 10 &$39.101$ 		& $25.993$ 		& $0.847$ 	& $0.525$\\
		NIG SPDE & Triangular 		& 11 & $39.302$ 		& $25.776$ 		& $\mv{0.841}$ & $\mv{0.512}$\\
		NIG SPDE & General 		&12 	& $\mv{38.523}$ 	& $\mv{25.591}$ 	& $0.876$ 	& $0.514$\\
		\bottomrule
	\end{tabular}
	\caption{\label{tab2} Cross-validation results comparing the median absolute error (MAE) and median CRPS for the different models.} %The best value for each column is indicated with bold script.}
\end{table}

\begin{figure}[t]
	\begin{center}
		\begin{minipage}[t]{0.3\linewidth}
			\begin{center}
				Parsimonious Mat\'ern \\[2mm]
				\includegraphics[width=0.85\linewidth]{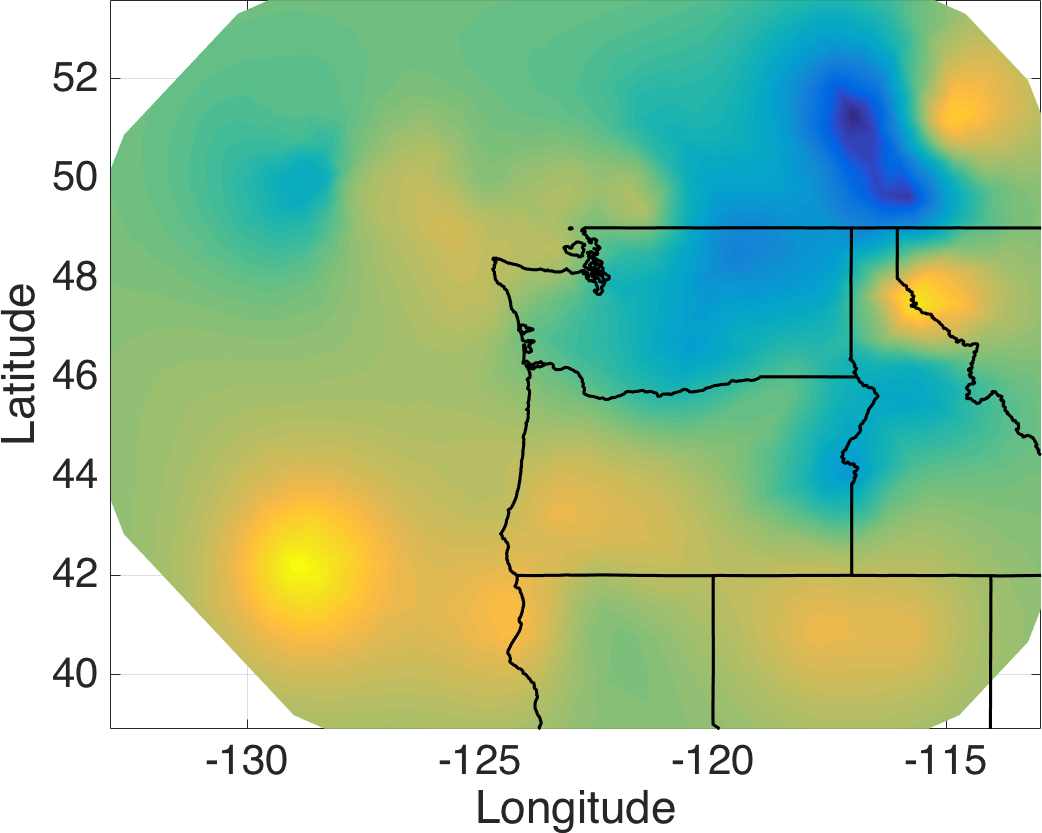}
				\raisebox{0.1\height}{\includegraphics[height=0.65\textwidth]{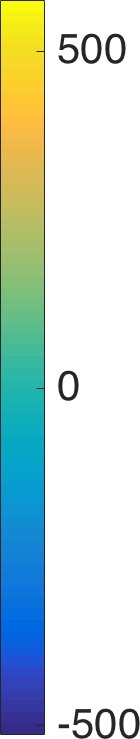}}\\[2mm]
				\includegraphics[width=0.85\linewidth]{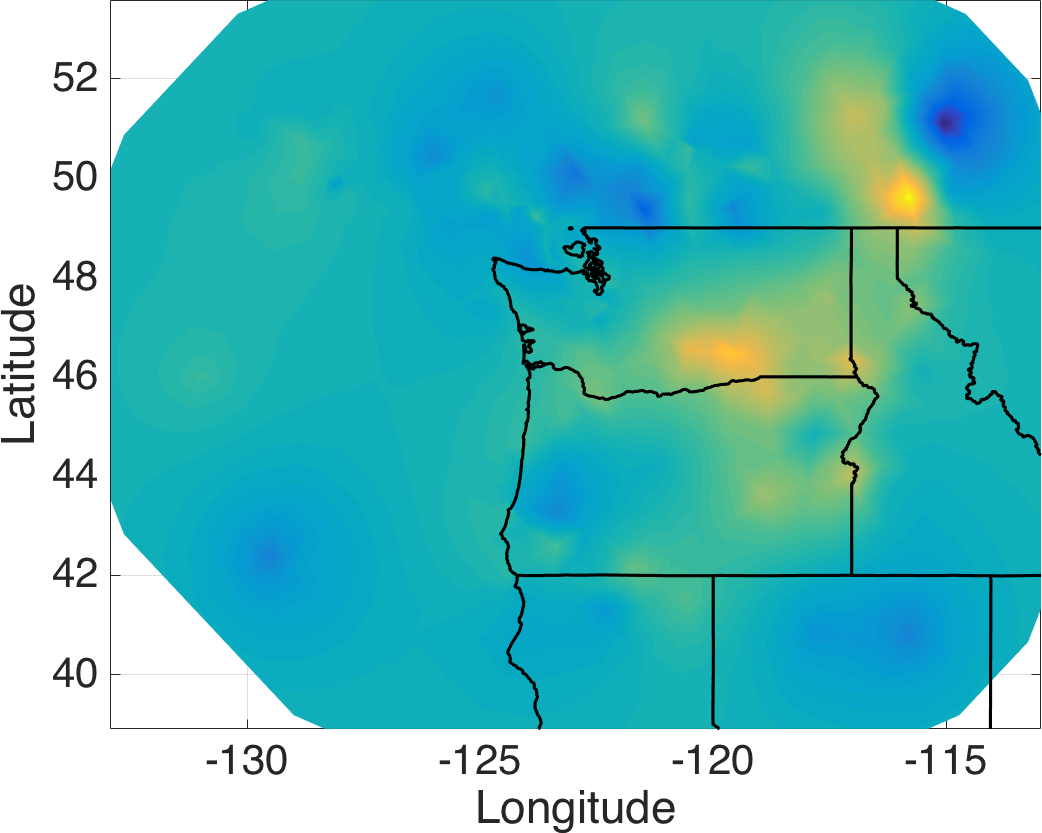}
				\raisebox{0.1\height}{\includegraphics[height=0.65\textwidth]{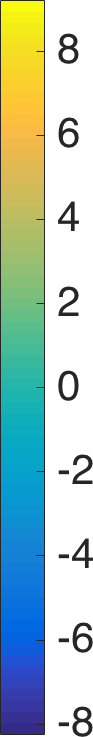}}
			\end{center}
		\end{minipage}
		\begin{minipage}[t]{0.3\linewidth}
			\begin{center}
				NIG General \\[2mm]
				\includegraphics[width=0.85\linewidth]{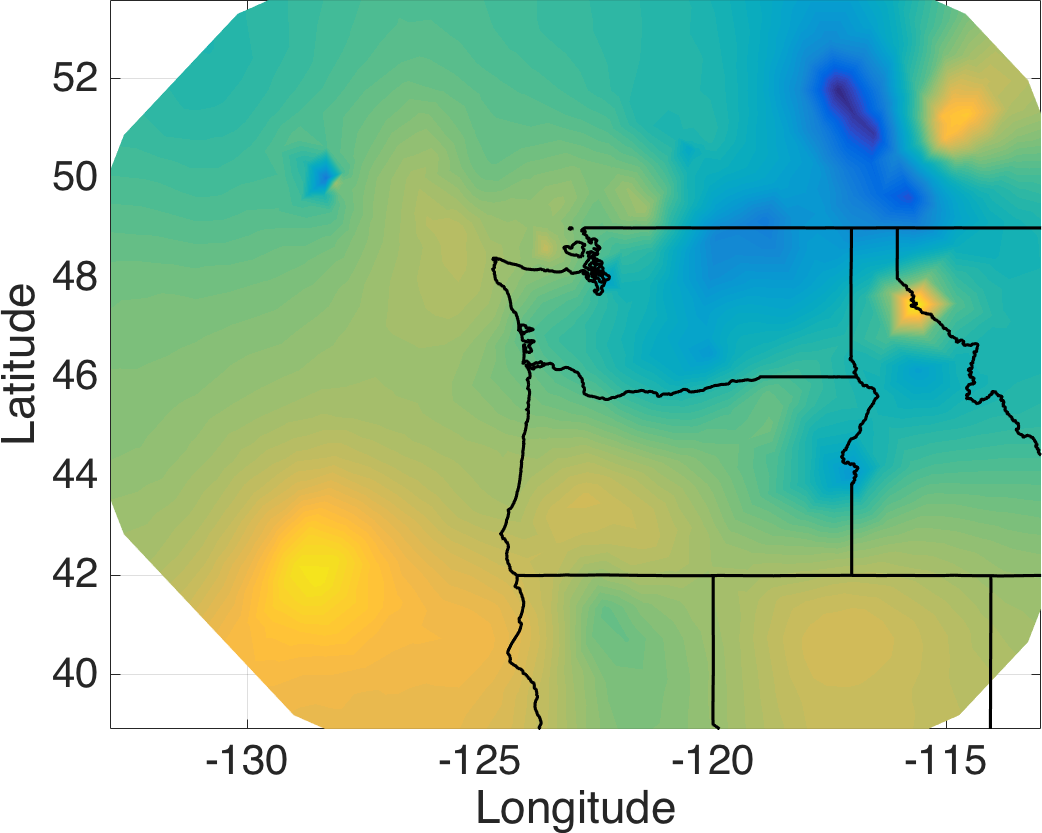}
				\raisebox{0.1\height}{\includegraphics[height=0.65\textwidth]{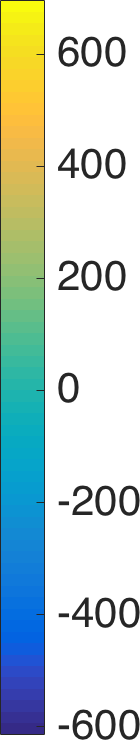}}\\[2mm]
				\includegraphics[width=0.85\linewidth]{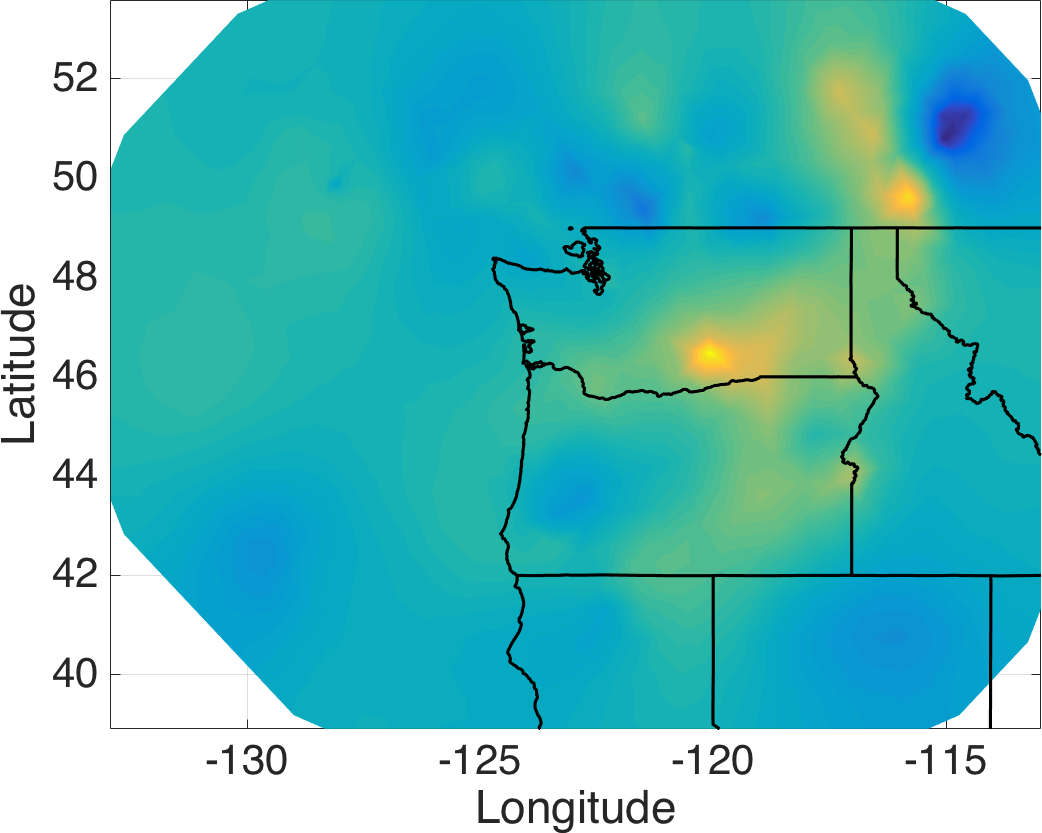}
				\raisebox{0.1\height}{\includegraphics[height=0.65\textwidth]{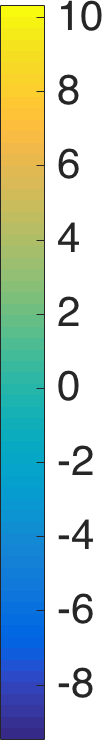}}
			\end{center}
		\end{minipage}
		\begin{minipage}[t]{0.3\linewidth}
			\begin{center}
				Difference \\[2mm]
				\includegraphics[width=0.85\linewidth]{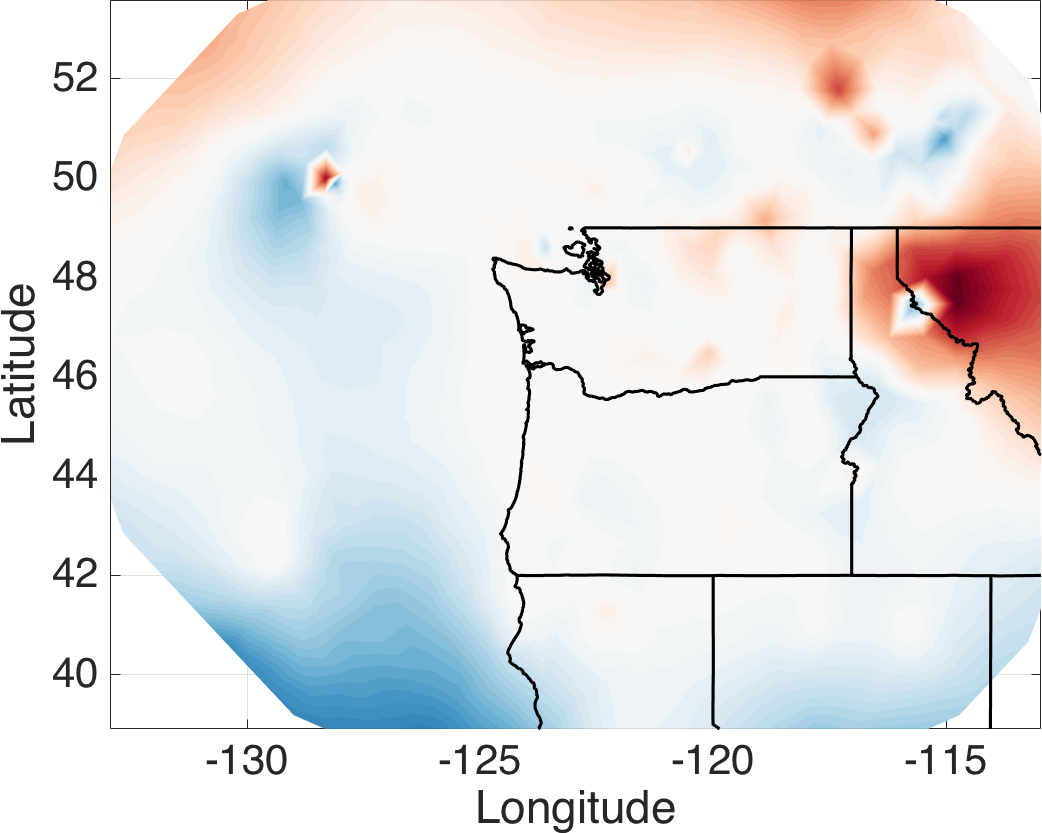}
				\raisebox{0.1\height}{\includegraphics[height=0.65\textwidth]{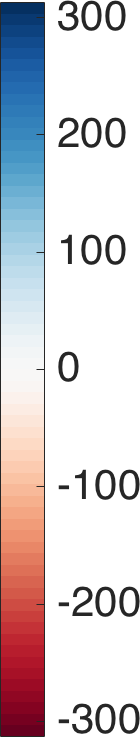}} \\[2mm]
				\includegraphics[width=0.85\linewidth]{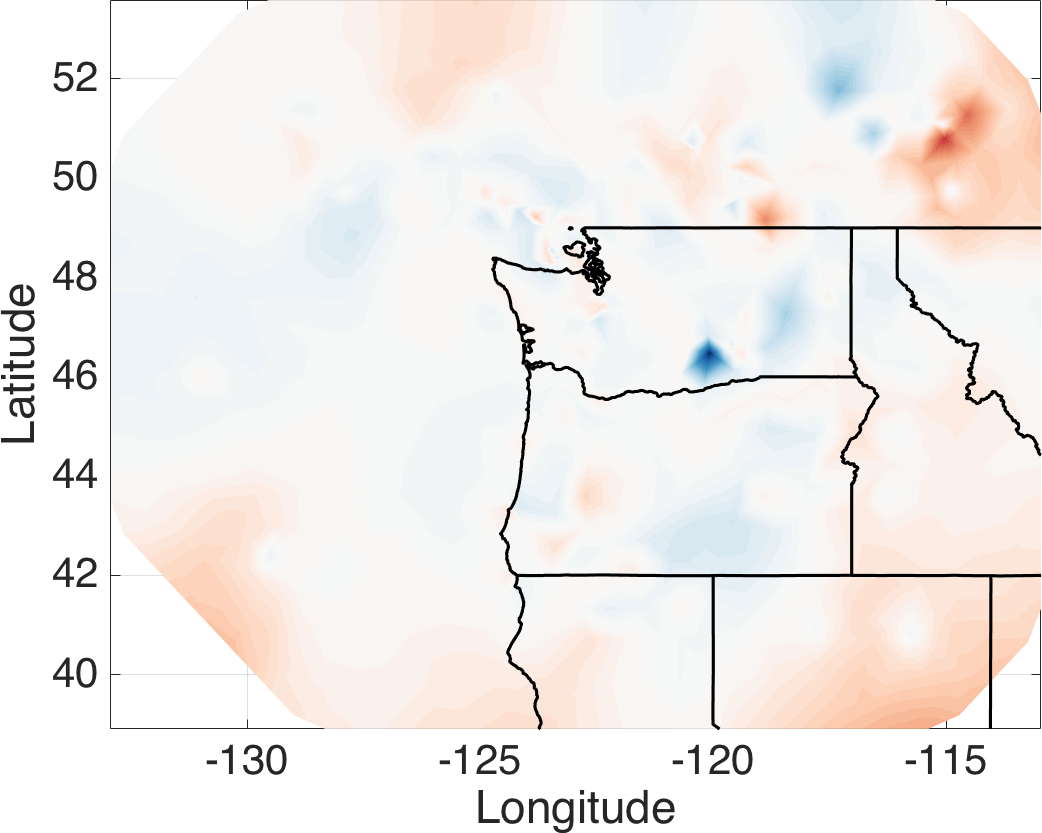}
				\raisebox{0.1\height}{\includegraphics[height=0.65\textwidth]{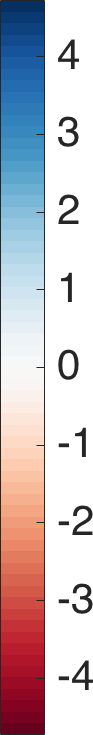}}
			\end{center}
		\end{minipage}
	\end{center}
	\caption{Estimates of pressure (top) and temperature (bottom) using the parsimonious Mat\'ern model and the NIG model with general operator matrix. The difference between the estimates is shown to the right.}
	\label{fig:krig}
\end{figure}
\subsection{Seawater temperatures}
We now consider Argo floats measurements of seawater temperature at two different depths. Since the measurements are sparse in space (and time) an important statistical task is to ``fill in the gaps'' through spatial interpolation. The data has been thoroughly analyzed from a statistical perspective by \cite{kuusela2018locally}, who noted that the data seem to be non-Gaussian at higher depths in certain areas. To investigate if the type G models could be useful for interpolation of this data, we choose two different depths, $300$ dbar and $1500$ dbar, and investigate if one could improve the joint prediction of those depths using the type G models. 

\begin{figure}[t]
	\begin{center}
		\includegraphics[width=0.5\textwidth]{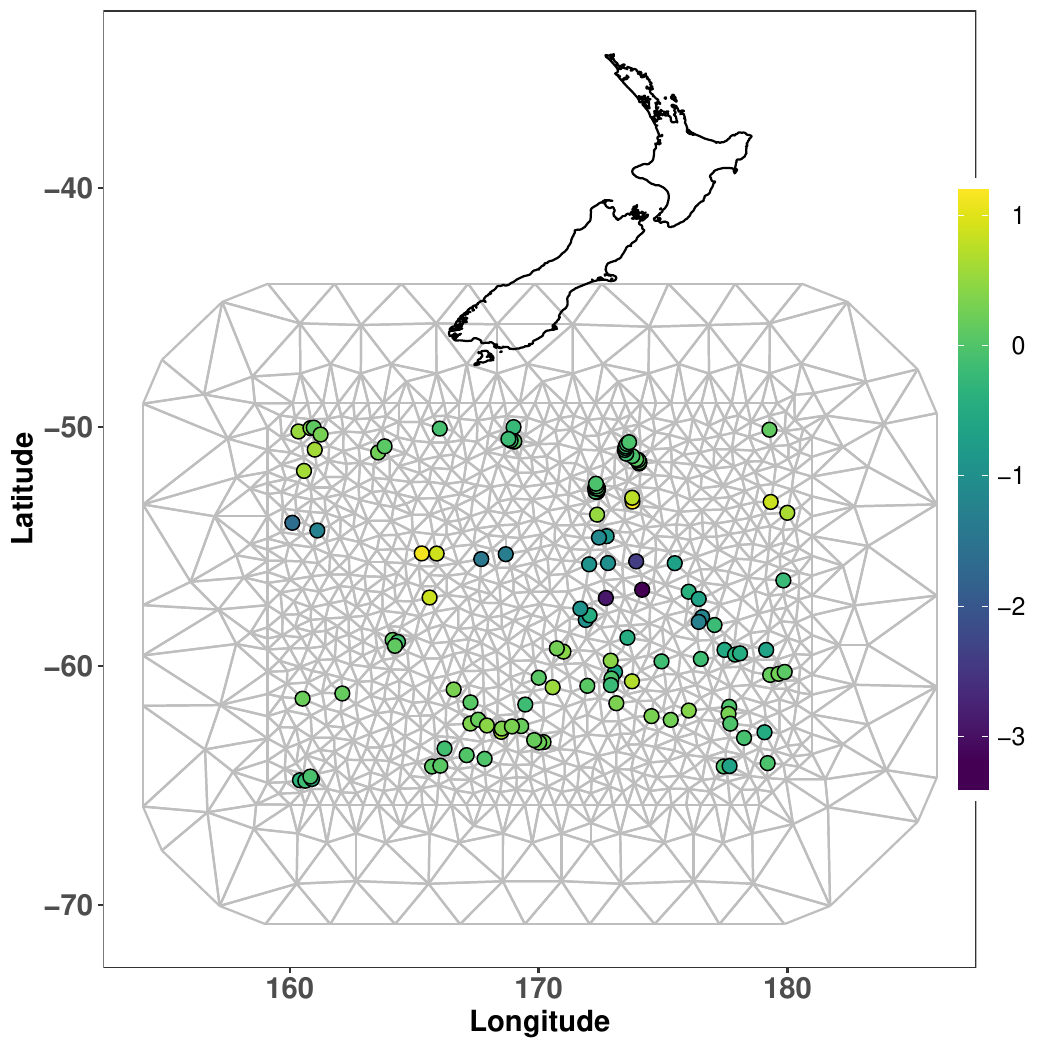}
	\end{center}
	\caption{Measurements of seawater temperature on depth $300$ db in 2016 together with the mesh used for the SPDE models. A mean field has been removed from the measurements.}
	\label{fig:dataArgo}
\end{figure}

We extract data from the month of February for three years ($2014-2016$). Since \cite{kuusela2018locally} showed that an analysis of the complete data set requires a non-stationary model, we focus on a limited spatial region south of New Zeeland to be able to use a stationary model. However, as for the Gaussian SPDE-based models, one could model non-stationarity by allowing the parameters in the operator to be spatially varying. The restriction results in a data set consisting of $312$ observations in total.
For each location we study the residuals after removing a seasonally varying mean field (the Roemmich-Gilson mean field, see \cite{kuusela2018locally}). Thus, we let $\mv{y}_i=\left(y_{300}(\mv{s}_i),y_{1500}(\mv{s}_i)\right)^{\trsp}$ for $i\in \{1,\ldots,n_t\}$ and $t\in \{2014,2015,2016\}$. Here $y_{300}$ is the residual at the depth of $300$ dbar and $y_{1500}$ is the residual at the depth of $1500$ dbar. We assume that data for the different years are independent. 

Besides the four baseline models, we test two non-Gaussian type G SPDE models. In the first, we assume that  $x_1(\mv{s})$ and $x_2(\mv{s})$ are independent univariate NIG SPDE fields. In the second, we use the type G$_4$ construction with a general operator matrix where both noise processes are NIG  distributed. For all SPDE-based models, we again fix the smoothness parameters $\alpha$ to $2$ and use the mesh shown in Figure~\ref{fig:dataArgo}, which also shows the available data for the year $2016$. 
The parameters are estimated using an R implementation of the proposed methods, available in the package \texttt{ngme}. The parameter estimates for the different models are shown in Appendix~\ref{seq:parameter_estimates}. 

To evaluate which of the tested models that performs best in terms of prediction, we again use leave-one-out pseudo cross-validation and compare median MAE and median CRPS. However, contrary to the previous application we now do the cross-validation by removing individual univariate observations instead of removing the bivariate observation pairs for each spatial location. The reason for this is that we here do not necessarily have observations of the fields at different depth in the same spatial locations. The results are shown in Table~\ref{tab:argoCrossval}, where we see that the multivariate Non-Gaussian model is clearly outperforming the other models. Thus, it seems as if one could increase the accuracy of the spatial interpolation of the Argo data using type G models. For future work it is therefore interesting to study the entire data set, where a more in-depth analysis would require a space-time model, see \cite{kuusela2018locally}.

\begin{table} 
	\centering
	\begin{tabular}{lcccccc}
		\toprule
		& Operator & Number of & \multicolumn{4}{c}{depth} \\
		Model & matrix &  parameters & \multicolumn{2}{c}{db $300$} & \multicolumn{2}{c}{db $1500$} \\
		\cmidrule(r){1-3} \cmidrule(r){4-5} \cmidrule(r){6-7}
		& &    & MAE & CRPS & MAE & CRPS \\
		Independent Mat\'ern & -		& 10 & $0.183$ 		& $0.154$ 		& $0.040$ 	& $0.029$ \\
		Parsimonious Mat\'ern & -		& 10 & $0.239$ 		& $0.162$ 		& $0.033$ 	& $\mv{0.024}$ \\
		Gaussian SPDE & Diagonal  	& 8 	& $\mv{0.187}$ 		& $0.182$ 		& $0.043$ 	& $0.032$\\
		Gaussian SPDE & Triangular  	& 9 	& $0.216$ 		& $0.169$ 		& $0.034$ 	& $0.025$\\
		NIG SPDE & Diagonal  	& 12 &$0.200$ 		& $0.159$ 		& $0.040$ 	& $0.028$\\
		NIG SPDE & General	&14 	& $0.201$ 	& $\mv{0.130}$ 	& $\mv{0.032}$ 	& $\mv{0.024}$\\
		\bottomrule
	\end{tabular}
\caption{\label{tab:argoCrossval} Cross-validation results comparing the median absolute error and median CRPS for different models for the Argo data. }
\end{table}

\section{Discussion}\label{sec:discussion}
There is a need for practically useful random field models with more general distributions than the Gaussian. Especially for multivariate data, finding good alternatives to Gaussian fields has been considered an open problem in the literature. We have introduced one such alternative by formulating a new class of multivariate random fields with flexible multivariate marginal distributions and covariance functions of Mat\'ern-type. The fields are constructed as solutions to SPDEs and can be used in a geostatistical setting where likelihood-based parameter estimation can be performed using a computationally efficient stochastic gradient algorithm. In fact, the models have the same computational advantages as their Gaussian counterparts, which facilitates applications to large data sets, although with additional cost due to MC sampling. 

Four different constructions of the non-Gaussian noise were considered, where the first two are closely related to existing approaches, such as factor-copula models and Student's t-fields. We showed that these constructions have significant disadvantages when used for spatial prediction, or on data without replicates. The more sophisticated constructions based on type G L\'evy noise do not have these disadvantages, and their combination of flexibility and computational efficiency should therefore make them attractive alternatives to Gaussian models for geostatistical applications.

The computational benefits of the finite dimensional approximations presented in Appendix \ref{sec:fem} are only available for fields with $\alpha/2\in\mathbb{N}$. This restriction of the smoothness parameters is often viewed as one of the main drawbacks of the SPDE approach, since the smoothness of the covariance function is important for the predictive performance. However, in many cases the distributional assumptions can be equally important. This was clearly shown in the application where the covariance-based models, which allow for arbitrary smoothness parameters, were outperformed by the non-Gaussian models with fixed smoothness parameters. Nevertheless, extending the approach to fields with general smoothness would increase the flexibility. As previously mentioned, this could likely be done using the rational SPDE approach \citep{bolin2017rational}, and extending that method to multivariate type G fields is thus an interesting topic for future research.

\begin{appendix}	
\section{Finite-dimensional representations}\label{sec:fem}
An advantage with the SPDE approach is that the finite element method can be used for computationally efficient approximations of the models. This was introduced by \cite{lindgren10} for Gaussian models and was  extended to SPDEs driven by type G L\'evy noise in \citep{bolin11}. In this section, we present a multivariate extension of this method. 

In the univariate case, the method is based on a basis expansion  $\proc(\mv{s}) = \sum_{j = 1}^n \weight_j\varphi_j(\mv{s})$,
where $\{\varphi_j\}$ is a collection of piecewise linear basis functions obtained by a triangulation of the (compact) spatial domain of interest $\mathcal{D}$. See Figure \ref{fig:data} for an example. Each node $\tilde{\mv{s}}_j$ in the triangulation defines a piecewise linear basis function $\varphi_j(\mv{s})$ with $\varphi_j(\tilde{\mv{s}}_j) = 1$ that is zero for all locations in triangles not directly connected to the node $\tilde{\mv{s}}_j$. For the multivariate extension, we assume that the SPDE is formulated using the representation in \eqref{eq:general}. Introduce p-dimensional basis functions $\mv{\varphi}_i^k(\mv{s}) = \varphi_i(s) \mv{e}_k$, where $\mv{e}_k$ is the $k$th column in a $p\times p$ identity matrix, and let $\mv{\proc}(s) = \sum_{j = 1}^n\sum_{k=1}^p \weight_{jk}\mv{\varphi}_j^k(\mv{s})$.

The distribution of the stochastic weights $\mv{\weight} = (\weight_{11},\ldots, \weight_{n1},\weight_{12},\ldots, \weight_{n2},\ldots,\weight_{np})^{\trsp}$ is calculated by augmenting the operators in \eqref{eq:general} with homogeneous Neumann boundary conditions and computing the weights using the Galerkin method. For  $\alpha=2$ and Gaussian noise, the result is $\mv{\weight} \sim \pN(\mv{0}, \mv{K}^{-1}\diag(\mv{h})\mv{K}^{-\trsp})$. Here $\mv{h} = \mv{1}_p\otimes (h_{1}, \ldots, h_n)^{\trsp}$ where $h_i = |\mathcal{D}_i|$ is the area of the region $\mathcal{D}_i = \{\mv{s} : \varphi_i(\mv{s}) \geq \varphi_j(\mv{s}) \,\forall j\neq i\}$. Further,
\begin{equation}\label{eq:K}
\mv{K} = (\mv{\BM}_p\otimes \mv{I}_n)\diag(\mv{L}_{\alpha_1}(\sigma_1,\kappa_1),\ldots,\mv{L}_{\alpha_p}(\sigma_p,\kappa_p)),
\end{equation}
is the discretized operator matrix where $\mv{I}_n$ denotes an identity matrix of size $n\times n$, and $\mv{L}_{\alpha_k}(\sigma_k,\kappa_k) = c_k (\mv{G} + \kappa_k^2\mv{C})$ is the discretized operator for the $k$th dimension. The matrices $\mv{C}$ and $\mv{G}$ have elements $C_{ii} = \scal{\varphi_i}{\varphi_i}$ and $G_{ij} = \scal{\nabla\varphi_i}{\nabla\varphi_j}$, respectively, where $\scal{f}{g}$ denotes the inner product on $\R^d$ and $\nabla$ is the gradient operator. 

In the type G case, the corresponding result is 
\begin{align*}
\mv{\weight}|\mv{\var} &\sim \pN(\mv{K}^{-1}((\diag(\mv{\gamma})\otimes \mv{I}_n)\mv{h}+(\diag(\mv{\mu})\otimes \mv{I}_n)\mv{\var}), \mv{K}^{-1}\diag(\mv{\var})\mv{K}^{-\trsp}),
\end{align*}
where $\mv{\var} = (\mv{\var}_1^{\trsp},\ldots, \mv{\var}_p^{\trsp})^{\trsp}$ and $\mv{1}_p$ is a vector with $p$ ones.
The vector  $\mv{\var}_k = (\var_1^k,\ldots, \var_n^k)$ is the discretized variance process for the $k$th dimension, with elements 
\begin{equation*}
\var_i^k  = \int \mathbb{I}(\mv{s}\in \mathcal{D}_i) v_k(d\mv{s}) = \begin{cases}
h_i v & \mbox{type G$_1$,}\\
h_i v_k & \mbox{type G$_2$,}\\
M_{v}(\mathcal{D}_i) & \mbox{type G$_3$,}\\
M_{v_k}(\mathcal{D}_i) & \mbox{type G$_4$},
\end{cases}
\end{equation*}
where $M_v(\cdot)$ denotes the random measure associated with $v$. The distribution of $\mv{v}$ is in general not explicit for type G$_3$ or type G$_4$, unless the distribution of $v_k(\mv{s})$ is closed under convolution. An example of a distribution that has this property is the IG distribution that is used in for the NIG process. 

\begin{example}
	The following equation summarizes the distribution of $\mv{v}$ for the different versions of the NIG processes from Section \ref{sec:nig}.
	\begin{equation}\label{eq:Vdist}
	\mv{v} \sim  \begin{cases}
	\mv{h} \otimes ( \mv{1}_K \otimes IG(\nignu^2,\nignu^2) )& \mbox{type G$_1$,}\\
	\mv{h} \otimes IG(\mv{\nignu}^2,\mv{\nignu}^2) & \mbox{type G$_2$,}\\
	\mv{1}_K  \otimes IG(\nignu^2,\nignu^2\mv{h}^2) & \mbox{type G$_3$,}\\
	IG(\mv{\nignu}^2 \otimes \mv{1}_n,\mv{\nignu}^2\otimes \mv{h}^2) & \mbox{type G$_4$}.
	\end{cases}
	\end{equation}
	Here the notation $\mv{v} \sim IG(\mv{a}, \mv{b})$ is a compact way of writing a vector with independent components $v_i \sim IG(a_i, b_i)$.
\end{example}

The discretization above assumes $\alpha_i = 2$. In the case of $\alpha_i/2\in\mathbb{N}$, each operator is an integer power of the operator for $\alpha_i = 2$ and the method can then be combined with the iterated finite element discretization by \cite{lindgren10} to obtain similar finite dimensional approximations with Markov properties. The only difference in this case is that $\mv{L}_{\alpha_k}(\sigma_k,\kappa_k) = c_k \mv{C}(\mv{C}^{-1}\mv{G} + \kappa_k^2\mv{I})^{\alpha_k}$.

\section{Gradients of the log-likelihood}\label{sec:gradients}
In this section, the gradients needed for the estimation method from Section~\ref{sec:estimation} are presented. The parameters we need the gradients for are $\mu_k$ and  $\sigma_k$ for $k=1,\ldots, p$, the regression parameters $\mv{\beta}$, the parameters of the differential operator matrix $\mv{K}$, as well as any parameters of $\pi(\mv{v})$. 

To simplify notation, let $[ \hat{\mv{E}}_1^{\trsp},\ldots, \hat{\mv{E}}_p^{\trsp}]^{\trsp} = \mv{K}\hat{\mv{\xi}}$, where 
$$
\hat{\mv{\xi}} = \hat{\mv{Q}}^{-1} \left(\sum_{k=1}^p\frac{1}{\sigma_{e,k}^2} \mv{A}_k^{\trsp} \mv{\data}_k + \mv{K}^{\trsp} \diag(\mv{\var})^{-1} (\mv{\mu}\otimes\mv{I}_n)(\mv{\var} - \mv{h})\right)
$$ 
is the posterior mean of $\mv{w}|\mv{v},\mv{\Psi}$ and $\hat{\mv{Q}}  = \mv{K}^{\trsp} \diag(\mv{\var})^{-1} \mv{K}+ \sum_{k=1}^p\frac{1}{\sigma_{e,k}^2} \mv{A}_k^{\trsp} \mv{A}_k$.
All gradients are obtained by first computing $\log\pi(\mv{\var}|\mv{\data}, \mv{\Psi}) = \log\int \pi(\mv{\var},\mv{w}|\mv{\data}, \mv{\Psi})\md\mv{w}$. 
This integral is straight-forward to compute since
\begin{align*}
\log\pi( \mv{\var},\mv{\weight} | \mv{\data},\mv{\Psi}) =& \sum_{k=1}^p\left(-m\log\sigma_{e,k} -\frac{1}{2\sigma_{e,k}^2} \left( \mv{\data}_k- \mv{A}_k\mv{\weight} - \mv{B}\mv{\beta} \right)^{\trsp} \left( \mv{\data}_k- \mv{A}_k \mv{\weight} - \mv{B} \mv{\beta} \right)\right)  \\
& - \frac{1}{2}  \left( \mv{K}\mv{\weight} - (\mv{\mu}\otimes\mv{I}_n)(\mv{\var}-\mv{h})\right)^{\trsp}  \diag(\mv{\var})^{-1} \left( \mv{K}\mv{\weight} - (\mv{\mu}\otimes\mv{I}_n)(\mv{\var}-\mv{h}) \right),   \\
&+ |\mv{K}| - \mv{1}^{\trsp}\log(\mv{v}) + \log(\pi_{\mv{\Psi}}(\mv{v}))  + \mbox{const.}, % \sum_{k=1}^p  \sum_{i=1}^n \left(\log\nignu_k -2\log \var^k_i-\frac{(h^k_i  \nignu_k)^2}{2\var^k_i} - \frac{\nignu_k^2}{2}\var^k_i + h^k_i \nignu_k^2\right).
\end{align*}
Standard matrix calculus is then used to differentiate $\log\pi(\mv{\var}|\mv{\data}, \mv{\Psi})$ with respect to the parameters to obtain the required gradients. For brevity we omit the details of these computations and just present the results. The gradients for $\mu_k$, $\sigma_{e,k}$, and $\mv{\beta}$ are
\begin{align*}
\nabla_{\mu_k}\log \pi( \mv{\var}| \mv{\data}, \mv{\Psi}) &=  \left( -\mv{h}_k + \mv{\var}_k \right) ^{\trsp} \diag(\mv{\var}_k)^{-1} \left(  \hat{\mv{E}} - \left(- \mv{h}_k + \mv{\var}_k \right)  \mu_k \right), \\
\nabla_{\sigma_{e,k}}\log \pi(\mv{\var}| \mv{\data}, \mv{\Psi}) &=  - \frac{n}{\sigma_{e,k}} + \frac{1}{\sigma_{e,k}^3} \| \mv{\data}_k- \mv{A}_k\hat{\mv{\xi}} - \mv{B}\mv{\beta} \|^2 + \trace(\mv{A}^{\trsp}\mv{A}\hat{\mv{Q}}^{-1}), \\
\nabla_{\mv{\beta}}\log \pi(\mv{\var}| \mv{\data}, \mv{\Psi}) &=  \sum_{k=1}^p \frac{1}{\sigma_{e,k}^2} \left( \mv{\data}_k- \mv{A}_k\hat{\mv{\xi}} - \mv{B} \mv{\beta} \right)^{\trsp}  \mv{B} .
\end{align*}
For a parameter $\psi_K$ in the operator, the gradient is
\begin{align*}
\nabla_{\psi_K}\log \pi( \mv{\var}| \mv{\data}, \mv{\Psi}) =&  \trace(\mv{K}_{\psi_K}\mv{K}^{-1}) -   \hat{\mv{\xi}}^{\trsp} \mv{K}_{\psi_K}^{\trsp} \diag(\mv{\var})^{-1}\mv{K}\hat{\mv{\xi}}  - \trace(\mv{K}_{\psi_K}^{\trsp} \diag(\mv{\var})^{-1}\mv{K}\hat{\mv{Q}}^{-1}) \\
& +   \hat{\mv{\xi}}^{\trsp} \mv{K}_{\psi_K}^{\trsp} \diag(\mv{\var})^{-1}(\mv{\mu}\otimes\mv{I}_n)\left( -\mv{h} + \mv{\var} \right) ,
\end{align*}
where  $\trace(\cdot)$ denotes the matrix trace, and where $\mv{K}_{\psi_K}$ denotes the derivative of  $\mv{K}$ with respect to $\psi_K$. Using that $\mv{K}$ is on the form given in \eqref{eq:K}, one gets
$$
\mv{K}_{\psi_K} = \begin{cases}
(\mv{\BM}_{\psi_K}\otimes \mv{I}_n) \diag(\mv{L}_1,\ldots, \mv{L}_p) & \mbox{$\psi_K = \theta_i, \rho_{ij}$}, \\
-\sigma_j^{-1}(\mv{\BM}\otimes \mv{I}_n) (\mv{L}_j\otimes\diag(\mv{e}_j)) & \mbox{$\psi_K = \sigma_j$}, \\
\kappa_j^{-1}(\mv{\BM}\otimes \mv{I}_n) (\mv{L}_j(\alpha_j\kappa_j^2(\mv{C}^{-1}\mv{G}+\kappa_j^2\mv{I})^{-1}-\nu_j)\otimes\diag(\mv{e}_j)) & \mbox{$\psi_K = \kappa_j$}, \\
\end{cases}
$$
where $\mv{\BM}_{\psi_K}$ is the derivative of $\mv{\BM}$ with respect to $\psi_K$ and $\mv{L}_i$ denotes $\mv{L}_{\alpha_i}(\sigma_i,\kappa_i)$.

To take full advantage of the sparsity of the matrices, one should compute $\trace(\mv{A}^{\trsp}\mv{A}\hat{\mv{Q}}^{-1})$ and $\trace(\mv{K}_{\psi_K}^{\trsp} \diag(\mv{\var})^{-1}\mv{K}\hat{\mv{Q}}^{-1})$ without inverting $\hat{\mv{Q}}$. To do so,  note that $\mv{A}^{\trsp}\mv{A}$ and $\mv{K}_{\psi_K}^{\trsp} \diag(\mv{\var})^{-1}\mv{K}$ are sparse matrices with non-zero elements only at positions in the matrices where also $\hat{\mv{Q}}$ is non-zero. This means that it is enough to compute the elements of $\hat{\mv{Q}}^{-1}$ only at the positions where $\hat{\mv{Q}}$ is non-zero, which can be done efficiently using the method by \cite{RueMartino07}.

Finally, the expression for the gradient of the parameters for the distribution of $\mv{v}$ depends on which distribution that is used. The following example gives the results for the NIG processes.
\begin{example}
	For the NIG processes in Section \ref{sec:nig}, the gradient of the likelihood with respect to the parameter $\nignu$ in the type G$_1$ and type G$_3$ cases is
	\begin{equation*}
	\nabla_{\nignu}\log \pi( \mv{\var}|\mv{\data}, \mv{\Psi}) = 
	\begin{cases}
	\frac{1}{2\nignu} - \frac{1}{2}\left(\var  +\var^{-1} \right)  +   1& \mbox{type G$_1$,} \\
	\frac{n}{2\nignu} - \frac{1}{2}\left(\mv{\var}   + \mv{h}^2 \cdot \mv{\var}^{-1} \right) \mv{1}  +   \mv{h}^{\trsp}  \mv{1}  & \mbox{type G$_3$,} 
	\end{cases} 
	\end{equation*}
	and the gradient of the likelihood with respect to the parameters $\nignu_k,k=1,\ldots,p$ in the type G$_2$ and type G$_4$ cases is
	\begin{equation*}
	\nabla_{\nignu_k}\log \pi( \mv{\var}|\mv{\data}, \mv{\Psi}) = 
	\begin{cases}
	\frac{1}{2\nignu_k} - \frac{1}{2}\left(\var_k  +\var_k^{-1} \right)   +   1 & \mbox{type G$_2$,} \\
	\frac{n}{2\nignu_k} - \frac{1}{2}\left(\mv{\var}_k   + \mv{h}^2_k \cdot \mv{\var}^{-1}_k \right) \mv{1}  +   \mv{h}_k^{\trsp}  \mv{1}  & \mbox{type G$_4$.}
	\end{cases} 
	\end{equation*}
\end{example}

\section{Pseudo-code for the sampling methods}\label{sec:pseudo}

Algorithm \ref{alg:Gibbs} describes one iteration of the Gibbs sampler that is used to generate the samples used for parameter estimation and prediction. On Line 4 and Line 5 of the algorithm one should not compute the inverse $\hat{\mv{Q}}^{-1}$ but instead use an efficient sampling method for GMRFs based on sparse Cholesky factorization \cite[see][]{rue1}. 
The general form of the distribution of $\mv{v}$ given $\mv{E}=[\mv{E}_1^{\trsp},\ldots,\mv{E}_p^{\trsp}]^{\trsp} = \mv{K} \mv{\weight}$ is shown in Algorithm \ref{algV}, where one can see how the different type G models affect how $v$ is sampled.

\begin{algorithm}[h]
	\footnotesize
	%\algsetup{linenosize=\footnotesize}
	\caption{Gibbs sampler}
	\label{alg:Gibbs}
	\begin{algorithmic}[1]
		\Procedure{GIBBS}{$\mv{y}, \mv{B}, \mv{\var}, \mv{\Psi}, \mv{A}_1, \ldots, \mv{A}_p, \mv{h}$,typeG}
		\State $\mv{K}  \gets BuildOperator(\mv{\Psi})$ (Construct $\mv{K}$ as outlined in Appendix \ref{sec:fem})
		\State $\hat{\mv{Q}}  \gets \mv{K}^{\trsp} \diag(\mv{\var})^{-1} \mv{K}+ \sum_{k=1}^p\frac{1}{\sigma_{e,k}^2} \mv{A}_k^{\trsp} \mv{A}_k$
		\State  $\hat{\mv{\xi}} \gets  \hat{\mv{Q}}^{-1} \left(\sum_{k=1}^p\frac{1}{\sigma_{e,k}^2} \mv{A}_k^{\trsp} \left(\mv{\data}_k -\mv{B}\mv{\beta}\right) + \mv{K}^{\trsp}\diag(\mv{\var})^{-1}(\mv{\mu}\otimes\mv{I}_n)(\mv{\var}-\mv{h})\right)$
		\State Sample $\mv{\weight}   \sim \pN(\hat{\mv{\xi}} , \hat{\mv{Q}}^{-1})$
		\State $[\mv{E}_1^{\trsp},\ldots,\mv{E}_p^{\trsp}]^{\trsp} \gets \mv{K} \mv{\weight}$
		\State Sample $\mv{\var}   \sim \pi(\mv{\var}|\mv{E}_1,\ldots, \mv{E}_p,\mv{\Psi})$ using Algorithm \ref{algV} 
		\State \Return $\{\mv{\weight}, [\mv{\var}_1^{\trsp},\ldots,\mv{\var}_p^{\trsp}]^{\trsp},\hat{\mv{\xi}},\hat{\mv{Q}} \}$
		\EndProcedure
	\end{algorithmic}
\end{algorithm}

\begin{algorithm}[h]
	\footnotesize
	%\algsetup{linenosize=\footnotesize}
	\caption{Variance sampler}
	\label{algV}
	\begin{algorithmic}[1]
		\Procedure{SampleV}{$ \mv{\Psi}, \mv{E}_1, \ldots, \mv{E}_p, \mv{h}$,typeG}
		\If{typeG=1}
		\State Sample $\var   \sim   \pi(\var) \prod_{i=1}^m \prod_{k=1}^p \pN(E_{ik}; h_{ik}(v - 1)  \mu_k),h_{ik} v)$
		\ffor{$k=1,\dots,p$} $\mv{v_k} \gets \mv{h}_k \var;$ \Endffor
		\ElsIf{typeG=2}
		\For{$k=1,\dots,p$}
		\State Sample $\var_k     \sim   \pi(\var) \prod_{i=1}^m  \pN(E_{ik}; h_{ik}(v_k - 1)  \mu_k),h_{ik} v_k)$
		\State $\mv{v_k} \gets \mv{h}_k \var_k$
		\EndFor
		\ElsIf{typeG=3}
		\ffor{$i=1,\dots,m$}  
		Sample $\var_i   \sim   \pi(v_{i}) \prod_{k=1}^p \pN(E_{ik}; (v_i - h_{ik}) \mu_k, v_i)$
		\Endffor
		\ffor{$k=1,\dots,p$} $\mv{v}_k \gets \mv{\var};$	\Endffor
		\ElsIf{typeG=4}
		\For{$k=1,\dots,p$}
		\ffor{$i=1,\dots,m$}  Sample $\var_{ik}   \sim  \pi(v_{ik})\pN(E_{ik}; (v_i - h_{ik}) \mu_k, v_i) $ \Endffor
		\EndFor
		\EndIf
		\State \Return $\{[\mv{\var}_1^{\trsp},\ldots,\mv{\var}_p^{\trsp}]^{\trsp} \}$
		\EndProcedure
	\end{algorithmic}
\end{algorithm}

\section{Parameter estimates for the applications}\label{seq:parameter_estimates}

The parameter estimates for the two covariance-based models in the first application are shown in Table~\ref{tab:cov}, and the parameter estimates for the SPDE models are shown in Table~\ref{tab1}. The main reason for the differences between our parameter estimates and those by \cite{gneiting2012matern} and \cite{apanasovich2012valid} is that they assumed $\mv{\beta} = \mv{0}$ whereas we estimate this parameter jointly with the other parameters. The reason for doing this is that the comparison with the type G models otherwise could be considered to be unfair, since the type G models allow for skewness that could capture some of the effects that cause the non-zero estimates of the means.

\begin{table}
	\centering
	\begin{tabular}{lccccccccccccc}
		\toprule
		Model &  $\beta_1$ & $\beta_2$ & $\sigma_1$  & $\sigma_2$ & $\kappa_1$ & $\kappa_2$  & $\nu_1$ & $\nu_2$ & $\rho$ & $\sigma_{1e}$ & $\sigma_{2e}$ \\
		\midrule
		Independent   	& 136 & -0.53 & 218 & 2.64 & 5.54  & 0.89 & 20 	&  0.58 	& - 		& 71.8 & 0.00  \\
		Parsimonious 	& 150 & -0.48 & 216 & 2.56 & 1.03  & 1.03 & 1.36 & 0.60 	& -0.46 & 68.5 & 0.00  \\
		\bottomrule
	\end{tabular}
	\caption{\label{tab:cov} Parameter estimates for the covariance-based models. For the independent model, the value of $\nu_P$  was limited to the interval $0 \leq \nu_p \leq 20$ for numerical stability. }
\end{table}

\begin{table}[t]
	\centering
		\begin{tabular}{lcccccccccccccc}
			\toprule
			Noise &  $\beta_1$ & $\beta_2$ & $\sigma_1$  & $\sigma_2$ & $\kappa_1$ & $\kappa_2$  & $\rho$ & $\sigma_{1e}$ & $\sigma_{2e}$ & $\theta$ & $\mu_1$ & $\nignu_1$\\
			\midrule
			GG  	& $154$ & $-0.55$ & $211$ & $2.56$ & $0.74$ & $1.11$ & $(0)$     & $61.4$ & $0.58$ & $-$        & $-$         & $-$\\
			GG 	& $149$ & $-0.52$ & $202$ & $2.48$ & $0.82$ & $1.26$ & $-0.52$ & $60.5$ & $0.52$ & $-$       & $-$          & $-$\\
			NG	& $148$ & $-0.48$ & $222$ & $2.74$ & $0.72$ & $1.12$ & $(0)$     & $45.4$ & $0.75$ & $(0)$    & $-0.014$ & $0.21$\\
			NG	& $140$ & $-0.42$ & $212$ & $2.73$ & $0.74$ & $1.19$ & $-0.42$ & $45.3$ & $0.74$ & $(0)$    & $-0.053$ & $0.21$\\
			NG	& $147$ & $-0.59$ & $220$ & $2.87$ & $0.77$ & $1.18$ & $-0.42$ & $42.3$ & $0.72$ & $-0.89$ & $-0.065$ & $0.21$\\
			\bottomrule
	\end{tabular}
	\caption{\label{tab1} Parameter estimates for the SPDE models. Dashes and parentheses respectively indicates that the parameters are not present and not estimated. GG denotes a Gaussian model whereas NG denotes that NIG noise is used for pressure and Gaussian noise for temperature. }
\end{table}

The parameter estimates for the SPDE-based models for the Argo data are shown in Table~\ref{tab:argospde}, whereas Table~\ref{tab:argocov} shows the parameter estimates for the covariance-based models.

\begin{table}[t]
	\centering
	\resizebox{\linewidth}{!}{
		\begin{tabular}{lcccccccccccccc}
			\toprule
			Noise &  $\beta_1$ & $\beta_2$ & $\tau_1$  & $\tau_2$ & $\kappa_1$ & $\kappa_2$  & $\rho$ & $\sigma_{1e}$ & $\sigma_{2e}$ & $\theta$ &  
			$\mu_1$ & $\mu_2$ & $\nignu_1$ & $\nignu_2$ \\
			\midrule
			GG  	& $0.00$  & $0.01$ & $0.89$ & $5.02$ & $1.14$ & $1.15$ & $(0)$    & $0.39$ & $0.03$ & $-$      & $-$        & $-$         & $-$       & $-$\\
			GG 	& $-0.01$ & $0.01$ & $0.39$ & $2.23$ & $1.18$ & $1.09$ & $0.97$ & $0.35$ & $0.02$ & $-$      & $-$        & $-$         & $-$       & $-$\\
			NN	& $0.01$  & $0.05$ & $0.74$ & $5.56$ & $1.88$ & $1.39$ & $(0)$    & $0.04$ & $0.01$ & $(0)$   & $-0.03$ & $-0.11$ & $0.09$  & $0.73$\\
			NN	& $0.06$  & $0.04$ & $0.37$ & $2.99$ & $1.55$ & $0.96$ & $1.31$ & $0.04$ & $0.03$ & $0.04$ &$-0.02$  & $-3.24$ & $0.27$ & $6.61$\\
			\bottomrule
	\end{tabular}}
	\caption{\label{tab:argospde} Parameter estimates for the SPDE models for the Argo data. Dashes and parentheses respectively indicates that the parameters are not present and not estimated. GG denotes a Gaussian model whereas NN denotes a NIG model. }
\end{table}

\begin{table}[t]
	\centering
	\begin{tabular}{lccccccccccc}
		\toprule
		Model &  $\beta_1$ & $\beta_2$ & $\sigma_1$  & $\sigma_2$ & $\kappa_1$ & $\kappa_2$  & $\nu_1$ & $\nu_2$ & $\rho$ & $\sigma_{1e}$ & $\sigma_{2e}$ \\
		\midrule
		Independent   & $0.00$  & $0.00$ & $0.72$ & $0.13$ & $1.32$ & $1.49$ & $1.01$ & $1.08$ & -           & $0.41$ & $0.03$\\
		Parsimonious & $-0.01$ & $0.00$ & $0.80$ & $0.14$ & $0.63$ & $0.63$ & $0.32$ & $0.51$ & $0.65$ & $0.24$ & $0.00$\\
		\bottomrule
	\end{tabular}
	\caption{\label{tab:argocov} Parameter estimates for the covariance-based models for the Argo data. }
\end{table}

\section{Proofs}\label{sec:proofs}
Most of the proofs are based on that the fractional operator 
$(\kappa^2-\Delta)^{\alpha/2}$ on $\mathbb{R}^d$  is defined through its Fourier transform \citep[see][]{lindgren10}, 
$(\mathcal{F}((\kappa^2-\Delta)^{\alpha/2}\varphi)(\mv{k}) = (\kappa^2+\|\mv{k}\|)^{\alpha/2}(\mathcal{F}(\varphi))(\mv{k})$. 
The operator is well-defined for example if $\varphi$ is a tempered distribution. This is important for the definition of the SPDE in \eqref{eq:model} since the right-hand side is white noise, which does not have pointwise meaning. Thus, the equation \eqref{eq:model} is understood in the weak sense, $(\kappa^2-\Delta)^{\alpha/2}X(\varphi)=\noise(\varphi)$,  where $\varphi$ is a function in an appropriate space of test functions, and $\noise(\varphi) = \int\varphi(\mv{s})\mathcal{M}(\md\mv{s})$. The kernel of the operator $\mathcal{K} = (\kappa^2-\Delta)^{\frac{\alpha}{2}}$ is non-empty for $\alpha\geq 2$ and there is therefore an implicit assumption on boundary conditions \citep[see][]{lindgren10}.%for more details about this and the mathematical details of the construction. 

\begin{proof}[Proof of Proposition \ref{thm1}]
	%\subsection{Proof of Proposition \ref{thm1}}
	Due to the mutual independence of the noise processes, the power spectrum of driving noise is $\mv{S}_{\mathcal{M}} = (2\pi)^{-d}\mv{I}$.  Let
	$$
	\mv{\mathcal{H}}(\mv{k}) = \mathcal{F}(\mv{\mathcal{K}})(\mv{k}) = \mv{D} \mathcal{F}(\diag(\mathcal{L}_1,\ldots, \mathcal{L}_p)) = \mv{D} \mv{\mathcal{H}}_D(\mv{k}),
	$$
	where $\mv{\mathcal{H}}_D(\mv{k})$ is a diagonal matrix with elements $\mv{\mathcal{H}}_D(\mv{k})_{ii} = \mathcal{F}(\mathcal{L}_i) = (\kappa_i^2 + \|\mv{k}\|)^{\alpha_i/2}$. 
	The power spectrum of $\mv{\proc}$ can then be written as 
	\begin{align}
	\mv{S}_{\mv{\proc}}(\mv{k}) &= 
	(2\pi)^{-d}\mathcal{H}_D(\mv{k})^{-1}\mv{\BMi}\mv{\BMi}^{T}\mathcal{H}_D(\mv{k})^{-1}. \label{eq:Xspectrum}
	\end{align}
	Evaluating a single element of $\mv{S}_{\mv{\proc}}(\mv{k})$ gives
	$$
	(\mv{S}_{\mv{\proc}}(\mv{k}))_{ij} = \frac{\sum_{k=1}^p \BMi_{ik}\BMi_{jk}}{(2\pi)^d}\frac1{(\kappa_i^2 + \|\mv{k}\|)^{\alpha_i/2}(\kappa_j^2 + \|\mv{k}\|)^{\alpha_j/2}}.
	$$
	It is well-known that \citep{lindgren10}
	$$
	\mathcal{F}^{-1}\left(\frac1{(2\pi)^d}\frac{1}{ (\kappa^2 + \|\mv{k}\|)^{\alpha}}\right)(\mv{h}) = \frac{\Gamma(\nu)}{(4\pi)^{d/2}\Gamma(\alpha)\kappa^{2\nu}}\materncorr{\mv{h}}{\kappa_i}{\nu_i}
	$$
	which together with the expression for $(\mv{S}_{\mv{\proc}}(\mv{k}))_{ii}$ completes the proof. 
\end{proof}

\begin{proof}[Proof of Proposition \ref{thm2}]
	%\subsection{Proof of Proposition \ref{thm2}}
	By the representation of the multivariate Mat\'ern-SPDE in Remark \ref{cor2}, we have that the covariance function of $\mv{\proc}$ depends on $\mv{\BM}$ only through the expression $\mv{\BMi}\mv{\BMi}^{\trsp} = (\mv{\BM}^{\trsp}\mv{\BM})^{-1}$. It is therefore clear that $\mv{\BM}$ and $\hat{\mv{\BM}}$ will generate the same covariance structure if and only if $\mv{\BM}^{\trsp}\mv{\BM} = \hat{\mv{\BM}}^{\trsp}\hat{\mv{\BM}}$. 
	
	If we assume $\mv{\BM} = \mv{Q}\hat{\mv{\BM}}$, then 
	$
	\mv{\BM}^{\trsp}\mv{\BM} = \hat{\mv{\BM}}^{\trsp}\mv{Q}^{\trsp}\mv{Q}\hat{\mv{\BM}} = \hat{\mv{\BM}}^{\trsp}\hat{\mv{\BM}},
	$ 
	since $\mv{Q}$ is orthogonal, and the models therefore have the same covariance structure. 
	Conversely, assume that $\mv{\BM}$ and $\hat{\mv{\BM}}$ generate the same covariance structure. We then have that 
	$\mv{\BM}^{\trsp}\mv{\BM} = \hat{\mv{\BM}}^{\trsp}\hat{\mv{\BM}}$. Since $\hat{\mv{\BM}}$ is invertible, we can define $\mv{Q} = \mv{\BM}\hat{\mv{\BM}}^{-1}$ which is orthogonal, since $\mv{Q}^{\trsp}\mv{Q} = (\mv{\BM}\hat{\mv{\BM}}^{-1})^{\trsp}\mv{\BM}\hat{\mv{\BM}}^{-1} = \mv{I}$, and satisfies $\mv{Q}\hat{\mv{\BM}} = \mv{\BM}\hat{\mv{\BM}}^{-1}\hat{\mv{\BM}} = \mv{\BM}$. 
	
	Finally, for any multivariate Mat\'ern-SPDE, the matrix  $\mv{\BM}^{\trsp}\mv{\BM} $ is by definition symmetric and positive definite. We can therefore define a Mat\'ern-SPDE model with triangular dependence matrix $\hat{\mv{\BM}} = \chol(\mv{\BM}^{\trsp}\mv{\BM})$. Because of the properties of the Cholesky factor, $\hat{\mv{\BM}}$ is the unique upper-triangular matrix with positive diagonal elements satisfying $\hat{\mv{\BM}}^{\trsp}\hat{\mv{\BM}} = \mv{\BM}^{\trsp}\mv{\BM}$. 
\end{proof}

\begin{proof}[Proof of Proposition \ref{thm3}]
	We only have to show that  $\Cov(\proc_i(\mv{s}),\proc_j(\mv{t}))= 0$ since the variables are dependent by construction. Since $\mv{\rho} = 0,$ $\proc(\mv{s})$ is the solution to
	$
	\mv{Q}(\mv{\theta})\diag(c_1\mathcal{L}_1,\cdots, c_p\mathcal{L}_p)\mv{\proc}(\mv{s}) = \mv{\mathcal{M}}_{3},
	$
	or equivalently
	$
	\diag(c_1\mathcal{L}_1,\cdots, c_p\mathcal{L}_p)\mv{\proc}(\mv{s}) = \mv{\mathcal{M}}_{Q},
	$
	where
	$$
	\mv{\mathcal{M}}_{Q} = \sum_{k=1}^{\infty}g(e_k)^{\frac{1}{2}}\mathbb{I}(\mv{s} \geq \mv{s}_k) \mv{Q}(\mv{\theta})^{-1}\mv{Z}_k = \sum_{k=1}^{\infty}g(e_k)^{\frac{1}{2}}\mathbb{I}(\mv{s} \geq \mv{s}_k) \mv{Z}_{Q,k}.
	$$
	Since $\mv{Q}(\mv{\theta})^{-1}\mv{Q}(\mv{\theta})^{-\trsp} = \mv{I}$ it follows that $\mv{Z}_{Q,k} \sim N(\mv{0},\mv{I})$.
	From \citep{bolin11}  we have 
	$
	\proc_i(\mv{s}) = \int G_i(\mv{s},\mv{v})  \mv{\mathcal{M}}_{Q,i}(d\mv{v}),
	$ 
	for $i=1,\ldots,p$. Here $G_i(\mv{s},\mv{v})$ is the Green function of $c_i \mathcal{L}_i$, and $\mv{\mathcal{M}}_{Q,i}(\mv{s})$ is the $i$th value of the vector $\mv{\mathcal{M}}_{Q}(\mv{s})$. Since the elements in the vector $\mv{\mathcal{M}}_{Q}(\mv{s})$ are uncorrelated it follows that the elements of $\mv{\proc}(\mv{s}) = [\proc_1(\mv{s}), \ldots,  \proc_d(\mv{s})]^{\trsp}$ are uncorrelated.
\end{proof}

The proof of the Theorem \ref{thm:krig} builds on the following lemma, which shows that the posterior distribution of $v$ contracts to a point.
\begin{lemma}
	\label{lem:contraction}	
	Let Assumption \ref{ass1} hold and assume that $\pi(v)$ has mean one, is bounded and absolutely continuous with respect to the Lebesgue measure. Then $\pi(v|X_{1:n})  \overset{p}{\to}\delta_{K_0}(v)$ as $n \rightarrow\infty$.
\end{lemma}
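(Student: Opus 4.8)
The plan is to compute the posterior of $v$ explicitly and recognise it as an (essentially) inverse-gamma density that concentrates, via a Laplace-type argument, at the point $K_0$ identified in Assumption~\ref{ass1}.

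First I would write down the posterior. In the mean-zero type-G$_1$ model the field is $\proc(\mv{s})=\sqrt{v}\,\proc_G(\mv{s})$ for a single scalar $v$ and a Gaussian field $\proc_G$ with the prescribed covariance, so conditionally on $v$ we have $\mv{x}_{1:n}\sim\pN(\mv{0},v\mv{C}_n)$ (the normalisation $\pE(v)=1$ is exactly what makes the marginal covariance equal $\mv{C}_n$). Hence, with $S_n:=\mv{x}_{1:n}^{T}\mv{C}_n^{-1}\mv{x}_{1:n}$,
\[
\pi(v\mid\mv{x}_{1:n}) \;\propto\; v^{-n/2}\exp\!\Big(-\tfrac{S_n}{2v}\Big)\,\pi(v),
\]
an inverse-gamma kernel tilted by the bounded prior. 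Writing $\hat K_n:=S_n/n$, Assumption~\ref{ass1}, equation~\eqref{eq:mean_est}, gives $\hat K_n\overset{p}{\to}K_0$, and the likelihood factor becomes $\exp(-\tfrac n2\,\phi(v;\hat K_n))$ with $\phi(v;K):=\log v + K/v$.

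The analytic fact driving everything is that, for fixed $K>0$, $\phi(\cdot;K)$ is strictly decreasing on $(0,K)$, strictly increasing on $(K,\infty)$, and coercive, with a unique minimum $\phi(K;K)=\log K+1$ at $v=K$. To prove $\int_{|v-K_0|>\epsilon}\pi(v\mid\mv{x}_{1:n})\,dv\overset{p}{\to}0$ for every $\epsilon>0$ I would bound the numerator and denominator of this ratio on the event $\{|\hat K_n-K_0|<\epsilon/2\}$, whose probability tends to $1$. For the numerator, use $\pi\le B$ and factor $\exp(-\tfrac n2\phi)=\exp(-\tfrac32\phi)\exp(-\tfrac{n-3}{2}\phi)$: the first factor is integrable over $(0,\infty)$, since $\int_0^\infty v^{-3/2}e^{-3\hat K_n/(2v)}\,dv=\sqrt{2\pi/(3\hat K_n)}$ is bounded for $\hat K_n$ near $K_0$, while on $\{|v-K_0|>\epsilon\}$ unimodality and coercivity give a uniform gap $\phi(v;\hat K_n)-\phi(\hat K_n;\hat K_n)\ge c(\epsilon)>0$, so the second factor is at most $\exp(-\tfrac{n-3}{2}(\phi(\hat K_n;\hat K_n)+c(\epsilon)))$. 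For the denominator, restrict to a fixed neighbourhood $(K_0-\eta,K_0+\eta)$ with $\eta<\epsilon$, on which a second-order Taylor bound gives $\phi(v;\hat K_n)\le \phi(\hat K_n;\hat K_n)+C\eta^2$, yielding the lower bound $\exp(-\tfrac n2(\phi(\hat K_n;\hat K_n)+C\eta^2))\,p_\eta$ with $p_\eta:=\int_{K_0-\eta}^{K_0+\eta}\pi>0$. Choosing $\eta$ so small that $C\eta^2<c(\epsilon)/2$, the $\phi(\hat K_n;\hat K_n)$ terms cancel and the ratio is at most a bounded constant times $\exp(-\tfrac n4 c(\epsilon)+O(1))$, which tends to $0$; off the event the ratio is trivially $\le 1$ on a set of vanishing probability. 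This delivers the claimed contraction to $\delta_{K_0}$.

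I expect two obstacles. The first is uniformity in $n$ of the tail control: the integration range is all of $(0,\infty)$ and the exponent carries the random, $n$-growing quantity $S_n$; peeling off the integrable factor $\exp(-\tfrac32\phi)$ and working throughout on the high-probability event $\{|\hat K_n-K_0|<\epsilon/2\}$ (so that $c(\epsilon)$ is deterministic and $\phi(\hat K_n;\hat K_n)=\log\hat K_n+1$ stays bounded) is what makes the estimates uniform. The second, more conceptual, point is that the denominator bound needs $p_\eta>0$, i.e.\ that $\pi$ charges every neighbourhood of $K_0$; this is where the support of the mixing prior must contain $K_0$, which holds in the cases of interest (the IG prior of the NIG model has support $(0,\infty)$, and in the well-specified case $K_0=v_{\mathrm{true}}>0$). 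The degenerate case $K_0=0$ would require a separate and simpler argument, which I would treat apart.
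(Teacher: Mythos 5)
Your proof is correct in substance and takes a genuinely different route from the paper's. Both arguments start from the same posterior kernel $v^{-n/2}\exp(-S_n/(2v))\,\pi(v)$ and exploit the same unimodal exponent $\phi(v;K)=\log v+K/v$ with minimum at $v=K$ (the paper's $f$), but you run a classical fixed-neighbourhood Laplace argument: a uniform gap $c(\epsilon)$ off $\{|v-K_0|\le\epsilon\}$ for the numerator, an integrable peeled-off factor $\exp(-\tfrac32\phi)$ to tame the tail $v\to\infty$, and a prior-mass lower bound $p_\eta>0$ for the denominator. The paper instead proves concentration on shrinking balls $B_{2,n}$ of radius $n^{-1/2+2\epsilon}$, lower-bounds the denominator via the mean value theorem through a point value $\pi(v_d\mid\mv{x}_{1:n})$, splits the tail at $v=n$, and Taylor-expands at the moving boundary points; this buys an explicit near-parametric contraction rate, which is stronger than the lemma as stated (and than what the proof of Theorem~\ref{thm:krig} actually uses), at the price of a more delicate computation. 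Your route is more elementary and makes explicit a hypothesis that both arguments secretly need, namely that the prior puts positive mass on every neighbourhood of $K_0$: the paper's step ``boundedness and absolute continuity imply $\pi(v)/\pi(v_d)$ is bounded from above'' likewise requires $\pi$ to be positive near $k_0$, which the lemma's stated hypotheses do not literally guarantee, so flagging this (and the degenerate case $K_0=0$, which both proofs implicitly exclude) is appropriate rather than a defect. One small repair is needed: you claim the Taylor bound $\phi(v;\hat K_n)\le\phi(\hat K_n;\hat K_n)+C\eta^2$ on $(K_0-\eta,K_0+\eta)$ while conditioning only on $\{|\hat K_n-K_0|<\epsilon/2\}$, so the deviation $|v-\hat K_n|$ can be of order $\epsilon$ rather than $\eta$; simply condition on $\{|\hat K_n-K_0|<\eta\}$ instead (any fixed-radius event still has probability tending to one), after which the cancellation of the $\phi(\hat K_n;\hat K_n)$ terms and your final bound $\exp(-\tfrac n4 c(\epsilon)+O(1))\to 0$ go through as you describe.
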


\begin{proof}
	In the following, $C$ is a generic positive constant that changes from line to line.
	Let $k_0 := k_0(x)$ be the realisation of $K_0$ determined by the realisation of $x$ which generates the data. Let  $B_{k,n}  = (b^l_{k,n},b^u_{k,n})=(k_0-   n^{-1/2+k\epsilon}, k_0 +   n^{-1/2+k\epsilon})$ where $0<\epsilon<1/8$. To prove the lemma it suffices to show that
	$$
	\frac{\int_{B_{2,n}^c} \pi(v|\mv{x}_{1:n}) dv }{\int_{B_{2,n}} \pi(v|\mv{x}_{1:n}) dv} \rightarrow 0\quad \mbox{ as } n\rightarrow \infty.
	$$
	By the mean value theorem we have
	$$
	\frac{\int_{B_{2,n}^c} \pi(v|\mv{x}_{1:n}) dv }{\int_{B_{2,n}} \pi(v|\mv{x}_{1:n}) dv} \leq \frac{\int_{B_{2,n}^c} \pi(v|\mv{x}_{1:n}) dv }{\int_{B_{1,n}} \pi(v|\mv{x}_{1:n}) dv} 
	\leq \frac{\sqrt{n}\int_{B_{2,n}^c} \pi(v|\mv{x}_{1:n}) dv }{ \pi(v_d|\mv{x}_{1:n})},
	$$
	for some $v_d\in B_{1,n}$. By boundedness and absolute continuity of $\pi(v)$  (which implies that $\frac{\pi(v)}{\pi(v_d)}$ is bounded from above) 
	$$
	\frac{\sqrt{n}\int_{B_{2,n}^c} \pi(v|\mv{x}_{1:n}) dv }{ \pi(v_d|\mv{x}_{1:n})} \leq C \sqrt{n} \int_{B^c_{2,n}} e^{-\frac{n}{2}\left(\frac{c_n}{v} + \log(v) -  \frac{c_n}{v_d} - \log(v_d) \right)} dv,
	$$
	where $c_n=\frac1{n}\mv{x}^{\trsp}_{1:n} \mv{C}_n^{-1} \mv{x}_{1:n}$. We will now show that the right-hand side goes to zero if we condition on the event $A_n = \{c_n\in B_{0.5,n}\}$. We first bound the integral as
	\begin{align*}
	\sqrt{n} \int_{B^c_{2,n}} &e^{-\frac{n}{2}\left(\frac{c_n}{v} + \log(v) -  \frac{c_n}{v_d} - \log(v_d) \right)} dv \\ 
	&\leq \sqrt{n} \int_{B^c_{2,n} \cap [0,n]} e^{-\frac{n}{2}\left(\frac{c_n}{v} + \log(v) -  \frac{c_n}{v_d} - \log(v_d) \right)} dv 
	+ \sqrt{n} \int_{n}^\infty e^{-\frac{n}{2}\left(\frac{c_n}{v} + \log(v) -  \frac{c_n}{v_d} - \log(v_d) \right)}
	\\ 
	&:= (I) + (II).
	\end{align*} 
	To bound (II), let $k$ be a constant such that $\log(k) > \frac{c_n}{v_d} + \log(v_d)$ for all $n$ (this is possible since we are in $A_n$) then
	$$
	(II)  \leq  \sqrt{n} \int_{n}^\infty e^{-\frac{n}{2}\left(\log(v) -  \log(k) \right)} dv =\frac{n^{-\frac{n-1}{2}}}{n/2-1} k^{n/2} \rightarrow 0 \quad \mbox{as $n\rightarrow \infty$.}
	$$
	To bound (I), note that $f(v) = \frac{c_n}{v} + \log(v) $ takes its minimum at $c_n$, and is increasing above and below $c_n$.  Thus, for $i = \argmax_{j\in \{l,u\}} f(b^j_{2,n})$ we have $f(v) \geq f(b^i_{2,n})> f(b^i_{1,n}) > f(v_d) $, for all $v\in B_{2,n}^c$, and therefore 
	$$
	(I) \leq C n^{3/2} e^{-\frac{n}{2}\left(f(b^i_{2,n})-f(b^i_{1,n})\right)}.
	$$
	Assume for simplicity that $i=u$ (the calculation for $i=l$ follows from similar arguments). We split the exponent into two parts $f(b^u_{2,n})-f(b^u_{1,n}) = (\frac{c_n}{b^u_{2,n}} -  \frac{c_n}{b^u_{1,n}}) + (\log(b^u_{2,n}) - \log(b^u_{1,n}))$.
	For the first part we have
	\begin{align*}
	\frac{c_n}{b^u_{2,n}} -  \frac{c_n}{b^u_{1,n}} &= \frac{c_n n^{-1/2}(n^{\epsilon}-n^{2\epsilon})}{(k_0 + n^{-1/2+\epsilon})(k_0 + n^{-1/2+2\epsilon})} \\
	&\geq  \frac{c_n n^{-1/2}(n^{\epsilon}-n^{2\epsilon})}{(k_0 + n^{-1/2+2\epsilon})^2} \geq \frac{c_n n^{-1/2}(n^{\epsilon}-n^{2\epsilon})}{k_0^2 + n^{-1+4\epsilon}},
	\end{align*}
	while for the second part
	\begin{align*}
	\log(b^u_{2,n}) - \log(b^u_{1,n}) &= \log(1 + \frac{n^{-1/2+2\epsilon}}{k_0}) - \log(1 + \frac{n^{-1/2+\epsilon}}{k_0}) \\
	&= \frac{n^{-1/2+2\epsilon}}{k_0} + \mathcal{O}(n^{-1+4\epsilon}) -  \frac{n^{-1/2+\epsilon}}{k_0} +  \frac{n^{-1+2\epsilon}}{k_0^2} + \mathcal{O}(n^{-3/2+3\epsilon})\\
	&= \frac{n^{-1/2}}{k_0}(n^{2\epsilon} - n^{\epsilon}) +  \frac{n^{-1+2\epsilon}}{k_0^2} + \mathcal{O}(n^{-1+4\epsilon}) .
	\end{align*}
	
	Hence
	$C \sqrt{n}  e^{-\frac{n}{2}\left(f(b^i_{2,n})-f(b^i_{1,n})\right)} \leq  C \sqrt{n}  e^{-\frac{n^{4\epsilon}}{4k_0}} \to 0$ as $n\rightarrow \infty$. 
	Finally, by Assumption~\ref{ass1} and the Chebyshev inequality, $P(A_n) \rightarrow 1$, which completes the proof. 
\end{proof}

\begin{proof}[Proof of Theorem \ref{thm:krig}]
	Let $k_0 := k_0(x)$ and $k_1 := k_1(x)$ be the realisations of $K_0$ and $K_1$ respectively, determined by the realisation of $x$ which generates the data.
	Take $\epsilon>0$ and define $A_n=\{\mv{c}_{0,1:n} \mv{C}_n^{-1} \mv{x}_{1:n}\in [k_1 - \frac{\epsilon}{\sqrt{n}},k_1 + \frac{\epsilon}{\sqrt{n}}]\}$.  Conditioning on the event $A_n$ and using the triangle inequality yields
	\begin{align*}
	|	\pi_{G_1,x_0}(\cdot|\mv{x}_{1:n}) - N(\cdot; k_1, k_0 k_2)| \leq&  \left|\pi_{G_1,x_0}(\cdot|\mv{x}_{1:n}) - \int N(\cdot; k_1, vk_2)\pi(v) dv  \right| \\
	&+\left|\int N(\cdot; k_1, vk_2)\pi(v) dv-  N(\cdot; k_1, k_0 k_2) \right|.
	\end{align*}
	By equation \eqref{eq:kriging_mean} and the continuous mapping theorem, the first term on the right-hand side converges to zero since we have conditioned on the event $A_n$, and the second term converges to zero by Lemma \ref{lem:contraction}. Under  Assumption \ref{ass1} and using Chebyshev inequality it follows that $P(A_n) \rightarrow 1$, which completes the proof. 
\end{proof}

\begin{proof}[Proof of Lemma \ref{eq:Lemma}]
	To prove the result we need to verify that Assumption~\ref{ass1} is satisfied.
	We first establish some properties about $x(\mv{s})$ which we will use to verify the assumptions. Note that the distribution of $x(\mv{s}_1),\ldots,x(\mv{s}_n)|v$ is $N(0,v\mv{C}_n)$ where $\mv{C}_n$ is a positive definite matrix for all $n$. Let $\mv{C}^{1/2}_n$ denote the Cholesky factor of $\mv{C}_n$, and let $\mv{z}_{1:n} =\frac{1}{\sqrt{v}}\mv{C}_n^{-1/2} \mv{x}_{1:n}$ where by assumption $z_i$ are i.i.d $N(0,1)$. 
	
	To establish \eqref{eq:mean_est} and \eqref{eq:var_est} note that
	$$\mv{x}_{1:n}^{\trsp}\mv{C}_n^{-1}\mv{x}_{1:n}^{\trsp} = \mv{z}_{1:n}^{\trsp} v^{1/2} \mv{C}_n^{1/2} \mv{C}_n^{-1} v^{1/2}\mv{C}_n^{1/2} \mv{z}_{1:n}^{\trsp} =  v \sum_{i=1}^n z_{i}^2.
	$$
	Hence, by the law of large numbers, \eqref{eq:mean_est} and \eqref{eq:var_est} are satisfied with $K_0=v$.
	
	For \eqref{eq:kriging_mean} and \eqref{eq:kriging_var}, note that $\sigma_n := c_0 - \mv{c}^{T}_{0,1:n}\mv{C}_n^{-1}\mv{c}_{0,1:n}$ is the variance of the Kriging predictor (the variance of the best linear predictor), thus $\{\sigma_n\}$ is a decreasing sequence in $[0, C_2]$. Therefore $\{\sigma_n\}$  must converge to a point, implying equation \eqref{eq:kriging_var}. Finally we need to establish that
	$$
	\mv{c}_{0,1:n} \mv{C}_n^{-1} \mv{x}_{1:n} = \sqrt{v} \mv{c}_{0,1:n}\mv{C}_n^{-1/2}  \mv{z}_{1:n} \overset{p}{\to} K_1.
	$$
	Since $\mv{C}^{1/2}_n$ is the Cholesky factor of $\mv{C}_n$, we have that $(\mv{C}_n^{-1/2})_{1:n-1,1:n-1}=\mv{C}_{n-1}^{-1/2}$ \cite[see for instance][Section 2.2.4]{pourahmadi2011covariance}. Thus, the limit $\tilde{c} = \lim_{n\rightarrow \infty} \mv{c}_{0,1:n}\mv{C}_n^{-1/2}$ exists. By  \eqref{eq:kriging_var} it follows that $\tilde{c}\in l^2$ and hence that $\sum_{i=n}^\infty \tilde{c}^2_{i}  \rightarrow 0$ as $n \to  \infty$. Thus  $\pV[\mv{c}_{0,1:n} \mv{C}_n^{-1} \mv{x}_{1:n} - K_1]= \pV[\sqrt{v}\sum_{i=n}^\infty \tilde{c}_i z_{i}] = \pE[v]\sum_{i=n}^\infty \tilde{c}^2_{i}\to 0$ as $n\rightarrow \infty$. 
\end{proof}

\begin{proof}[Proof of Proposition \ref{thm:charf}]
	%\subsection{Proof of Proposition \ref{thm:charf}}\label{proof:charf}
	To derive the CF, $\phi_{\proc(\mv{s})}(\mv{u})$, of $\mv{x}(s)$, note that Remark~\ref{cor2} shows that the SPDE in \eqref{eq:general}, for $p>1$, can be formulated as 
	$$
	\mv{\proc}(\mv{s})
	= \sum_{k=1}^p
	\diag(\mathcal{L}^{-1}_1\mathcal{N}_k(\mv{s}),\ldots, \mathcal{L}^{-1}_p \mathcal{N}_k(\mv{s})
	)\begin{bmatrix}
	\BMi_{1k}, \\
	\vdots\\
	\BMi_{pk}
	\end{bmatrix}  = \sum_{k=1}^p \begin{bmatrix} R_{1k}x^1_k(\mv{s}) \\
	\vdots \\
	R_{pk}x^p_k(\mv{s}) 
	\end{bmatrix},
	$$
	where $\proc_k^r(\mv{s})=\mathcal{L}^{-1}_r \mathcal{N}_k(\mv{s})$.
	The right-hand side is a sum of independent random variables, and thus $\phi_{\proc(\mv{s})}(\mv{u}) = \prod_{i=1}^p\phi_{k}(\mv{u})$ where  $\phi_{k}(\mv{u})$ is the CF of $\begin{bmatrix} R_{1k}x^1_k(\mv{s}) &
	\ldots & 
	R_{pk}x^p_k(\mv{s}) 
	\end{bmatrix}^{\trsp}$. 
	In order to derive $\phi_{k}(\mv{u})$ we first derive the CF for $\proc_k^r(\mv{s})$. From \citep{bolin11} it follows that  $\proc_k^r(\mv{s}) = \int G_r(\mv{s},\mv{t}) \mathcal{N}_k(d\mv{t})$, where the kernel is given by the Green's function of the operator $\mathcal{L}_r$:
	$$
	G_r(\mv{s},\mv{t})  = \frac{\Gamma\left(\frac{\alpha_r-d}{2}\right)}{(4\pi)^{d/2} \Gamma(\frac{\alpha_r}{2}) \kappa_r^{\alpha_1 - d}} \materncorr{\|\mv{s}-\mv{t}\|}{\kappa_r}{\frac{\alpha_r-d}{2}}.
	$$
	Using that the CF of the univariate NIG noise $\dot{\mathcal{N}}_k(A)$ is 
	\begin{align}
	\label{eq:charfuncM}
	\phi_{\dot{\mathcal{N}}_k(A)}(u) = \exp \left(  i\gamma m(A) u_k + m(A) \sqrt{\nignu_k} \left(\sqrt{\nignu_k}- \sqrt{\nignu_k + u^2 - 2i\mu_k u}  \right)\right),
	\end{align}
	and Proposition 2.6 in \citep{rajput1989spectral} it follows that the CF of  $\proc_k^r(\mv{s})$  is
	\begin{align*}
	\phi_{\proc_k^r(\mv{s})}(u) = \exp \left(  - i\gamma_k u \int  G_r(\mv{s},\mv{t}) d\mv{t} +   \sqrt{\nignu_k} \int  (\nignu_k - \sqrt{\nignu_k  + \mu_k^2- (\mu_k + iG_r(\mv{s},\mv{t})u)^2}   d\mv{t}\right).
	\end{align*}
	
	To complete the proof we need derive  $\phi_{k}(\mv{u})$. Note that the random variable $Y(\mv{s}) = \sum_{r=1}^p u_{r}R_{rk}\proc^r_k(\mv{s})$ has CF $\phi_{Y(\mv{s})}(h) = \phi_{k}(\mv{u}h)$ and since  $Y(\mv{s}) =  \int \sum_{r=1}^p R_{rk}G_r(\mv{s},\mv{t})u_r \mathcal{N}_k(d\mv{t})$ it follows that
	\begin{align*}
	\phi_{k}(\mv{u}) = \phi_{Y(\mv{s})}(1) = \exp \left( \right.& - i\gamma_k  \int \sum_{r=1}^p R_{rk}G_r(\mv{s},\mv{t})u_r d\mv{t}  +   \\
	&\left.  \sqrt{\nignu_k} \int  \nignu_k - \sqrt{\nignu_k  + \mu_k^2- (\mu_k + i \sum_{r=1}^pR_{rk}G_r(\mv{s},\mv{t})u_r )^2}   d\mv{t}\right). 
	\end{align*}
\end{proof}

\begin{proof}[Proof of Proposition \ref{crpsthm}]
	%\subsection{Proof of Proposition \ref{crpsthm}}\label{crpsproof}
	If $Z \sim \pN(\mu,\sigma^2)$, then $|Z|$ has a folded normal distribution with mean $M(\mu,\sigma^2)$ defined in  \eqref{eq:M}. Let $X_1$ and $X_2$ be two independent variance mixture variables with CDF $F$ and let $V_1$ and $V_2$ be their corresponding mixing variables. Introduce $\tilde{X}_1 = X_1 -y$ and $\tilde{X}_2 = X_1 - X_2$ and note that there exist variables $\mu_1,\mu_2,\sigma_1^2,$ and $\sigma_2^2$, depending on $V_1$ and $V_2$, such that $\tilde{X}_1 |V_1 \sim \pN(\mu_1 - y, \sigma_1^2)$ and $\tilde{X}_2 | V_1,V_2 \sim \pN(\mu_1 - \mu_2, \sigma_1^2 + \sigma_2^2)$.
	By the law of total expectation
	\begin{align}\label{eq:crps22}
	\mbox{CRPS}(F,y) &= \pE_{V_1}(\pE(|\tilde{X}_1 - y|\mid V_1)) - \frac1{2}\pE_{V_1}(\pE_{V_2}(\pE(|\tilde{X}_1 - \tilde{X}_2| \mid V_1, V_2))) \notag\\
	&= \pE_{V_1}(M(\mu_1 - y,\sigma_1^2)) - \frac1{2}\pE_{V_1}(\pE_{V_2}(M(\mu_1 - \mu_2,\sigma_1^2+\sigma_2^2))).
	\end{align}
	We have that $\pE(\mbox{CRPS}_N^{RB}(F,y)) =  \mbox{CRPS}(F,y)$ since $\mbox{CRPS}_N^{RB}(F,y)$ is a standard MC estimator of  \eqref{eq:crps22}. Furthermore,  $\mbox{CRPS}_N^{RB}(F,y) = \pE(\mbox{CRPS}_N(F,y)|\mv{V}_1,\mv{V}_2)$ where $\mv{V}_j = (V_j^{(1)},\ldots V_j^{(N)})$ for $j=1,2$. Thus, $\mbox{CRPS}_N^{RB}(F,y)$ is a Rao-Blackwell estimator and by the Law of total variation  $\pV(\mbox{CRPS}_N^{RB}(F,y)) \leq \pV(\mbox{CRPS}_N(F,y))$. 
\end{proof}
	\end{appendix}
	
\section*{Acknowledgment}
This work has been supported by the Swedish Research Council under grant
No. 2016-04187 and the Knut and Alice Wallenberg Foundation (KAW 20012.0067). The authors thank Holger Rootz\'en, the editors, and the anonymous reviewers for valuable comments on the manuscript. We also thank Mikael Kuusela for helping with the Argo data.

	\bibliographystyle{chicago}
	\bibliography{multfield}
	
\end{document}